\newtheorem{theorem}{Theorem}
\newtheorem{definition}{Definition}
\newtheorem{lemma}{Lemma}
\newtheorem{remark}{Remark}
\newtheorem{proposition}{Proposition}
\DeclareMathOperator{\Tr}{tr}
\DeclareMathOperator{\E}{\mathbb{E}}
\DeclareMathOperator{\Var}{\mathrm{Var}}
\newcommand{\norm}[1]{\left\lVert#1\right\rVert}
\newcommand{\vct}[1]{\mathbf{#1}}
\algrenewcommand\alglinenumber[1]{\sf\scriptsize\color{blue}{#1}}
\algrenewcommand\algorithmicrequire{\textbf{Input:}}
\algrenewcommand\algorithmicensure{\textbf{Output:}}
\begin{document}

\title{Predicting Many Properties of a Quantum System from Very Few Measurements}

\date{\today}
\author{Hsin-Yuan Huang}
\email{hsinyuan@caltech.edu}
\affiliation{Institute for Quantum Information and Matter, Caltech, Pasadena, CA, USA}
\affiliation{Department of Computing and Mathematical Sciences, Caltech, Pasadena, CA, USA}
\author{Richard Kueng}
\affiliation{Institute for Quantum Information and Matter, Caltech, Pasadena, CA, USA}
\affiliation{Department of Computing and Mathematical Sciences, Caltech, Pasadena, CA, USA}
\affiliation{Institute for Integrated Circuits, Johannes Kepler University Linz, Austria}
\author{John Preskill}
\affiliation{Institute for Quantum Information and Matter, Caltech, Pasadena, CA, USA}
\affiliation{Department of Computing and Mathematical Sciences, Caltech, Pasadena, CA, USA}
\affiliation{Walter Burke Institute for Theoretical Physics, Caltech, Pasadena, CA, USA}

\begin{abstract}
Predicting properties of complex, large-scale quantum systems is essential for developing quantum technologies.
We present an efficient method for constructing an approximate classical description of a quantum state using very few measurements of the state. This description, called a \emph{classical shadow}, can be used to predict many different properties: order $\log M$ measurements suffice to accurately predict $M$ different functions of the state with high success probability. The number of measurements is independent of the system size, and saturates information-theoretic lower bounds. Moreover, target properties to predict can be selected after the measurements are completed.
We support our theoretical findings with extensive numerical experiments. We apply classical shadows to predict quantum fidelities, entanglement entropies, two-point correlation functions, expectation values of local observables, and the energy variance of many-body local Hamiltonians.
The numerical results highlight the advantages of classical shadows relative to previously known methods.
\end{abstract}

\maketitle

\setcounter{secnumdepth}{0}


Making predictions based on empirical observations is a central topic in statistical learning theory and is at the heart of many scientific disciplines, including quantum physics.
There, predictive tasks, like estimating target fidelities, verifying entanglement, and measuring correlations, are essential for building, calibrating and controlling quantum systems.
Recent advances in the size of quantum platforms \cite{preskill2018nisq} have pushed traditional prediction techniques --- like quantum state tomography --- to the limit of their capabilities. This is mainly due to a curse of dimensionality:  the number of parameters needed to describe a quantum system scales exponentially with the number of its constituents. Moreover, these parameters cannot be accessed directly, but must be estimated by measuring the system. An informative quantum mechanical measurement is both destructive (wave-function collapse) and only yields probabilistic outcomes (Born's rule). Hence, many identically prepared samples are required to estimate accurately even a single parameter of the underlying quantum state. Furthermore, all of these measurement outcomes must be processed and stored in memory for subsequent prediction of relevant features.
In summary, reconstructing a full description of a quantum system with $n$ constituents (e.g. qubits) necessitates a number of measurement repetitions exponential in $n$, as well as an exponential amount of classical memory and computing power.

Several approaches have been proposed to overcome this fundamental scaling problem. These include matrix product state (MPS) tomography \cite{cramer2010efficient} and neural network tomography \cite{torlai2018neural, carrasquilla2019reconstructing}. Both only require a polynomial number of samples, provided that the underlying state has suitable properties. However, for general quantum systems, these techniques still require an exponential number of samples.
We refer to the related work section (Supplementary Section \ref{sec:related-work}) for details.

Pioneering a conceptually very different line of research, Aaronson \cite{aaronson2018shadow}
pointed out that demanding full classical descriptions of quantum systems may be excessive for many concrete tasks. Instead it is often sufficient to accurately predict certain properties of the quantum system. In quantum mechanics, interesting properties are often \emph{linear} functions of the underlying density matrix $\rho$, such as the expectation values $\{o_i\}$ of a set of observables $\{O_i\}$:
\begin{align}
o_i (\rho) =& \mathrm{trace}(O_i \rho) \quad 1 \leq i \leq M.
\label{eq:linear_predictions}
\end{align}
The fidelity with a pure target state, entanglement witnesses, and the probability distribution governing the possible outcomes of a measurement are all examples that fit this framework. A
\emph{nonlinear} function of $\rho$ such as entanglement entropy, may also be of interest.
Aaronson coined the term
\cite{aaronson2018shadow,aaronson2019gentle} \emph{shadow tomography}\footnote{According to Ref.~\cite{aaronson2018shadow} it was actually S.T.~Flammia who originally suggested the name shadow tomography.} for the task of predicting properties without necessarily fully characterizing the quantum state, and he showed that a polynomial number of state copies already suffice to predict an exponential number of target functions. While very efficient in terms of samples, Aaronson's procedure is very demanding in terms of quantum hardware --- a concrete implementation of the proposed protocol requires exponentially long quantum circuits that act collectively on all the copies of the unknown state stored in a quantum memory.

\begin{figure*}[t]
    \centering
    \includegraphics[width=0.85\textwidth]{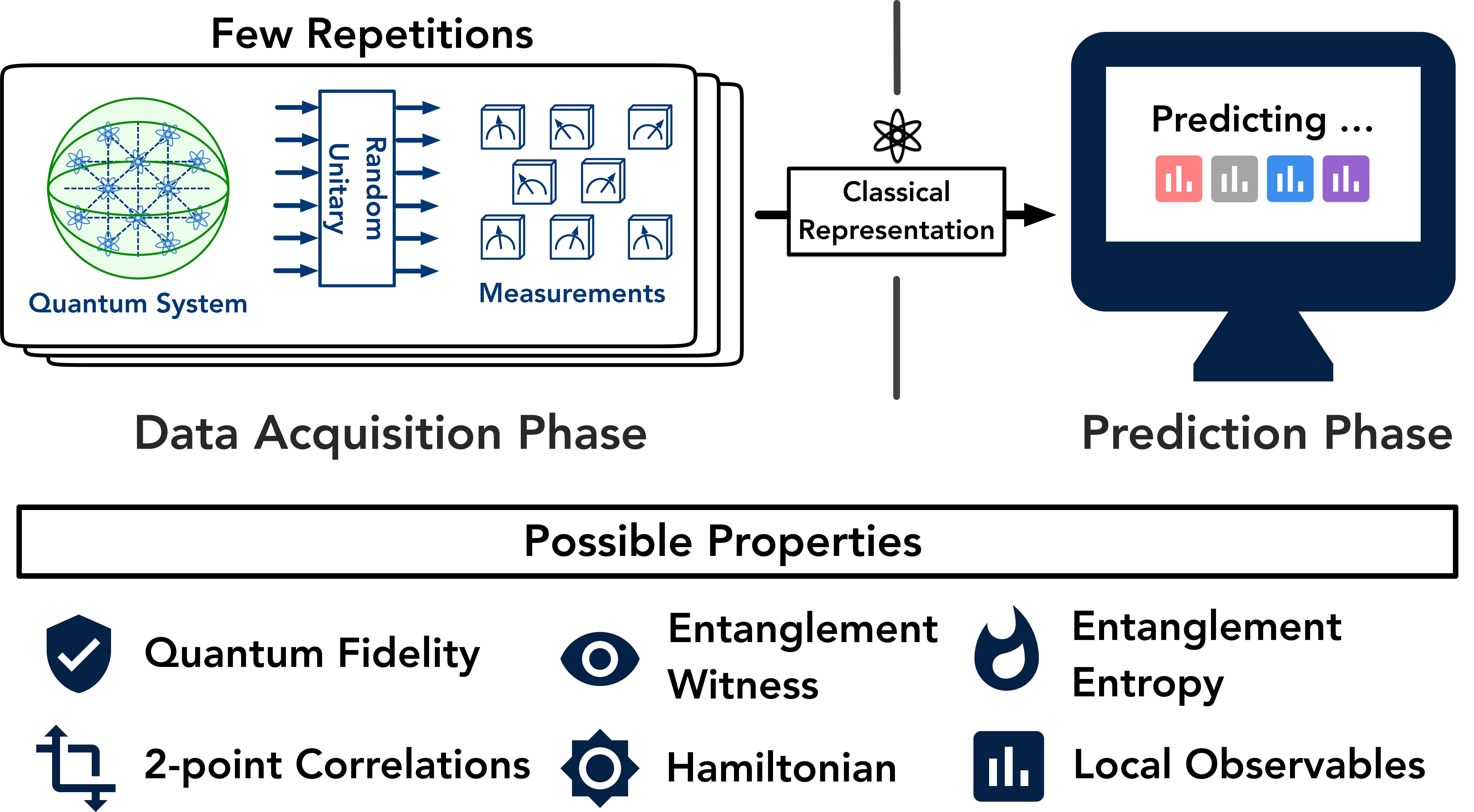}
    \caption{An illustration for constructing a classical representation, the \emph{classical shadow}, of a quantum system from randomized measurements. In the data acquisition phase, we perform a random unitary evolution and measurements on independent copies of an $n$-qubit system to obtain a classical representation of the quantum system --- the \emph{classical shadow}. Such classical shadows facilitate accurate prediction of a large number of different properties using a simple median-of-means protocol.}
    \label{fig:setup}
\end{figure*}

In this work, we combine the mindset of shadow tomography \cite{aaronson2018shadow} (predict target functions, not the full state)  with recent insights from quantum state tomography \cite{guta2018fast} (rigorous statistical convergence guarantees)  and the  stabilizer formalism \cite{gottesman1997stabilizer} (efficient implementation).
The result is a highly efficient protocol that learns a minimal classical sketch $S_\rho$ -- the \emph{classical shadow} -- of an unknown quantum state $\rho$ that can be used to predict arbitrary linear function values \eqref{eq:linear_predictions} by a simple median-of-means protocol.
A classical shadow is created by repeatedly performing a simple procedure:
Apply a unitary transformation $\rho \mapsto U \rho U^\dagger$, and then measure all the qubits in the computational basis. The number of times this procedure is repeated is called the \textit{size} of the classical shadow.
The transformation $U$ is randomly selected from an ensemble of unitaries, and different ensembles lead to different versions of the procedure that have characteristic strengths and weaknesses.
In a practical scheme, each ensemble unitary should be realizable as an efficient quantum circuit.
We consider random $n$-qubit Clifford circuits and tensor products of random single-qubit Clifford circuits as important special cases.
These two procedures turn out to complement each other nicely.
We refer to Figure~\ref{fig:setup} for a visualization and a list of important properties that can be predicted efficiently.

Our main theoretical contribution equips this procedure with rigorous performance guarantees. Classical shadows with size of order $\log(M)$ suffice to predict $ M$ target functions in Eq.~\eqref{eq:linear_predictions} simultaneously.
Most importantly, the actual system size (number of qubits) does not enter directly. Instead, the number of measurement repetitions $N$
is determined by a (squared) norm
$\|O_i \|_{\mathrm{shadow}}^2$.
This norm depends on the target functions and the particular measurement procedure used to produce the classical shadow. For example, random $n$-qubit Clifford circuits lead to the Hilbert-Schmidt norm.
On the other hand, random single-qubit Clifford circuits produce a norm that scales exponentially in the locality of target functions, but is independent of system size.
The resulting prediction technique is applicable to current laboratory experiments and facilitates the efficient prediction of few-body properties, such as two-point correlation functions, entanglement entropy of small subsystems, and expectation values of local observables.

In some cases, this scaling may seem unfavorable.
However, we rigorously prove that this is not a flaw of the method, but an unavoidable limitation rooted in quantum information theory. By relating the prediction task to a communication task~\cite{fano1961information_theory},
we establish fundamental lower bounds
highlighting that classical shadows are (asymptotically) optimal.

We support our theoretical findings by conducting numerical simulations for predicting various physically relevant properties over a wide range of system sizes. These include quantum fidelity, two-point correlation functions, entanglement entropy, and local observables.
We confirm that prediction via classical shadows scales favorably and improves on powerful existing techniques --- such as
machine learning ---
in a variety of well-motivated test cases.
An open source release for predicting many properties from very few measurements is available at \url{https://github.com/momohuang/predicting-quantum-properties}.

\section{Procedure}

Throughout this work we restrict attention to $n$-qubit systems and $\rho$ is a fixed, but unknown, quantum state in $d=2^n$ dimensions.
To extract meaningful information, we repeatedly perform a simple measurement procedure: apply a random unitary to rotate the state ($\rho \mapsto U \rho U^\dagger$) and perform a computational-basis measurement. The unitary $U$ is selected randomly from a fixed ensemble.
Upon receiving the $n$-bit measurement outcome $|\hat{b}\rangle \in \left\{0,1 \right\}^n$, we store an (efficient) classical description of $U^\dagger |\hat{b} \rangle \! \langle \hat{b}| U$ in classical memory.
It is instructive to view the average (over both the choice of unitary and the outcome distribution) mapping from $\rho$ to its classical snapshot
$U^\dagger |\hat{b}\rangle\!\langle\hat{b}| U$
as a quantum channel:
\begin{equation}
    \E \left[ U^\dagger |\hat{b}\rangle\!\langle\hat{b}| U \right]
    = \mathcal{M}(\rho) \implies  \rho = \E \left[ \mathcal{M}^{-1} \left(U^\dagger
    |\hat{b} \rangle \! \langle \hat{b}| U \right) \right]. \label{eq:snapshot}
\end{equation}
This quantum channel $\mathcal{M}$ depends on the ensemble of (random) unitary transformations.
Although the inverted channel $\mathcal{M}^{-1}$ is not physical (it is not completely positive),
 we can still apply $\mathcal{M}^{-1}$ to the (classically stored) measurement outcome $U^\dagger |\hat{b}\rangle\!\langle\hat{b}| U$
 in a completely classical post-processing step.\footnote{$\mathcal{M}$ is invertible if the ensemble of unitary transformations defines a tomographically complete set of measurements. See Supplementary Section \ref{sec:generalshadow}.}
 In doing so, we produce a single classical snapshot $\hat{\rho} = \mathcal{M}^{-1} \left( U^\dagger |\hat{b} \rangle \! \langle \hat{b}| U\right)$ of the unknown state $
 \rho$ from a single measurement.
 By construction, this snapshot exactly reproduces the underlying state in expectation (over both unitaries and measurement outcomes): $\E [\hat{\rho}]=\rho$.
Repeating this procedure $N$ times results in an array of $N$ independent, classical snapshots of $\rho$:
\begin{equation}
\mathsf{S}(\rho;N) = \left\{ \hat{\rho}_1 = \mathcal{M}^{-1}\left( U_1^\dagger \ket{\hat{b}_1}\! \bra{\hat{b}_1} U_1 \right), \ldots, \hat{\rho}_N = \mathcal{M}^{-1}\left( U_N^\dagger \ket{\hat{b}_N} \! \bra{\hat{b}_N} U_N \right) \right\}. \label{eq:classical_shadow}
\end{equation}
We call this array the \emph{classical shadow} of $\rho$.
Classical shadows of sufficient size $N$ are expressive enough to predict many properties of the unknown quantum state efficiently.
To avoid outlier corruption, we split the classical shadow up into equally-sized chunks and construct several, independent sample mean estimators.
Subsequently, we predict linear function values \eqref{eq:linear_predictions}
via \emph{median of means estimation} \cite{jerrum1986medianmeans,nemirovski1983medianmeans}. This procedure is summarized in Algorithm~\ref{alg:median_means}.
For many physically relevant properties $O_i$ and measurement channels $\mathcal{M}$, Algorithm~\ref{alg:median_means} can be carried out very efficiently without explicitly constructing the large matrix $\hat{\rho}_i$.

Median of means prediction with classical shadows can be defined for any
distribution of random unitary transformations.
Two prominent examples are:
(i) random $n$-qubit Clifford circuits; and (ii) tensor products of random single-qubit Clifford circuits.
Example (i)
results in a clean and powerful theory,
but also practical drawbacks, because $n^2 /\log (n)$ entangling gates are needed to sample from $n$-qubit Clifford unitaries.
The corresponding inverted quantum channel is $\mathcal{M}^{-1}_n(X) = (2^n+1)X - \mathbb{I}$.
Example (ii) is equivalent to measuring each qubit independently in a random Pauli basis. Such measurements can be routinely carried out in many experimental platforms.
The corresponding inverted quantum channel is $\mathcal{M}^{-1}_P = \bigotimes_{i=1}^n \mathcal{M}^{-1}_1$.
We refer to examples (i) / (ii) as random Clifford / Pauli measurements, respectively. In both cases, the resulting classical shadow can be stored efficiently in a classical memory using the stabilizer formalism.

\begin{algorithm*}[t]
{\small
\begin{algorithmic}[1]
\caption{{\small \textit{Median of means prediction}  based on a classical shadow $\mathsf{S}(\rho,N)$.}}
\label{alg:median_means}

\Statex
\Function{LinearPredictions}{$O_1,\ldots,O_M,\mathsf{S}(\rho;N),K$}

\State Import $\mathsf{S}(\rho;N) = \left[ \hat{\rho}_1, \ldots, \hat{\rho}_N \right]$
\Comment Load classical shadow

\State Split the shadow into $K$ equally-sized parts and set
\Comment Construct $K$ estimators of $\rho$
\begin{align*}
\hat{\rho}_{(k)} =& \frac{1}{\lfloor N/K \rfloor} \sum_{i = (k-1) \lfloor N/K \rfloor+1}^{k\lfloor N/K \rfloor} \hat{\rho}_i
\end{align*}

\For{$i=1$ to $M$}

\State Output
$
\hat{o}_i (N,K) = \mathrm{median} \left\{ \mathrm{tr} \left(O_i \hat{\rho}_{(1)} \right),\ldots, \mathrm{tr} \left( O_i \hat{\rho}_{(K)} \right) \right\}.
$
\Comment Median of means estimation
\EndFor
\EndFunction
\end{algorithmic}
}
\end{algorithm*}

\section{Rigorous performance guarantees}

\begin{theorem}[informal version] \label{thm:main}
Classical shadows of size $N$ suffice to predict $M$ \emph{arbitrary} linear target functions $\mathrm{tr}(O_1 \rho),\ldots,\mathrm{tr}(O_M \rho)$ up to additive error $\epsilon$ given that $N \geq \text{(order) } \log(M) \max_i \norm{O_i}^2_{\rm{shadow}} / \epsilon^2$.
The definition of the norm $\norm{O_i}_{\mathrm{shadow}}$ depends on the ensemble of unitary transformations used to create the classical shadow.
\end{theorem}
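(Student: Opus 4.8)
The plan is to build the estimate of each $o_i=\Tr(O_i\rho)$ from the unbiased single-shot estimator $\hat o_i=\Tr(O_i\hat\rho)$ with $\hat\rho=\mathcal{M}^{-1}(U^\dagger\ket{\hat b}\!\bra{\hat b}U)$, to bound its variance by a state-independent quantity that we will \emph{identify} with $\norm{O_i}^2_{\mathrm{shadow}}$, and then to turn that variance bound into a simultaneous high-probability error bound for all $M$ properties via median-of-means estimation followed by a union bound.

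\emph{Unbiasedness and single-shot variance.} By Eq.~\eqref{eq:snapshot}, $\E[\hat\rho]=\rho$ whenever the measurement ensemble is tomographically complete, so $\E[\hat o_i]=o_i$, and since $\mathcal{M}^{-1}$ and $\Tr$ are linear, each $\hat o_i$ is a cheaply computable linear functional of the stored snapshot (no large matrix need ever be formed). The identity component of $O_i$ is reproduced exactly, so writing $O_i^0=O_i-\tfrac{\Tr(O_i)}{2^n}\mathbb{I}$ we have $\Var[\hat o_i]=\E[\Tr(O_i^0\hat\rho)^2]-\Tr(O_i^0\rho)^2\le\E[\Tr(O_i^0\hat\rho)^2]$. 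A convenient uniform bound on the last quantity is
\[
\norm{O}^2_{\mathrm{shadow}}:=\max_{\sigma\ \text{state}}\ \E_{U}\sum_{b\in\{0,1\}^n}\bra{b}U\sigma U^\dagger\ket{b}\,\bra{b}U\,\mathcal{M}^{-1}(O^0)\,U^\dagger\ket{b}^2 ,
\]
and the content of the two advertised special cases is to evaluate it: for global random Clifford circuits one invokes the $3$-design property of the Clifford group together with the explicit inverse channel $\mathcal{M}_n^{-1}$ to obtain $\norm{O}^2_{\mathrm{shadow}}\le 3\,\Tr\big((O^0)^2\big)$, the squared Hilbert--Schmidt norm; for tensor products of single-qubit Cliffords the inverse channel factorizes, $\mathcal{M}_P^{-1}=\bigotimes_i\mathcal{M}_1^{-1}$, and a per-qubit second-moment computation yields a bound growing like $4^{k}$ in the locality $k$ of $O$ and \emph{independent of $n$}. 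I expect this moment evaluation --- in particular the Clifford third-moment (Weingarten) calculation --- to be the main technical obstacle; the remaining steps are routine.

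\emph{Median of means.} Partition the $N$ snapshots into $K$ batches of size $N_0=\lfloor N/K\rfloor$ and let $\hat o_i^{(k)}$ denote the $k$-th batch mean. Then $\Var[\hat o_i^{(k)}]\le\norm{O_i}^2_{\mathrm{shadow}}/N_0$, so Chebyshev's inequality gives $\Pr[\,|\hat o_i^{(k)}-o_i|\ge\epsilon\,]\le\norm{O_i}^2_{\mathrm{shadow}}/(N_0\epsilon^2)\le 1/3$ once $N_0\ge 3\,\norm{O_i}^2_{\mathrm{shadow}}/\epsilon^2$. The median $\hat o_i(N,K)$ deviates from $o_i$ by more than $\epsilon$ only if at least $K/2$ of the $K$ independent batch estimates are bad; bounding this binomial tail (Hoeffding) gives a failure probability of at most $e^{-K/18}$.

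\emph{Union bound and final count.} Choosing $K=\Theta(\log(M/\delta))$ makes the per-property failure probability at most $\delta/M$, and a union bound over $i=1,\dots,M$ then yields $|\hat o_i(N,K)-o_i|\le\epsilon$ for all $i$ simultaneously with probability at least $1-\delta$. The total sample size is $N=K\,N_0=\Theta\big(\log(M/\delta)\,\max_i\norm{O_i}^2_{\mathrm{shadow}}/\epsilon^2\big)$, i.e.\ the claimed $\log(M)$ scaling at constant confidence, with the system size $n$ entering only through $\norm{O_i}_{\mathrm{shadow}}$ --- the Hilbert--Schmidt norm for random Clifford measurements, and an $n$-independent, locality-dependent quantity for random Pauli measurements.
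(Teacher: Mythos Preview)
Your proposal is correct and follows essentially the same route as the paper: unbiasedness of the snapshot, a state-independent variance bound defining $\|\cdot\|_{\mathrm{shadow}}$ (the paper's Lemma~\ref{lem:variance}), median-of-means with Chebyshev on each batch and an exponential tail on the median, and a union bound over the $M$ targets. The only cosmetic differences are constants in the median-of-means tail (the paper quotes $2e^{-K/2}$ with $N_0=34\sigma^2/\epsilon^2$ rather than your $e^{-K/18}$ with $N_0=3\sigma^2/\epsilon^2$) and that the paper defines $\|O\|_{\mathrm{shadow}}$ for arbitrary $O$ and then applies it to the traceless part, whereas you bake the traceless projection into the definition.
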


We refer to Section~\ref{sec:generalshadow} in the Supplementary Information for background, a detailed statement and proofs.
Theorem~\ref{thm:main} is most powerful when the linear functions have a bounded norm that is independent of system size.
In this case, classical shadows allow for predicting a large number of properties from only a logarithmic number of quantum measurements.

The norm $\norm{O_i}_{\mathrm{shadow}}$ in Theorem~\ref{thm:main} plays an important role in defining the space of linear functions that can be predicted efficiently.
For random Clifford measurements,
$\|O \|_{\mathrm{shadow}}^2$ is closely related to the Hilbert-Schmidt norm $\mathrm{tr}(O^2)$.
As a result, a large collection of (global) observables with a bounded Hilbert-Schmidt norm can be predicted efficiently.
For random Pauli measurements, the norm scales exponentially in the locality of the observable, not the actual number of qubits.
For an observable $O_i$ that acts non-trivially on (at most) $k$ qubits, $\norm{O_i}_{\mathrm{shadow}}^2 \leq 4^k \norm{O_i}_\infty^2$, where $\norm{\cdot}_\infty$ denotes the operator norm\footnote{This scaling can be further improved to $3^k$ if $O_i$ is a tensor product of $k$ single-qubit observables.}.
This guarantees the accurate prediction of many local observables from only a much smaller number of measurements.

\section{Illustrative example applications}

\paragraph{Quantum fidelity estimation.}

Suppose we wish to certify that an experimental device prepares a desired $n$-qubit state. Typically, this target state $| \psi \rangle \! \langle \psi|$ is pure and highly structured, e.g.\ a a GHZ state \cite{greenberger1989GHZ} for quantum communication protocols, or a
toric code ground state \cite{kitaev2002toric} for fault-tolerant quantum computation.
Theorem~\ref{thm:main} asserts that a classical shadow (Clifford measurements) of dimension-independent size suffices to accurately predict the fidelity of \emph{any} state in the lab with \emph{any} pure target state. This
improves on the best existing result on direct fidelity estimation \cite{flammia2011direct} which requires $O(2^n / \epsilon^4)$ samples in the worst case. Moreover, a classical shadow of polynomial size allows for estimating an exponential number of (pure) target fidelities all at once.

\paragraph{Entanglement verification.}

Fidelities with pure target states can also serve as (bipartite) \emph{entanglement witnesses} \cite{guehne2009entanglement}.
For every (bipartite) entangled state $\rho$, there exists a constant $\alpha$ and an observable $O = | \psi \rangle \! \langle\psi|$ such that $\Tr(O \rho) > \alpha \geq \Tr(O \rho_s)$, for all (bipartite) separable states $\rho_s$.
Establishing $\Tr(O \rho) > \alpha$ verifies the existence of entanglement in the state $\rho$.
Any $O = | \psi \rangle \! \langle\psi|$ that satisfies the above condition is known as an entanglement witness for the state~$\rho$.
Classical shadows (Clifford measurements) of logarithmic size allow for checking a large number of potential entanglement witnesses simultaneously.

\paragraph{Predicting expectation values of local observables.}

Many near-term applications of quantum devices rely on repeatedly estimating a large number of local observables. For example, low-energy eigenstates of a many-body Hamiltonian may be prepared and studied using a variational method, in which the Hamiltonian, a sum of local terms, is measured many times. Classical shadows constructed from a logarithmic number of random Pauli measurements can efficiently estimate polynomially many such local observables. Because only single-qubit Pauli measurements suffice, this measurement procedure is highly efficient. Potential applications include quantum chemistry \cite{kandala2017hardware} and lattice gauge theory \cite{kokail2019self}.

\paragraph{Predicting expectation values of global observables (non-example).}

Classical shadows are not without limitations. In our examples, the size of classical shadows must either scale with $\mathrm{tr}(O_i^2)$ (Clifford measurements) or must scale exponentially in the locality of $O_i$ (Pauli measurements). Both quantities can simultaneously become exponentially large for nonlocal observables with large Hilbert-Schmidt norm.
A concrete example is the Pauli expectation value of a spin chain: $\langle P_{i_1} \otimes \cdots \otimes P_{i_n} \rangle_\rho = \mathrm{tr} \left( O_1 \rho \right)$, where $\mathrm{tr}(O_1^2)  =2^n$ and $k=n$ (non-local observable).
In this case, classical shadows of exponential size
may be required to accurately predict a single expectation value. In contrast, a direct spin measurement achieves the same accuracy with only of order $1/\epsilon^2$ copies of the state $\rho$.

\section{Matching information-theoretic lower bounds}

The non-example above raises an important question: does the scaling of the required number of measurements with Hilbert-Schmidt norm or with the locality of observables arise from a fundamental limitation, or is it  merely an artifact of prediction with classical shadows?
A rigorous analysis reveals that this scaling is no mere artifact; rather it stems from information-theoretic reasons.

\begin{theorem}[informal version] \label{thm:main2}
Any procedure based on single-copy measurements, that can predict \emph{any} $M$ linear functions $\mathrm{tr}(O_i\rho) $ up to additive error $\epsilon$, requires at least (order) $\log(M) \max_i \|O_i\|_{\mathrm{shadow}}^2 /\epsilon^2$ measurements.
\end{theorem}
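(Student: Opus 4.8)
\section{Proof proposal for Theorem~\ref{thm:main2}}

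The plan is to turn an $\epsilon$-accurate predictor into a decoder for a communication problem and then invoke Fano's inequality, as anticipated by the remark relating prediction to a communication task. Concretely, I would fix whichever incarnation of the shadow norm is relevant (Hilbert--Schmidt for Clifford measurements, the locality-dependent norm for Pauli measurements), set $B := \max_i \norm{O_i}_{\mathrm{shadow}}^2$, and construct a family of ``message states'' $\{\rho_x\}_{x \in \mathcal{X}}$ together with observables $O_1,\dots,O_M$ of shadow norm at most $\sqrt{B}$, designed so that knowing every $\Tr(O_i \rho_x)$ to additive accuracy $\epsilon$ pins down $x$ (say, with probability at least $9/10$ under the uniform prior). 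Any procedure that performs $N$ single-copy measurements and then outputs $\epsilon$-accurate estimates of all $M$ functions with success probability at least $2/3$ then yields, on input $N$ copies of $\rho_x$, a reconstruction $\hat{x}$ with $\Pr[\hat{x}=x] \ge 1/2$. Fano's inequality forces the mutual information between $x$ and the classical measurement record $Y=(y_1,\dots,y_N)$ to obey $I(X:Y) \ge \Omega(\log|\mathcal{X}|)$.

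The two factors in the bound are built into $\{\rho_x\}$ separately. The factor $B/\epsilon^2$ is at heart a two-point (Le~Cam) obstruction: one chooses $O$ with $\norm{O}_\infty$ of order $\sqrt{B}$ and traceless part $O_0$ with $\Tr(O_0^2)$ of order $B$, and two states $\rho_0,\rho_1$ differing only by reweighting a single probability by $\Theta(\epsilon/\sqrt{B})$, so that $\Tr(O\rho_0)$ and $\Tr(O\rho_1)$ straddle the decision threshold at distance $2\epsilon$ while the states themselves are $O(\epsilon/\sqrt{B})$-close. Additivity of (quantum) relative entropy, or equivalently the per-copy $\chi^2$-divergence estimate $I(X:y) \le \E_x \sum_y \Tr(F_y \Delta_x)^2 / \Tr(F_y \bar{\rho})$ with $\bar\rho=\E_x\rho_x$, $\Delta_x=\rho_x-\bar\rho$, applied to each single-copy POVM $\{F_y\}$, then shows that each measurement leaks only $O(\epsilon^2/B)$ bits about the hidden bit, whence $N = \Omega(B/\epsilon^2)$. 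The factor $\log M$ is a packing obstruction: with $M$ observables one can separate a set $\mathcal{X}$ of size polynomial in $M$, and Fano supplies the matching $\log M$.

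The crux is to fuse these constructions so that the informations \emph{multiply}: the hard instance must simultaneously conceal a one-of-$M$ choice \emph{and} an $\epsilon$-scale parameter, arranged so that (i) $\epsilon$-accurate prediction of all $M$ functions reveals $\Theta\!\left(\log M \cdot B/\epsilon^2\right)$ bits about $x$, yet (ii) each single-copy measurement --- even adaptively chosen --- leaks only $O(1)$ bits, so that by the chain rule and data processing $I(X:Y) \le N \sup_{\{F_y\}} I(X:y) \le O(N)$, the supremum ranging over POVMs acting on a single copy. Making (i) and (ii) hold at once is the main obstacle: pushing the states close enough that a single copy is nearly uninformative tends to shrink the separability needed for (i), and keeping the $M$-way and the $\epsilon$-scale informations from trading off against each other requires the perturbations realizing distinct messages to be (near-)orthogonal and aligned with the high-shadow-norm directions of the $O_i$. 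Once a suitable ensemble is in hand, the rest --- verifying the $\chi^2$/Holevo per-measurement estimate, checking that the $O_i$ separate $\mathcal{X}$, and assembling Fano's inequality --- is routine. I would carry this out separately for the Clifford (Hilbert--Schmidt) norm, using global observables of large operator norm, and for the Pauli norm, using $k$-local observables whose hidden parameter lives in the $4^k$-dimensional space of $k$-qubit operators.
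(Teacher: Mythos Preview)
Your high-level architecture (encode a message in a state, let the predictor decode it, apply Fano plus a per-measurement information bound via the chain rule) matches the paper. But the way you propose to balance the two factors is off, and the key device that makes the paper's argument go through is absent from your plan.

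You want the hidden message space $\mathcal{X}$ to carry $\Theta(\log M \cdot B/\epsilon^2)$ bits while each single-copy measurement leaks only $O(1)$ bits. That cannot work: the decoder's entire output consists of $M$ real numbers known to additive accuracy $\epsilon$, so it can carry at most $O(M\log(1/\epsilon))$ bits of information about $x$, not $\Theta(\log M \cdot B/\epsilon^2)$. There is no way to pack that many distinguishable states into what $M$ $\epsilon$-accurate predictions can separate. The paper instead keeps $|\mathcal{X}|=M$ (so Fano gives only $\Omega(\log M)$) and pushes the per-measurement leakage down to $O(\epsilon^2/B)$.

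The missing idea is how to get that small per-measurement leakage against \emph{arbitrary} fixed POVMs. You take the supremum over all single-copy POVMs, but for any fixed packing $\{\rho_x\}$ an adversarially chosen POVM can extract $\Theta(\log M)$ bits in one shot, so your bound collapses. The paper breaks this by exploiting the hypothesis that the predictor must succeed for \emph{any} observables, not just the ones you built the packing around. A third party (``Loki'') applies a Haar-random unitary $U$ (global for the Clifford case, a tensor product of single-qubit unitaries for the Pauli case) to all copies before Bob measures, and reveals $U$ only afterward. Bob can still decode by predicting the rotated observables $UO_iU^\dagger$, which have the same shadow norm; but now the per-measurement mutual information is averaged over $U$, and a direct calculation (second moment of $\langle v|U\rho_xU^\dagger|v\rangle$) shows this average is $O(\epsilon^2/B)$ for \emph{every} fixed POVM. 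This is precisely where the ``predict any $M$ functions'' hypothesis is spent, and your proposal never spends it. Without this randomization your $\chi^2$ estimate will either pick up a dimension factor or fail to hold uniformly over POVMs; with it, the rest of your outline (Fano, chain rule, $\chi^2$ bound on $I(X:y\mid U)$) goes through essentially as you wrote.
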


\noindent Here, $\norm{O_i}_{\mathrm{shadow}}^2$ could be taken as the Hilbert-Schmidt norm $\Tr(O_i^2)$ or as a function scaling exponentially in the locality of $O_i$.
The proof results from embedding the abstract prediction procedure into a communication protocol. Quantum information theory imposes fundamental restrictions on any quantum communication protocol and allows us to deduce stringent lower bounds.
We refer to Supplementary Section~\ref{sec:proofthm2}~and~\ref{sec:proofthm2P} for details and proofs.

The two main technical results complement each other nicely.
Theorem~\ref{thm:main} equips classical shadows with a constructive performance guarantee: an order of $\log(M) \max_i \| O_i \|_{\mathrm{shadow}}^2/\epsilon^2$ single-copy measurements suffice to accurately predict an \emph{arbitrary} collection of $M$ target functions. Theorem~\ref{thm:main2} highlights that this number of measurements is unavoidable in general.

\section{Predicting nonlinear functions}

The classical shadow $S(\rho; N) = \left\{ \hat\rho_1, \ldots, \hat\rho_N \right\}$ of the unknown quantum state $\rho$ may also be used to predict non-linear functions $f(\rho)$. We illustrate this with a quadratic function $f(\rho) = \Tr(O \rho \otimes \rho)$, where $O$ acts on two copies of the state.
Because $\hat\rho_i$ is equal to the quantum state $\rho$ in expectation, one could predict $\Tr(O \rho \otimes \rho)$ using two independent snapshots $\hat\rho_i, \hat\rho_j, i \neq j$. Because of independence, $\Tr(O \hat\rho_i \otimes \hat\rho_j)$ correctly predicts the quadratic function in expectation:
\begin{equation}
\E \Tr(O \hat\rho_i \otimes \hat\rho_j) = \Tr(O \E \hat\rho_i \otimes \E \hat\rho_j) = \Tr(O \rho \otimes \rho).
\end{equation}
To reduce the prediction error, we use $N$ independent snapshots and symmetrize over all possible pairs: $\frac{1}{N(N-1)} \sum_{i \neq j} \Tr(O \hat\rho_i \otimes \hat\rho_j)$.
We then repeat this procedure several times and form their median to further reduce the likelihood of outlier corruption (similar to median of means).
Rigorous performance guarantees are given in Supplementary Section \ref{sec:nonlinearrig}.
This approach readily generalizes to higher order polynomials using U-statistics \cite{hoeffding1992class}.

One particularly interesting nonlinear function is the second-order R\'enyi entanglement entropy: $-\log(\Tr(\rho_A^2))$, where $A$ is a subsystem of the $n$-qubit quantum system.
We can rewrite the argument in the $\log$ as $\mathrm{tr}(\rho_A^2)=\mathrm{tr} \left( S_A \rho \otimes \rho\right)$ --- where $S_A$ is the local swap operator of two copies of the subsystem $A$ --- and use classical shadows to obtain very accurate predictions.
The required number of measurements scales exponentially in the size of the subsystem $A$, but is independent of total system size.
Probing this entanglement entropy is a useful task and a highly efficient specialized approach has been proposed in \cite{brydges2019probing}.
We compare this \emph{Brydges \textit{et al.} method} to classical shadows in the numerical experiments.

For nonlinear functions, unlike linear ones, we have have not derived an information-theoretic lower bound on the number of measurements needed, though it may be possible to do so by generalizing our methods.

\section{Numerical experiments}

One of the key features of prediction with classical shadows is scalability.
The data acquisition phase is designed to be tractable for state of the art platforms (Pauli measurements) and future quantum computers (Clifford measurements), respectively.
The resulting classical shadow can be stored efficiently in classical memory.
For may important features -- such as local observables or global features with efficient stabilizer decompositions -- scalability moreover extends to the computational cost associated with median of means prediction.

These design features allowed us to conduct numerical experiments for a wide range of problems and system sizes (up to 160 qubits). The computational bottleneck is not feature prediction with classical shadows, but generating synthetic data, i.e.\ classically generating target states and simulating quantum measurements.
Needless to say, this classical bottle-neck does not occur in actual experiments.
We then use this synthetic data to learn a classical representation of $\rho$ and use this representation to predict various interesting properties.

Machine learning based approaches \cite{carrasquilla2019reconstructing, torlai2018neural} are among the most promising alternative methods that have applications in this regime, where the
Hilbert space dimension is roughly comparable to the total number of silicon atoms on earth ($2^{160} \simeq 10^{48}$).
For example, a recent version of \emph{neural network quantum state tomography} (NNQST) is a generative model that is based on a deep neural network trained on independent quantum measurement outcomes (local SIC/tetrahedral POVMs \cite{renes2004SIC}).
In this section, we consider the task of learning a classical representation of an unknown quantum state, and using the representation to predict various properties, addressing the relative merit of classical shadows and alternative methods.

\subsection{Predicting quantum fidelities (Clifford measurements)}

\begin{figure}[t]
    \centering
    \includegraphics[width=1.0\textwidth]{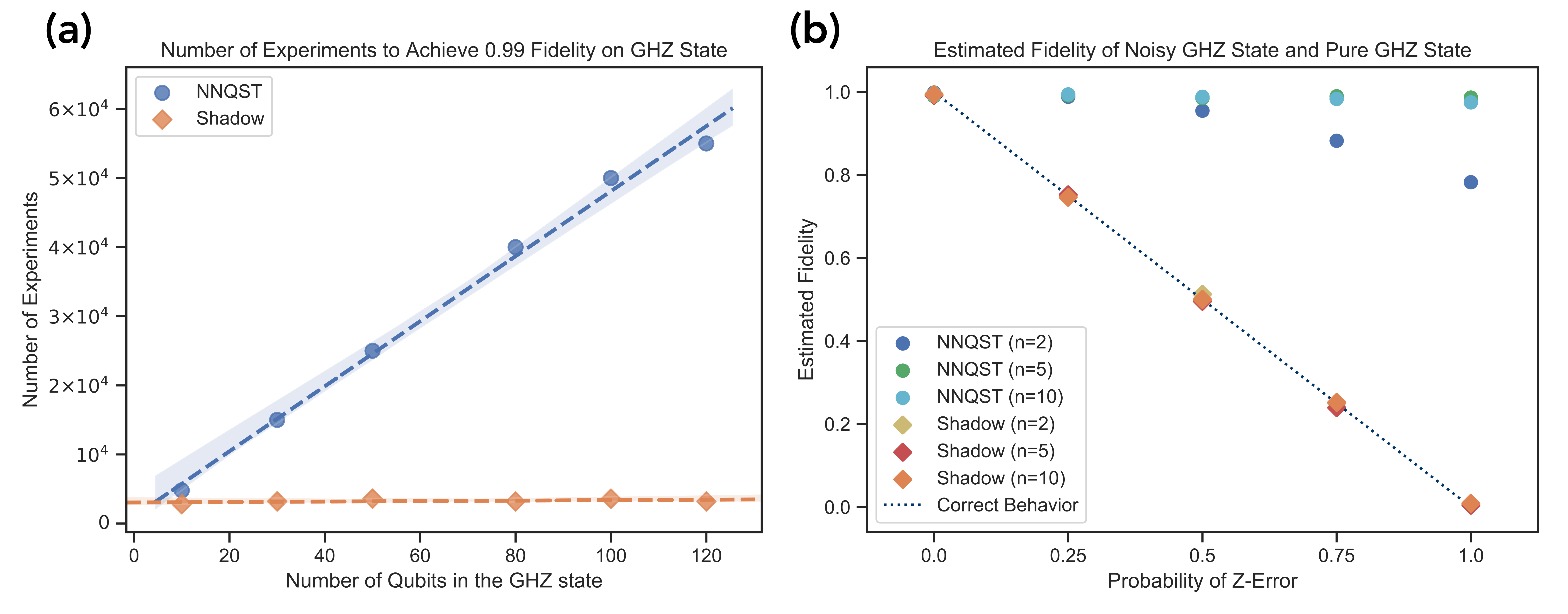}
    \caption{\emph{Predicting quantum fidelities using classical shadows (Clifford measurements) and NNQST.}\\
\textbf{(a)} \emph{(Left)}\textbf{:} Number of measurements required to identify an $n$-qubit GHZ state with 0.99 fidelity. The shaded regions are the standard deviation of the needed number of experiments over ten independent runs. \\
\textbf{(b)} \emph{(Right)}\textbf{:} Estimated fidelity between a perfect GHZ target state and a noisy preparation, where $Z$-errors can occur with probability $p \in [0,1]$, under $6\times 10^4$ experiments. The dotted line represents the true fidelity as a function of $p$. \\
NNQST can only estimate an upper bound on quantum fidelity efficiently, so we consider this upper bound for NNQST and use quantum fidelity for the classical shadow.
}
    \label{fig:tomoGHZ}
\end{figure}

Here we focus on classical shadows based on random Clifford measurements which are designed to predict observables with bounded Hilbert-Schmidt norm.
When the observables have efficient representations --- such as efficient stabilizer decompositions --- the computational cost for performing median of means prediction can also be efficient.\footnote{The runtime of Algorithm~\ref{alg:median_means} is dominated by the cost of computing quadratic functions $\langle \hat{b}|U O U^\dagger |\hat{b}\rangle$ in $2^n$ dimensions. If $O=|\psi \rangle \! \langle \psi|$ is a stabilizer state, the Gottesman-Knill theorem allows for evaluation in $\mathcal{O}(n^2)$-time.}
An important example is the quantum fidelity with a target state.
In \cite{carrasquilla2019reconstructing}, the viability of NNQST is demonstrated by considering GHZ states with a varying number of qubits $n$. Numerical experiments highlight that the number of measurement repetitions (size of the training data) to learn a neural network model of the GHZ state that achieves target fidelity of $0.99$ scales linearly in $n$.
We have also implemented NNQST for GHZ states and compared it to median of means prediction with classical shadows. The left-hand side of Figure~\ref{fig:tomoGHZ} confirms the linear scaling of NNQST and the assertion of Theorem~\ref{thm:main}: classical shadows of \emph{constant} size suffice to accurately estimate GHZ target fidelities, regardless of the actual system size. In addition, we have also tested the ability of both approaches to detect potential state preparation errors.
More precisely, we consider a scenario where the GHZ-source introduces a phase error with probability $p \in [0,1]$:
\begin{equation}
\rho_p = (1-p) |\psi_\mathrm{GHZ}^+(n)\rangle \! \langle \psi_\mathrm{GHZ}^+ (n)|+ p|\psi_\mathrm{GHZ}^-(n)\rangle \! \langle \psi_\mathrm{GHZ}^- (n)|, \quad | \psi_\mathrm{GHZ}^\pm (n) \rangle = \tfrac{1}{\sqrt{2}} \left( |0 \rangle^{\otimes n} \pm |1 \rangle^{\otimes n} \rangle \right).
\end{equation}
We learn a classical representation of the GHZ-source and subsequently predict the fidelity with the pure GHZ state.
The right hand side of Figure~\ref{fig:tomoGHZ} highlights that the classical shadow prediction accurately tracks the decrease in target fidelity as the error parameter $p$ increases. NNQST, in contrast, seems to consistently overestimate this target fidelity.
In the extreme case ($p=1$), the true underlying state is completely orthogonal to the target state, but NNQST nonetheless reports fidelities close to one.
This shortcoming arises because the POVM-based machine learning approach can only efficiently estimate an upper bound on the true quantum fidelity efficiently.
To estimate the actual fidelity, an exceedingly large number of measurements is needed.
Similar experiments can be found in Supplementary Section \ref{sec:morenum}, where we focus on toric code ground states and entanglement witnesses, respectively.

\subsection{Predicting  two-point correlation \& subsystem entanglement entropy (Pauli measurements)}

\begin{figure}[t]
    \centering
    \includegraphics[width=0.88\textwidth]{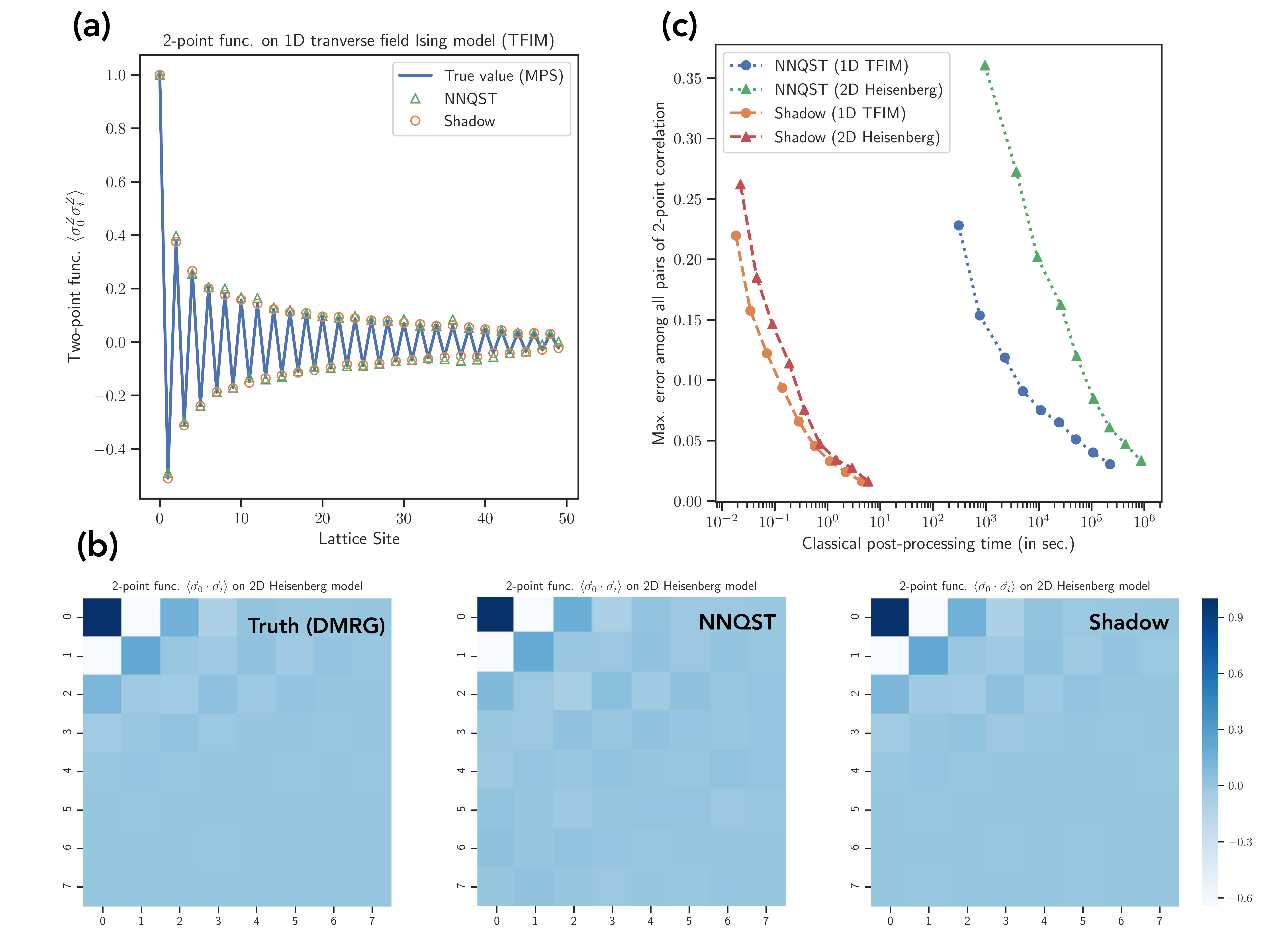}
    \caption{\emph{Predicting two-point correlation functions using classical shadows (Pauli measurements) and NNQST.} \\
\textbf{(a)} \emph{(Top Left)}\textbf{:} Predictions of two-point functions $\langle \sigma^Z_0 \sigma^Z_i \rangle$ for ground states of the one-dimensional critical anti-ferromagnetic transverse field Ising model with $50$ lattice sites. These are based on $2^9 \times 1000$ random Pauli measurements. \\
\textbf{(b)} \emph{(Bottom)}\textbf{:} Predictions of two-point functions $\langle \vec{\sigma}_0 \cdot \vec{\sigma}_i \rangle$ for the ground state of the two-dimensional anti-ferromagnetic Heisenberg model with $8 \times 8$ lattice sites. The predictions are based on $2^9 \times 1000$ random Pauli measurements.\\
\textbf{(c)} \emph{(Top Right)}\textbf{:} The classical processing time (CPU time in seconds) and the prediction error (the largest among all pairs of two-point correlations) over different number of measurements: $\{2^1, \ldots, 2^9\} \times 1000$. The quantum measurement scheme in classical shadows (Pauli) is the same as the POVM-based neural network tomography (NNQST) in \cite{carrasquilla2019reconstructing}. The only difference is the classical post-processing. As the number of measurements increases, the processing time increases, while the prediction error decreases.
}
    \label{fig:2point}
\end{figure}

\begin{figure}[t]
    \centering
    \includegraphics[width=0.97\textwidth]{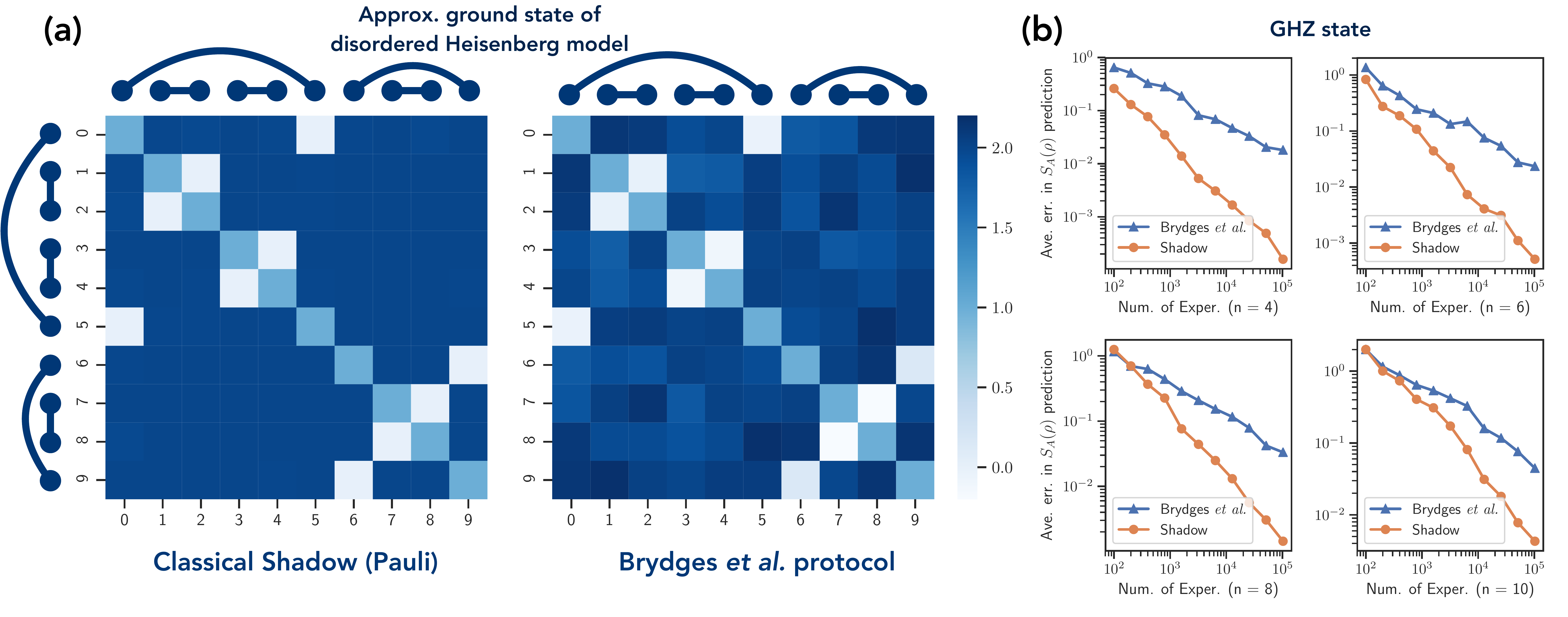}
    \caption{\emph{Predicting entanglement R\'enyi entropies using classical shadows (Pauli measurements) and the Brydges \textit{et al.} protocol.} \\
    \textbf{(a)} \emph{(Left)}\textbf{:} Prediction of second-order R\'enyi entanglement entropy for all subsystems of size at most two in the approximate ground state of a disordered Heisenberg spin chain with 10 sites and open boundary conditions.
    The classical shadow is constructed from $2500$ quantum measurements. The predicted values using the classical shadow visually match the true values with a maximum prediction error of $0.052$. The Brydges \textit{et al.} protocol
    \cite{brydges2019probing} results in a maximum prediction error of $0.24$. \\
    \textbf{(b)} \emph{(Right)}\textbf{:} Comparison of classical shadows and the Brydges \textit{et al.} protocol \cite{brydges2019probing} for estimating second-order R\'enyi entanglement entropy in GHZ states.
    We consider the entanglement entropy of the left-half subsystem with size $n / 2$.
}
    \label{fig:entropy}
\end{figure}

Classical shadows based on random Clifford measurements excel at predicting quantum fidelities.
However, random Clifford measurements can be challenging to implement in practice, because many entangling gates are needed to implement general Clifford circuits.
Next we consider classical shadows based on random local Pauli measurements, which are easier to perform experimentally.
The subsystem properties can be predicted efficiently by constructing the reduced density matrix from the classical shadow.
Therefore, the computational complexity scales exponentially only in the subsystem size, rather than the size of the entire system.
Our numerical experiments confirm that classical shadows obtained using random Pauli measurements excel at predicting few-body properties of a quantum state, such as two-point correlation functions and subsystem entanglement entropy.

\paragraph{Two-point correlation functions.}
NNQST has been shown to predict two-point correlation functions effectively \cite{carrasquilla2019reconstructing}.
Here, we compare classical shadows with NNQST for two physically motivated test cases: ground states of the anti-ferromagnetic transverse field Ising model in one dimension (TFIM) and the anti-ferromagnetic Heisenberg model in two dimensions.
The Hamiltonian for TFIM is $H = J \sum_{i} \sigma^Z_i \sigma^Z_{i+1} + h \sum_i \sigma^X_i$, where $J>0$, and we consider a chain of 50 lattice sites. The critical point occurs at $h = J$ and exhibits power-law decay of correlations rather than exponential decay.
The Hamiltonian for the 2D Heisenberg model is $H = J \sum_{\langle i, j \rangle} \vec{\sigma}_i \cdot \vec{\sigma}_j$, where $J>0$, and we consider an $8 \times 8$ triangular lattice.
We follow the approach in \cite{carrasquilla2019reconstructing}, where the ground state is approximated by a tensor network found using the density matrix renormalization group (DMRG).
Random Pauli measurements on the ground state may then be simulated using this tensor network.
The two methods are compared in Figure~\ref{fig:2point}.
On the top left (a) and bottom (b), we can see that both the classical shadow (with Pauli measurements) and NNQST perform well at predicting two-point correlations. However, NNQST has a larger error for the 2D Heisenberg model; note that for larger separations (the lower right corner of the surface plot), NNQST produces some fictitious oscillations that are not visible in the results from DMRG and classical shadows.
The two approaches use the same quantum measurement data; the only difference is the classical post-processing.
On the top right side (c) of Figure~\ref{fig:2point}, we compare the cost of this classical post-processing, finding roughly a $10^4$ times speedup in classical processing time using the classical shadow instead of NNQST.

\paragraph{Subsystem entanglement entropies.}
An important nonlinear property that can be predicted with classical shadows is subsystem entanglement entropy. The required number of measurements scales exponentially in subsystem size, but is independent of the total number of qubits.
Moreover, these measurements can be used to predict many subsystem entanglement entropies at once.
This problem has also been studied extensively in \cite{brydges2019probing}, where a specialized approach (which we refer to here as the ``Brydges \textit{et al.} protocol'') was designed to efficiently estimate second-order R\'enyi entanglement entropies using random local measurements.
In \cite{brydges2019probing}, a random unitary rotation is reused several times. Predictions using classical shadows could also be slightly modified to adapt to this scenario.
Results from our numerical experiments are shown in Figure~\ref{fig:entropy}.
On the left (a), we predict the entanglement entropy for all subsystems of size $\leq 2$ from only $2500$ measurements of the approximate ground state of the disordered Heisenberg model in one dimension. This is a prototypical model for studying many-body localization \cite{nandkishore2015many}.
The ground state is approximated by a set of singlet states $\{\frac{1}{\sqrt{2}}(\ket{01} - \ket{10})\}$  found using the strong-disorder renormalization group \cite{ma1979random, dasgupta1980low}.
Both, the classical shadow protocol and the Brydges \text{et al.} method use random single-qubit rotations and basis measurements to find a classical representation of the quantum state; the only difference between the methods is in the classical post-processing.
For these small subsystems,
we find that the prediction error of the classical shadow is smaller than the error of the Brydges \text{et al.} protocol.
On the right hand side of Figure~\ref{fig:entropy} (b), we consider predicting the entanglement entropy in a GHZ state for system sizes ranging from $n = 4$ to $n = 10$ qubits.
We focus on the entanglement entropy of the left-half subsystem with system size $n / 2$.
Note that this entanglement entropy is equal to one bit for any system size $n$.
To achieve an error of $0.05$, classical shadows require several times fewer measurements and the discrepancy increases as we require smaller error.

\subsection{Application to quantum simulation of the lattice Schwinger model (Pauli measurements)}

\begin{figure}[t]
    \centering
    \includegraphics[width=0.97\textwidth]{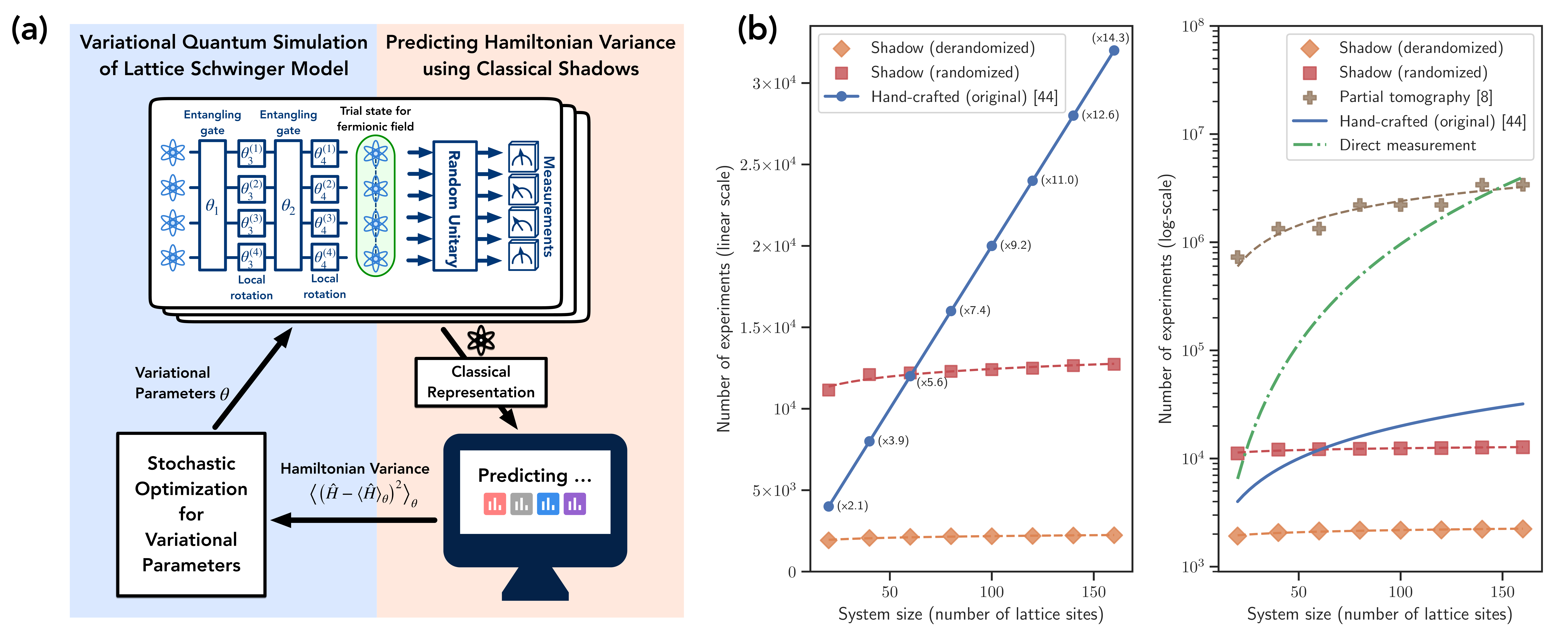}
    \caption{\emph{Application of classical shadows (Pauli measurements) to variational quantum simulation of the lattice Schwinger model.} \\
    \textbf{(a)} \emph{(Left)}\textbf{:} An illustration of variational quantum simulation and the role of classical shadows.
    \\
    \textbf{(b)} \emph{(Right)}\textbf{:} The comparison between different approaches in the number of measurements needed to predict all $4$-local Pauli observables in the expansion of $\langle \big(\hat{H} - \langle\hat{H}\rangle_\theta\big)^2 \rangle_\theta$ with an error equivalent to measuring each Pauli observable at least $100$ times. We include a linear-scale plot that compares classical shadows with the original hand-designed measurement scheme in \cite{kokail2019self} and a log-scale plot that compares with other approaches. In the linear-scale plot, $(\times T)$ indicates that the original scheme uses $T$ times the number of measurements compared to classical shadows (derandomized).
}
    \label{fig:QFT}
\end{figure}

Simulations of quantum field theory using quantum computers may someday advance our understanding of fundamental particle physics. Although high impact discoveries may still be a ways off, notable results have already been achieved in studies of one-dimensional lattice gauge theories using quantum platforms.

For example, in \cite{kokail2019self} a 20-qubit trapped ion analog quantum simulator was used to prepare low-energy eigenstates of the lattice Schwinger model (one-dimensional quantum electrodynamics). The authors prepared a family of quantum states $\{\ket{\psi(\theta)}\}$, where $\theta$ is a variational parameter, and computed the variance of the energy $\langle \big(\hat{H} - \langle\hat{H}\rangle_\theta\big)^2 \rangle_\theta$ for each value of $\theta$. Here $\hat{H}$ is the Hamiltonian of the model, and $\langle\hat O\rangle_\theta =\langle \psi(\theta)|\hat O|\psi(\theta)\rangle$ is the expectation value of the operator $\hat O$ in the state $|\psi(\theta)\rangle$. Because energy eigenstates, and only energy eigenstates, have vanishing energy dispersion, adjusting $\theta$ to minimize the variance of energy prepares an energy eigenstate.

After solving the Gauss law constraint to eliminate the gauge fields, the Hamiltonian $\hat H$ of the Schwinger model is 2-local, though not geometrically local in one dimension. Hence the quantity $\langle \big(\hat{H} - \langle\hat{H}\rangle_\theta\big)^2 \rangle_\theta$ is a sum of expectation values of 4-local observables, which can be measured efficiently using a classical shadow derived from random Pauli measurements. This is illustrated on the left side of Figure~\ref{fig:QFT} (a). On the right side of Figure~\ref{fig:QFT} (b), we compare the performance of classical shadows to the measurement scheme for $4$-local observables designed in \cite{kokail2019self}, and also to a recent method \cite{bonet2019nearly} for measuring local observables, as well as the standard approach that directly measures all observables independently.

The results show, for the methods we considered, the number of copies of the quantum state needed to measure the expectation value of all $4$-local Pauli observables in $\langle \big(\hat{H} - \langle\hat{H}\rangle_\theta\big)^2 \rangle_\theta$ with an error equivalent to measuring each of these observables at least $100$ times.
In \cite{kokail2019self}, such a relatively small number of measurements per local observable already yielded results comparable to theoretical predictions based on exact diagonalization.
We find that the performance of the classical shadow method is better than the method used in \cite{kokail2019self} only for system size larger than $50$ qubits, and may actually be worse for small system sizes. However, classical shadows provide a good prediction for any set of local observables, while the method of \cite{kokail2019self} was hand-crafted for the particular task of estimating the variance of the energy in the Schwinger model.

To make a more apt comparison, we constructed a deterministic version of classical shadows, using a fixed set of measurements rather than random Pauli measurements, specifically adapted for the purpose of estimating $\langle \big(\hat{H} - \langle\hat{H}\rangle_\theta\big)^2 \rangle_\theta$ in the lattice Schwinger model. This deterministic collection of Pauli measurements is obtained by a powerful technique called \textit{derandomization} \cite{raghavan1988pessimistic,spencer1994lectures}.
This procedure simulates the classical shadow scheme based on randomized measurements and makes use of the rigorous performance bound we developed.
When a coin is tossed in the randomized scheme to decide which measurement to perform next, the next measurement in the derandomized version is chosen to have the best possible performance bound for the rest of the protocol. It turns out that this derandomization of the classical shadow method can be carried out very efficiently; full details will appear in upcoming work.
Not surprisingly, the derandomized version, also included in Figure~\ref{fig:QFT}, outperforms the randomized version by a considerable margin. We then find that the derandomized classical shadow method is significantly more efficient than the other methods we considered, including the hand-crafted method from \cite{kokail2019self}.
Finally, we emphasize that the derandomization procedure is fully automated (see \url{https://github.com/momohuang/predicting-quantum-properties} for open source code) and not problem-specific.
It could be used for any pre-specified set of local observables.

\section{Outlook}

A classical shadow is a succinct classical description of a quantum state, which can be extracted by performing reasonably simple single-copy measurements on a reasonably small number of copies of the state. We have shown that, given its classical shadow, many properties of a quantum state can be accurately and efficiently  predicted with a rigorous performance guarantee. In the case of classical shadows based on random Pauli measurements, our methods are feasible using current quantum platforms, and our numerical experiments indicate that many properties can be predicted more efficiently using classical shadows than by using other methods. We therefore anticipate that classical shadows will be useful in near-term experiments characterizing noise in quantum devices and exploring variational quantum algorithms for optimization, materials science, and chemisty. Our results also suggest a variety of avenues for further theoretical exploration. Can the classical shadow of a quantum state be updated efficiently as the state undergoes time evolution governed by a local Hamiltonian? Can we use classical shadows to predict properties of quantum \emph{channels} rather than states? What are the applications of classical shadows based on other ensembles of unitary transformations, for example ensembles of shallow random quantum circuits? More broadly, by mapping many-particle quantum states to succinct classical data, classical shadows open opportunities for applying \emph{classical} machine learning methods to numerous challenging problems in quantum many-body physics \cite{carleo2017solving, carrasquilla2017machine, torlai2018neural}, such as the classification of quantum phases of matter and simulation of strongly correlated quantum phenomena.

\section{Data availability}

Source data are available for this paper. All other data that support the plots within this paper and other findings of this study are available from the corresponding author upon reasonable request.

\section{Code availability}

Source code for an efficient implementation of the proposed procedure is available at \url{https://github.com/momohuang/predicting-quantum-properties}.



\subsection*{Acknowledgments:} The authors want to thank Victor Albert, Fernando Brand\~{a}o, Manuel Endres, Ingo Roth, Joel Tropp, Thomas Vidick and John Wright for valuable inputs and inspiring discussions.
Leandro Aolita and Giuseppe Carleo provided helpful advice regarding presentation.
Our gratitude extends, in particular, to Joseph Iverson who helped us devising a numerical sampling strategy for toric code ground states. We also thank Marco Paini and Amir Kalev for informing us about their related work \cite{paini2019approximate}, where they discussed succinct classical ``snapshots'' of quantum states obtained from randomized local measurements.
HH is supported by the Kortschak Scholars Program.
RK acknowledges funding provided by the Office of Naval Research (Award N00014-17-1-2146) and the Army Research Office (Award W911NF121054). JP acknowledges funding from  ARO-LPS, NSF, and DOE. The Institute for Quantum Information and Matter is an NSF Physics Frontiers Center.

\subsection*{Author Contributions:} H.H. and R.K. developed the theoretical aspects of this work. H.H. conducted the numerical experiments and wrote the open source code. J.P. conceived the applications of classical shadows. H.H., R.K. and J.P. wrote the manuscript.

\subsection*{Competing interests:} The authors declare no competing interests.

\vspace{2em}
\begin{center}
\textbf{\Large Supplementary information}
\end{center}

\renewcommand{\figurename}{Supplementary Figure}
\renewcommand{\theequation}{S\arabic{equation}}
\renewcommand{\thesection}{\arabic{section}}
\renewcommand{\thesubsection}{\Alph{subsection}}
\setcounter{figure}{0}
\setcounter{equation}{0}
\setcounter{theorem}{0}
\setcounter{secnumdepth}{3}

\section{General framework for constructing classical shadows}
\label{sec:generalshadow}

\subsection{Data acquisition and classical shadows}

Throughout this work we restrict attention to multi-qubit systems and
$\rho$ is a fixed, but unknown, quantum state in $d = 2^n$ dimensions.
We present a general-purpose strategy for predicting many properties of this unknown state.
To extract meaningful information about $\rho$, we need to perform a collection of measurements.

\begin{definition}[measurement primitive]
We can apply a restricted set of unitary evolutions $\rho \mapsto U \rho U^\dagger$, where $U$ is chosen from an ensemble $\mathcal{U}$.
Subsequently, we can measure the rotated state in the computational basis $\left\{|b \rangle:\; b\in \left\{0,1\right\}^n \right\}$. Moreover, we assume that this collection is tomographically complete, i.e.\ for each $\sigma \neq \rho$ there exist $U \in \mathcal{U}$ and $b$ such that $\langle b| U \sigma U^\dagger |b \rangle \neq \langle b| U \rho U^\dagger |b \rangle$.
\end{definition}

Based on this primitive, we repeatedly perform a simple randomized measurement procedure: randomly rotate the state $\rho \mapsto U \rho U^\dagger$ and perform a computational basis measurement. Then, after the measurement, we apply the inverse of $U$ to the resulting computational basis state. This procedure collapses $\rho$ to
\begin{equation}
U^\dagger |\hat{b} \rangle \! \langle \hat{b}|U \quad \text{where} \quad \mathrm{Pr} \big[ \hat{b}=b \big] = \langle b| U \rho U^\dagger |b \rangle,\; b \in \left\{0,1\right\}^n \quad \text{(Born's rule)}.
\label{eq:born-rule}
\end{equation}
This random snapshot contains valuable information about $\rho$ in expectation:
\begin{equation}
\mathbb{E} \left[ U^\dagger | \hat{b} \rangle \! \langle \hat{b} | U \right]
= \mathbb{E}_{U \sim \mathcal{U}} \sum_{b \in \left\{0,1\right\}^n} \langle b| U \rho U^\dagger |b \rangle U^\dagger |b \rangle \! \langle b| U=\mathcal{M}(\rho).
\label{eq:measurement-channel}
\end{equation}
For any unitary ensemble $\mathcal{U}$, this relation describes a quantum channel $\rho \mapsto \mathcal{M}(\rho)$.
Tomographic completeness ensures that $\mathcal{M}$ --- viewed as a linear map --- has a unique inverse $\mathcal{M}^{-1}$ and we set
\begin{align}
\hat{\rho} = \mathcal{M}^{-1} \left( U^\dagger | \hat{b} \rangle \! \langle \hat{b}|U \right) && \text{(classical shadow)}.
\label{eq:classical-shadow-appendix}
\end{align}
The classical shadow is a modified post-measurement state that has unit trace, but need not be positive semi-definite. However, it
is designed to reproduce the underlying state $\rho$ exactly in expectation: $\mathbb{E} \left[ \hat{\rho} \right] = \rho$.
This classical shadow $\hat\rho$ corresponds to the linear inversion (or least squares) estimator of $\rho$ in the single-shot limit.
Linear inversion estimators have been used to perform full quantum state tomography \cite{sugiyama2013tomography,guta2018fast}, where an exponential number of measurements is needed.
We wish to show that $\hat \rho$ can predict many properties from only very few measurements.

\subsection{Predicting linear functions with classical shadows}
\label{sec:linearfunc-general}

Classical shadows are well suited to predict \emph{linear} functions in the unknown state $\rho$:
\begin{equation}
o_i = \mathrm{tr} \left( O_i \rho \right) \quad 1 \leq i \leq M.
\end{equation}
To achieve this goal, we simply replace the (unknown) quantum state $\rho$ by a classical shadow $\hat{\rho}$. Since classical shadows are random, this produces a random variable that yields the correct prediction in expectation:
\begin{equation}
\hat{o}_i= \mathrm{tr} \left( O_i \hat{\rho}\right) \quad \text{obeys} \quad \mathbb{E} \left[ \hat{o} \right] = \mathrm{tr} \left( O_i \rho \right).
\end{equation}
Fluctuations of $\hat{o}$ around this desired expectation are controlled by the variance.

\begin{lemma} \label{lem:variance}
Fix $O$ and set $\hat{o}=\mathrm{tr} \left( O \hat{\rho}\right)$, where $\hat{\rho}$ is a classical shadow \eqref{eq:classical-shadow-appendix}. Then
\begin{equation}\label{eq:Var-bound}
\mathrm{Var} \left[ \hat{o} \right] = \mathbb{E} \left[ \left(\hat{o}-\mathbb{E} \left[ \hat{o} \right] \right)^2 \right]
\leq \left\| O - \tfrac{\mathrm{tr}(O)}{2^n} \mathbb{I} \right\|_{\mathrm{shadow}}^2.
\end{equation}
The norm $\norm{\cdot}_{\mathrm{shadow}}$ only depends on the measurement primitive:
\begin{equation}
\| O \|_{\mathrm{shadow}}
= \max_{\sigma: \text{ state}}\Big(\mathbb{E}_{U \sim \mathcal{U}}\sum_{b \in \left\{0,1\right\}^n}\langle b| U \sigma U^\dagger|b \rangle \langle b| U \mathcal{M}^{-1} \left(O\right) U^\dagger |b \rangle^2 \Big)^{1/2}.
\label{eq:shadow-norm}
\end{equation}
\end{lemma}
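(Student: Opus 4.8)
The plan is to turn the variance bound into a one-line second-moment computation after exploiting two elementary structural properties of the measurement channel $\mathcal{M}$ defined in \eqref{eq:measurement-channel}. First, $\mathcal{M}$ is self-adjoint with respect to the Hilbert--Schmidt inner product: from the explicit formula, $\mathrm{tr}(Y\,\mathcal{M}(X)) = \mathbb{E}_{U\sim\mathcal{U}}\sum_{b}\langle b|UXU^\dagger|b\rangle\langle b|UYU^\dagger|b\rangle$, which is symmetric in $X$ and $Y$; hence $\mathcal{M}^{-1}$ is self-adjoint as well. Second, $\mathcal{M}$ is unital, since $\langle b|U\mathbb{I}U^\dagger|b\rangle = 1$ gives $\mathcal{M}(\mathbb{I}) = \mathbb{E}_{U}\sum_b U^\dagger|b\rangle\!\langle b|U = \mathbb{I}$, so $\mathcal{M}^{-1}(\mathbb{I}) = \mathbb{I}$. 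Using self-adjointness, I would first rewrite the estimator as a rotated-basis diagonal entry of $\mathcal{M}^{-1}(O)$: $\hat{o} = \mathrm{tr}\!\big(O\,\mathcal{M}^{-1}(U^\dagger|\hat{b}\rangle\!\langle\hat{b}|U)\big) = \langle\hat{b}|U\,\mathcal{M}^{-1}(O)\,U^\dagger|\hat{b}\rangle$.

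Next I would strip off the identity part of $O$. Since every classical shadow has unit trace, replacing $O$ by $O_0 := O - \tfrac{\mathrm{tr}(O)}{2^n}\mathbb{I}$ shifts $\hat{o}$ by the deterministic constant $\tfrac{\mathrm{tr}(O)}{2^n}$, so $\mathrm{Var}[\hat{o}] = \mathrm{Var}[\hat{o}_0]$ with $\hat{o}_0 = \langle\hat{b}|U\,\mathcal{M}^{-1}(O_0)\,U^\dagger|\hat{b}\rangle$ (unitality additionally guarantees that $\mathcal{M}^{-1}(O_0)$ is traceless, though this is not strictly needed). I would then bound the variance by the raw second moment, $\mathrm{Var}[\hat{o}_0] \le \mathbb{E}[\hat{o}_0^2]$.

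The last step is to expand $\mathbb{E}[\hat{o}_0^2]$ using Born's rule $\mathrm{Pr}[\hat{b}=b\,|\,U] = \langle b|U\rho U^\dagger|b\rangle$ together with the average over $U\sim\mathcal{U}$:
\begin{equation}
\mathbb{E}[\hat{o}_0^2] \;=\; \mathbb{E}_{U\sim\mathcal{U}}\sum_{b\in\{0,1\}^n}\langle b|U\rho U^\dagger|b\rangle\,\langle b|U\,\mathcal{M}^{-1}(O_0)\,U^\dagger|b\rangle^2 .
\end{equation}
This is exactly the quantity inside the maximum in the definition \eqref{eq:shadow-norm} of $\norm{O_0}_{\mathrm{shadow}}^2$, evaluated at the specific state $\sigma = \rho$. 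Since $\rho$ is a state, this is at most the maximum over all states, which by definition equals $\norm{O_0}_{\mathrm{shadow}}^2 = \norm{O - \tfrac{\mathrm{tr}(O)}{2^n}\mathbb{I}}_{\mathrm{shadow}}^2$, giving the claim.

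I expect essentially no serious obstacle: the only care needed is the bookkeeping that establishes self-adjointness and unitality of $\mathcal{M}$ and transfers them to $\mathcal{M}^{-1}$. It is, however, worth flagging that removing precisely the $\tfrac{\mathrm{tr}(O)}{2^n}\mathbb{I}$ component (rather than leaving $O$ alone) is what produces the sharper norm on the right-hand side --- in the random Pauli case this is exactly the difference between a locality-dependent bound and a dimension-dependent one --- so this choice of identity-shift representative is not merely cosmetic.
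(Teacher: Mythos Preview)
Your proposal is correct and follows essentially the same route as the paper's proof: use self-adjointness of $\mathcal{M}^{-1}$ to rewrite $\hat{o}$ as $\langle\hat{b}|U\,\mathcal{M}^{-1}(O)\,U^\dagger|\hat{b}\rangle$, exploit $\mathrm{tr}(\hat{\rho})=1$ to pass to the traceless part $O_0$, bound the variance by the raw second moment, expand via Born's rule, and then maximize over states to land on the shadow norm. The only cosmetic difference is that the paper writes out the exact variance (keeping the $-\big(\mathrm{tr}(O_0\rho)\big)^2$ term) before dropping it, whereas you drop it immediately via $\mathrm{Var}\le \mathbb{E}[\hat{o}_0^2]$; your added remarks on unitality and on why the identity shift matters are correct but not used in the argument itself.
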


\noindent It is easy to check that $\|O\|_{\mathrm{shadow}}$ is nonnegative and homogeneous ($\|0\|_{\mathrm{shadow}}=0$). After some work, one can verify that this expression also obeys the triangle inequality, and so is indeed a norm.

\begin{proof}
Classical shadows have unit trace
by construction ($\mathrm{tr}(\hat{\rho})=1$). This feature implies that the variance only depends on the traceless part $O_0 = O - \frac{\mathrm{tr}(O)}{2^n}\mathbb{I}$ of  $O$, not $O$ itself:
\begin{equation}
\hat{o}-\mathbb{E} [\hat{o}] = \mathrm{tr} \left( O \hat{\rho}\right) - \mathrm{tr} \left( O \rho \right)
= \mathrm{tr} \left( O_0 \hat{\rho} \right) -\mathrm{tr} \left( O_0 \rho \right).
\label{eq:traceless}
\end{equation}
Moreover, it is easy to check that the inverse of $\mathcal{M}$ \eqref{eq:measurement-channel} is self-adjoint  ($\mathrm{tr}\left( X \mathcal{M}^{-1}(Y) \right) = \mathrm{tr} \left( \mathcal{M}^{-1}(X) Y \right)$ for any pair of matrices $X,Y$ with compatible dimension).
These two observations allow us to rewrite the variance in the following fashion:
\begin{align}
\mathrm{Var} \left[ \hat{o} \right]
=& \mathbb{E} \left[\left( \hat{o} - \mathbb{E} \hat{o} \right)^2 \right]
= \mathbb{E}\left[ \left(\mathrm{tr}(O_0 \hat{\rho})\right)^2\right]
-\left(\mathrm{tr} \left( O_0 \E\left[\hat \rho\right] \right)\right)^2
= \mathbb{E}\left[ \langle \hat{b} | U \mathcal{M}^{-1} (O_0) U^\dagger |\hat{b} \rangle^2\right] - \left(\mathrm{tr} \left( O_0 \rho \right)\right)^2.
\end{align}
Classical shadows arise from mixing two types of randomness: (i) a (classical) random choice of unitary $U \sim \mathcal{U}$ and (ii) a random choice of computational basis state $| \hat{b} \rangle$ that is governed by Born's rule \eqref{eq:born-rule}. Inserting the average over computational basis states produces
a (squared) norm that closely resembles the advertised expression, but does depend on the underlying state:
\begin{equation}
\mathbb{E} \langle \hat{b}  | U \mathcal{M}^{-1} (O_0) U^\dagger |\hat{b} \rangle^2
= \mathbb{E}_{U \sim \mathcal{U}} \sum_{b \in \left\{0,1\right\}^n}\langle b| U \rho U^\dagger |b \rangle
\langle {b}  | U \mathcal{M}^{-1} (O_0) U^\dagger |{b} \rangle^2.
\end{equation}
Maximizing over all possible states $\sigma$ removes this implicit dependence and produces a universal upper bound on the variance.
Ignoring the subtraction of $\left(\mathrm{tr} \left( O_0 \rho \right)\right)^2$ (which can only make the bound tighter),
we obtain (\ref{eq:Var-bound}).
\end{proof}

Lemma~\ref{lem:variance} sets the stage for successful linear function estimation with classical shadows. A single classical shadow \eqref{eq:classical-shadow-appendix} correctly predicts \emph{any} linear function $o_i=\mathrm{tr}(O_i\rho)$ in expectation.
Convergence to this desired target can be boosted by forming empirical averages of multiple independent shadow predictions. The \emph{empirical mean} is the canonical example for such a procedure. Construct $N$ independent classical shadows $\hat{\rho}_1,\ldots,\hat{\rho}_N$ and set
\begin{equation}
\label{eq:mean_estimator}
\hat{o}_i(N,1) = \frac{1}{N}\sum_{j=1}^N \mathrm{tr} \left( O_i \hat{\rho}_j \right).
\end{equation}
Each summand is an independent random variable with correct expectation and variance bounded by Lemma~\ref{lem:variance}.
Convergence to the expectation value $\mathrm{tr}(O_i \rho)$ can be controlled by
classical concentration arguments (e.g. Chernoff or Hoeffding inequalities).
In order to achieve a failure probability of (at most) $\delta$, the number of samples must scale like $N=\mathrm{Var} \left[\hat{o}_i \right]/(\delta \epsilon^2)$. While the scaling in variance and approximation accuracy $\epsilon$ is optimal, the dependence on $1/\delta$ is particularly bad. Unfortunately, this feature of sample mean estimators cannot be avoided without imposing additional assumptions (that do not apply to classical shadows).
\emph{Median of means} \cite{nemirovski1983medianmeans,jerrum1986medianmeans}
is a conceptually simple trick that addresses this issue.
Instead of using all samples to construct a single empirical mean \eqref{eq:mean_estimator}, construct $K$
independent sample means and form their median:
\begin{equation}
\hat{o}_i(N,K) = \mathrm{median}\left\{ \hat{o}_i^{(1)}(N,1),\ldots,\hat{o}_i^{(K)}(N,1) \right\} \quad \textrm{where} \quad  \hat{o}_i^{(k)} = \tfrac{1}{N}\sum_{j=N(k-1)+1}^{Nk} \mathrm{tr} \left( O_i \hat{\rho}_{j}\right)
\label{eq:median-of-means}
\end{equation}
for $1 \leq k \leq K$.
This estimation technique requires $NK$ samples in total, but it is much more robust with respect to outlier corruption.  Indeed, $| \hat{o}(N,K)-\mathrm{tr}(O \rho) | > \epsilon$ if and only if more than half of the empirical means individually deviate by more than $\epsilon$. The probability associated with such an undesirable event decreases exponentially with the number of batches $K$. This results in an exponential improvement over sample mean estimation in terms of failure probability. The main result of this work capitalizes on this improvement.

\begin{theorem} \label{thm:general}
Fix a measurement primitive $\mathcal{U}$, a collection $O_1,\ldots,O_M$ of $2^n \times 2^n$ Hermitian matrices and accuracy parameters $\epsilon,\delta \in [0,1]$.
Set
\begin{equation}
K = 2 \log (2M/\delta) \quad \text{and} \quad N = \frac{34}{\epsilon^2}\max_{1 \leq i \leq M} \| O_i- \tfrac{\mathrm{tr}(O_i)}{2^n}\mathbb{I} \|_{\mathrm{shadow}}^2,
\end{equation}
where $\| \cdot \|_{\mathrm{shadow}}$ denotes the norm defined in Eq.~\eqref{eq:shadow-norm}. Then, a collection of $NK$ independent classical shadows allow for accurately predicting all features
via median of means prediction \eqref{eq:median-of-means}:
\begin{equation}
\left| \hat{o}_i (N,K) - \mathrm{tr} \left( O_i \rho \right) \right| \leq \epsilon \quad \text{for all $1 \leq i \leq M$}
\end{equation}
with probability at least $1-\delta$.
\end{theorem}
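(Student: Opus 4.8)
The plan is to combine the variance bound of Lemma~\ref{lem:variance} with two standard concentration steps: first control each empirical mean via Chebyshev's inequality, then boost the success probability by taking the median over $K$ independent batches, and finally apply a union bound over the $M$ observables. The key point is that median-of-means converts a weak (constant-probability) per-batch guarantee into an exponentially small failure probability, which is what makes the $\log(M)$ scaling possible.

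First, fix $i$ and consider a single empirical mean $\hat{o}_i^{(k)}(N,1) = \tfrac{1}{N}\sum_{j} \mathrm{tr}(O_i \hat\rho_j)$. Each summand has expectation $\mathrm{tr}(O_i\rho)$ and, by Lemma~\ref{lem:variance}, variance at most $\|O_i - \tfrac{\mathrm{tr}(O_i)}{2^n}\mathbb{I}\|_{\mathrm{shadow}}^2 =: \|O_{i,0}\|_{\mathrm{shadow}}^2$. Since the $\hat\rho_j$ are independent, $\mathrm{Var}[\hat{o}_i^{(k)}(N,1)] \leq \|O_{i,0}\|_{\mathrm{shadow}}^2 / N$. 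With the stated choice $N = \tfrac{34}{\epsilon^2}\max_i \|O_{i,0}\|_{\mathrm{shadow}}^2$, this variance is at most $\epsilon^2/34$. Chebyshev's inequality then gives
\begin{equation}
\mathrm{Pr}\!\left[ \left| \hat{o}_i^{(k)}(N,1) - \mathrm{tr}(O_i\rho) \right| \geq \epsilon \right] \leq \frac{\epsilon^2/34}{\epsilon^2} = \frac{1}{34} < \frac{1}{3}.
\end{equation}
So each of the $K$ batches independently lands within $\epsilon$ of the true value with probability at least $2/3$.

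Next, I would analyze the median. The event $|\hat{o}_i(N,K) - \mathrm{tr}(O_i\rho)| > \epsilon$ occurs only if at least half of the $K$ batch means deviate by more than $\epsilon$ — i.e.\ if the number of "bad" batches, a $\mathrm{Binomial}(K, p)$ variable with $p \leq 1/3$, is at least $K/2$. By a Hoeffding (or Chernoff) bound on this binomial tail, the probability of this event is at most $e^{-2K(1/2 - 1/3)^2} = e^{-K/18}$; any clean exponential bound of the form $e^{-cK}$ with a fixed constant suffices here, and choosing constants so that this is $\leq \delta/M$ forces $K = \Theta(\log(M/\delta))$, consistent with the stated $K = 2\log(2M/\delta)$. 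Finally, a union bound over $i = 1,\ldots,M$ gives that all $M$ estimates are simultaneously within $\epsilon$ with probability at least $1 - M \cdot (\delta/M) = 1 - \delta$. Total sample count is $NK$, matching the claim.

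The main obstacle — really the only non-routine point — is verifying that the explicit constants ($34$ in $N$, $2\log(2M/\delta)$ in $K$, and the factor $1/3$ separation) fit together so that the binomial tail bound genuinely yields failure probability $\leq \delta/M$ per observable; this requires being careful that the Chebyshev bound $1/34$ is comfortably below $1/2$ and that the Hoeffding exponent is large enough after plugging in $K = 2\log(2M/\delta)$. Everything else is bookkeeping: independence of batches is built into the construction, the reduction to the traceless part $O_{i,0}$ is already handled inside Lemma~\ref{lem:variance}, and no structure of the observables or the measurement primitive is needed beyond the variance bound.
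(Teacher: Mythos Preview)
Your approach is exactly what underlies the paper's proof: variance bound from Lemma~\ref{lem:variance}, Chebyshev on each batch mean, a binomial tail bound on the number of bad batches, then a union bound over the $M$ observables. The paper simply packages the middle two steps into a citation of the standard median-of-means guarantee $\Pr[|\hat\mu(N,K)-\mathbb{E}[X]|\geq\epsilon]\leq 2e^{-K/2}$ for $N=34\sigma^2/\epsilon^2$, and then observes that $K=2\log(2M/\delta)$ makes this exactly $\delta/M$.

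The one place your write-up does not quite close is the constant in the binomial tail: Hoeffding with $p\leq 1/3$ gives only $e^{-K/18}$, which is not $\leq \delta/M$ when $K=2\log(2M/\delta)$. The fix is already in your hands: Chebyshev with $N=34\sigma^2/\epsilon^2$ gives per-batch failure probability $p\leq 1/34$, not merely $1/3$, and the relative-entropy Chernoff bound $\Pr[\mathrm{Bin}(K,p)\geq K/2]\leq \exp(-K\,D(1/2\,\|\,p))$ then yields an exponent $D(1/2\,\|\,1/34)\approx 1.08 > 1/2$, which recovers the stated constants. You correctly flagged this as the only non-routine point.
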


\begin{proof}
The claim follows from combining the variance estimates from Lemma~\ref{lem:variance} with a rigorous performance guarantee for median of means estimation \cite{nemirovski1983medianmeans,jerrum1986medianmeans}: Let $X$ be a random variable with variance $\sigma^2$. Then, $K$ independent sample means of size $N=34 \sigma^2/\epsilon^2$ suffice to construct a median of means estimator $\hat{\mu}(N,K)$ that obeys
$\mathrm{Pr} \left[ \left| \hat{\mu}(N,K)-\mathbb{E} \left[ X \right] \right| \geq \epsilon \right] \leq 2 \mathrm{e}^{-K/2}$ for all $\epsilon >0$.
The parameters $N$ and $K$ are chosen such that this general statement ensures
$
\mathrm{Pr} \left[ \left| \hat{o}_i (N,K) - \mathrm{tr} \left( O_i \rho \right) \right| \geq \epsilon \right] \leq \frac{\delta}{M}
$ for all $1 \leq i \leq M$. Apply a union bound over all $M$ failure probabilities to deduce the claim.
\end{proof}

\begin{remark}[Constants in Theorem~\ref{thm:general}]
The numerical constants featuring in $N$ and $K$ result from a conservative (worst case) argument that is designed to be simple, not tight. We expect that the actual constants are \emph{much smaller} in practice.
\end{remark}

Each classical shadow is the result of a single quantum measurement on $\rho$. Viewed from this angle,
Theorem~\ref{thm:general} asserts that a total of
\begin{align}
N_{\mathrm{tot}}=& \mathcal{O} \left( \frac{\log(M)}{\epsilon^2}\max_{1 \leq i \leq M} \left\|O_i- \tfrac{\mathrm{tr}(O_i)}{2^n}\mathbb{I} \right\|_{\mathrm{shadow}}^2 \right) & \text{(sample complexity)}
\label{eq:sample-rate}
\end{align}
measurement repetitions suffice to accurately predict a collection of $M$ linear target functions $\mathrm{tr}(O_i \rho)$.

 Importantly, this sample complexity only scales logarithmically in the number of target functions $M$. Moreover, the problem dimension $2^n$ does not feature explicitly.
The sample complexity does, however, depend on the measurement primitive via the norm $\norm{\cdot}_{\mathrm{shadow}}$.
This term reflects expressiveness and structure of the measurement primitive in question.
This subtle point is best illustrated with two concrete examples. We defer technical derivations to subsequent sections and content ourselves with summarizing the important aspects here.

\paragraph{Example 1: Random Clifford measurements}
Clifford circuits are generated by CNOT, Hadamard and Phase gates and form the group $\mathrm{Cl}(2^n)$.
The ``random global Clifford basis'' measurement primitive --- $\mathcal{U}=\mathrm{Cl}(2^n)$ (endowed with uniform weights) --- implies the following simple expression for classical shadows and the associated norm $\norm{\cdot}_{\mathrm{shadow}}$:
\begin{equation}
\hat{\rho} = (2^n+1) U^\dagger |\hat{b} \rangle \! \langle \hat{b}|U-\mathbb{I} \quad \textrm{and} \quad \left\| O - \tfrac{\mathrm{tr}(O)}{2^n}\mathbb{I} \right\|_{\mathrm{shadow}}^2 \leq 3 \mathrm{tr}(O^2).
\end{equation}
We refer to Supplementary Section~\ref{sub:clifford-shadow-details} for details and proofs.
Combined with Eq.~\eqref{eq:sample-rate}, this ensures that $\mathcal{O}(\log (M) \max_i \mathrm{tr}(O_i^2)/\epsilon^2)$ random global Clifford basis measurements suffice to accurately predict $M$ linear functions.
This prediction technique is most powerful, when the target functions have constant Hilbert-Schmidt norm. In this case, the sample rate is completely independent of the problem dimension $2^n$.
Prominent examples include estimating quantum fidelities (with pure states), or entanglement witnesses.

\paragraph{Example 2: Random Pauli measurements}
Although (global) Clifford circuits are believed to be much more tractable than general quantum circuits, they still feature entangling gates, like CNOT. Such gates are challenging to implement reliably on today's devices. The ``random Pauli basis'' measurement primitive takes this serious drawback into account and assumes that one is only able to apply single-qubit Clifford gates, i.e. $U=U_{1} \otimes \cdots \otimes U_{n} \sim \mathcal{U}=\mathrm{Cl}(2)^{\otimes n}$ (endowed with uniform weights). This is equivalent to assuming that we can perform arbitrary Pauli (basis) measurements, i.e., measuring each qubit in the $X$-, $Y$- and $Z$-basis, respectively. Such basis measurements decompose nicely into tensor  products ($U|\hat{b} \rangle = \bigotimes_{j=1}^n U_{j} |b_j \rangle$ for $b=(b_1,\ldots,b_n) \in \left\{0,1\right\}^n$) and respect locality. The associated classical shadows and the norm $\norm{\cdot}_{\mathrm{shadow}}$ inherit these desirable features:
\begin{equation}
\hat{\rho} = \bigotimes_{j=1}^n \left( 3 U_{j}^\dagger|\hat{b}_j \rangle \! \langle \hat{b}_j| U_{j}-\mathbb{I} \right)
\quad \textrm{and} \quad \left\| O-\tfrac{\mathrm{tr}(O)}{2^n} \right\|_{\mathrm{shadow}}^2 \leq 4^{\mathrm{locality}(O)} \|O \|_\infty^2.
\end{equation}
Here, $\mathrm{locality}(O)$ counts the number of qubits on which $O$ acts nontrivially.
We refer to Supplementary Section~\ref{sub:pauli-shadow-details} for details and proofs.
Combined with Eq.~\eqref{eq:sample-rate} this ensures that $\mathcal{O} \left( \log (M) 4^k/\epsilon^2\right)$ local Clifford (Pauli) basis measurements suffice to predict $M$ bounded observables that are at most $k$-local.
For observables that are the tensor product of $k$ single-qubit observables, the sample complexity can be further improved to $\mathcal{O} \left( \log (M) 3^k/\epsilon^2\right)$.
This prediction technique is most powerful when the target functions do respect some sort of locality constraint. Prominent examples include $k$-point correlators, or individual terms in a local Hamiltonian.

\paragraph{Discussion and information-theoretic optimality}

These two examples complement each other nicely. Random Clifford measurements excel at performing useful subroutines in quantum computing and communication tasks, such as certifying (global) entanglement, which will be feasible using sufficiently advanced hardware. Their practical utility, however, hinges on the ability to execute circuits with many entangling gates.
Random Pauli measurements, on the other hand, are much less demanding from a hardware perspective. In today's NISQ era, local Pauli operators can be accurately measured using available hardware platforms. While not well-suited for predicting global features, Pauli measurements excel at making local predictions. Furthermore, for both kinds of randomized measurements, linear prediction based on classical shadows saturates fundamental lower bounds from information theory.

\begin{theorem}[random Clifford measurements; informal version] \label{thm:main2Cl}
\emph{Any} procedure based on a fixed set of single-copy measurements that can predict, with additive error $\epsilon$, $M$ arbitrary linear functions $\mathrm{tr}(O_i\rho) $, requires at least $\Omega(\log (M) \max_i \Tr(O_i^2) / \epsilon^2)$ copies of the state $\rho$.
\end{theorem}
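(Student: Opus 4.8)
The plan is to recast linear-function estimation as a one-way communication problem and then cap the achievable rate using elementary quantum information theory, thereby matching the upper bound of Eq.~\eqref{eq:sample-rate} with $\|\cdot\|_{\mathrm{shadow}}^2$ replaced by $\mathrm{tr}(O^2)$. Fix the value $R=\max_i\mathrm{tr}(O_i^2)$ whose optimality we want to certify, and take the hard instance to be $O_i=\sqrt R\,|\psi_i\rangle\!\langle\psi_i|$ for $i=1,\dots,M$, where $\{|\psi_i\rangle\}$ is a \emph{generic} family of pure states (a uniformly random set, or a subset of a unitary $2$-design); the only properties we will use are that pairwise fidelities are small, $|\langle\psi_i|\psi_j\rangle|^2\le 1/8$ for $i\ne j$, and that first- and second-moments over the family match the Haar measure. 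A sender Alice draws a uniformly random index $j^\star\in[M]$ and prepares the $n$-qubit state $\rho_{j^\star}=(1-\alpha)\tfrac{1}{2^n}\mathbb{I}+\alpha\,|\psi_{j^\star}\rangle\!\langle\psi_{j^\star}|$ with $\alpha=4\epsilon/\sqrt R<1$ (a positive perturbation, so $\rho_{j^\star}$ is automatically positive semidefinite). She hands $N$ independent copies to Bob, who runs the hypothetical procedure --- any \emph{prescribed} (non-adaptive) collection of single-copy measurements --- to obtain estimates $\hat o_1,\dots,\hat o_M$ with $|\hat o_i-\mathrm{tr}(O_i\rho_{j^\star})|\le\epsilon$, and then decodes an index $\hat\jmath$ from this vector.

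\textbf{Decoding and Fano.} A short computation shows that $\mathrm{tr}(O_i\rho_{j^\star})$ exceeds the baseline $\sqrt R/2^n$ by about $4\epsilon$ when $i=j^\star$, and by at most $\epsilon/2$ when $i\ne j^\star$ (here the small-fidelity condition is used); hence thresholding the $\epsilon$-accurate estimates at $\sqrt R/2^n+2\epsilon$ recovers $j^\star$ whenever the procedure succeeds. If the procedure is correct with some constant probability bounded away from $0$, Fano's inequality gives $I(j^\star:\text{measurement record})\ge\Omega(\log M)$ for the classical mutual information between Alice's index and Bob's outcome record.

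\textbf{Per-copy information bound (the crux).} Since the measurements are single-copy and non-adaptive, the outcomes $Y_1,\dots,Y_N$ are conditionally independent given $j^\star$, so subadditivity of Shannon entropy yields $I(j^\star:\text{record})\le\sum_{t=1}^N I(j^\star:Y_t)\le N\kappa$, where $\kappa=\sup_{\mathrm{POVM}}I(j^\star:Y)$ is the maximal information \emph{any} single-copy measurement can reveal about $j^\star$ (under the uniform prior). I would bound $\kappa$ through $I(j^\star:Y)\le\mathbb{E}_{j^\star}\,\chi^2(p_{j^\star}\,\|\,q)$, where $p_{j^\star}$ is the outcome distribution of the POVM applied to $\rho_{j^\star}$ and $q=\mathbb{E}_{j^\star}p_{j^\star}$, and expand the $\chi^2$ over rank-one POVM elements $E_s=w_s|e_s\rangle\!\langle e_s|$ (refining to rank one only increases information), with $\sum_s w_s=2^n$ and $w_s\le 1$: the $s$-th term equals $\mathrm{Var}_{j^\star}[\mathrm{tr}(E_s\rho_{j^\star})]/\mathrm{tr}(E_s\bar\rho)=\alpha^2 w_s^2\,\mathrm{Var}_{j^\star}[|\langle e_s|\psi_{j^\star}\rangle|^2]/\mathrm{tr}(E_s\bar\rho)$. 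Inserting the Haar-type bound $\mathrm{Var}_{j^\star}[|\langle e_s|\psi_{j^\star}\rangle|^2]\le 2^{-2n}$ and $\mathrm{tr}(E_s\bar\rho)\ge(1-\alpha)w_s2^{-n}$ (because $\bar\rho-(1-\alpha)2^{-n}\mathbb{I}$ is positive semidefinite), the factors of $w_s$ and $2^{-n}$ telescope against $\sum_s w_s=2^n$, leaving $\kappa\le 2\alpha^2=O(\epsilon^2/R)$ --- crucially with \emph{no} surviving factor of $2^n$. Intuitively: a single near-maximally-mixed copy whose perturbation points in a random direction carries almost no information about which of the $M$ directions it is; an entangled measurement on $\rho_{j^\star}^{\otimes N}$ would not be so constrained, which is exactly why the single-copy restriction is essential.

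\textbf{Conclusion and main obstacle.} Chaining the two bounds, $N\ge I(j^\star:\text{record})/\kappa\ge\Omega(\log M)/O(\epsilon^2/R)=\Omega\!\big(\log M\cdot R/\epsilon^2\big)=\Omega\!\big(\log M\cdot\max_i\mathrm{tr}(O_i^2)/\epsilon^2\big)$, which is the claim. The main obstacle is making the per-copy bound airtight: one must verify that the $\chi^2$ expansion really collapses to $O(\alpha^2)$ for an \emph{arbitrary} rank-one POVM --- the exact cancellation of the denominator's $2^{-n}$ against $\sum_s w_s=2^n$, together with the Haar variance $2^{-2n}$, is the delicate bookkeeping --- and that one family $\{|\psi_i\rangle\}$ can simultaneously meet the Haar-like moment bounds and the small-fidelity decoding condition (handled by a random choice plus concentration, valid for $M$ up to nearly $\exp(2^n)$). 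Tracking how the perturbation scale $\alpha$ must balance positivity of $\rho_{j^\star}$, the decoding margin, and the variance bound is precisely what pins down the $R/\epsilon^2$ dependence. Extending the argument to fully \emph{adaptive} single-copy measurements (the informal Theorem~\ref{thm:main2}) is more subtle, since the conditional prior on $j^\star$ is then non-uniform and one must bound the per-step information gain uniformly over priors --- again exploiting the genericity of the $|\psi_i\rangle$.
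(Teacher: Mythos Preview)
Your overall architecture --- encode a uniformly random index into a weakly perturbed maximally mixed state, use Fano to lower-bound the mutual information, and upper-bound the per-copy information by a $\chi^2$ computation --- is exactly the paper's strategy. But the step you flag as ``the crux'' has a genuine gap, and the fix you propose does not work.

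The problem is the claimed uniform bound $\mathrm{Var}_{j^\star}\!\big[|\langle e_s|\psi_{j^\star}\rangle|^2\big]\le 2^{-2n}$ for a \emph{fixed} family $\{|\psi_i\rangle\}_{i=1}^M$. This can only hold if the family is (very close to) a projective $2$-design, which forces $M\gtrsim 2^{2n}$. For any smaller $M$ it fails badly: simply take $e_s=|\psi_1\rangle$. Then one value of $|\langle e_s|\psi_{j^\star}\rangle|^2$ equals $1$ while all others are $\le 1/8$, and the variance over $j^\star$ is $\Theta(1/M)$, not $2^{-2n}$. ``Random choice plus concentration'' cannot rescue this, because concentration only controls the variance at directions $e_s$ that are \emph{independent} of the family; the bad direction $e_s=|\psi_1\rangle$ is defined by the family itself. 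In operational terms: once the family is fixed, Bob may pick a POVM with elements $|\psi_i\rangle\!\langle\psi_i|$ (plus a residual), and a direct calculation shows this POVM extracts $\Theta(\alpha)$ bits per copy in the regime $\alpha\gtrsim 2^{-n}$, not $O(\alpha^2)$. Plugging that into your chain yields only $N\ge\Omega(\log M\cdot\sqrt R/\epsilon)$, strictly weaker than the claim.

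The paper repairs exactly this point with the ``Loki'' device: a third party applies a Haar-random unitary $U$ to every copy \emph{before} Bob measures, and reveals $U$ only afterwards. Because the hypothesis is that the procedure predicts \emph{arbitrary} observables from a fixed measurement record, Bob can still decode by asking for the rotated features $UO_iU^\dagger$ once $U$ is disclosed. Crucially, the per-copy information is now $I(X:Y_t\mid U)$, and the average over Haar $U$ \emph{exactly} reproduces the Haar second moment --- no design or concentration needed, and the bound $\le\alpha^2/r$ holds for \emph{every} POVM $F_t$, for every $M$. Your random-family idea can be made to work, but only if the family is itself part of the randomness that is revealed post-measurement; that is the Loki trick in different clothing, and it is the missing idea in your write-up.

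A secondary point: your observables $O_i=\sqrt R\,|\psi_i\rangle\!\langle\psi_i|$ have $\|O_i\|_\infty=\sqrt R$, so the hard instance lives outside the regime $0\preceq O_i\preceq\mathbb{I}$ where the Hilbert--Schmidt scaling is actually nontrivial (rank-$1$ observables trivially satisfy $\mathrm{tr}(O_i^2)=\|O_i\|_\infty^2$, so the $R/\epsilon^2$ bound is just a rescaled accuracy). The paper instead perturbs by rank-$B/4$ projectors $\Pi_i$ and takes $O_i\propto\Pi_i$, which keeps $\|O_i\|_\infty=O(1)$ while $\mathrm{tr}(O_i^2)=\Theta(B)$; the per-copy bound then becomes $\alpha^2/r$ with $r=\Theta(B)$ and $\alpha=\Theta(\epsilon)$, giving the same $\epsilon^2/B$ but for the meaningful operator-norm-bounded class.
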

\begin{theorem}[random Pauli measurements; informal version] \label{thm:main2P}
\emph{Any} procedure based on a fixed set of single-copy local measurements that can predict, with additive error $\epsilon$, $M$ arbitrary $k$-local linear functions $\mathrm{tr}(O_i\rho) $, requires at least $\Omega(\log (M) 3^k / \epsilon^2)$ copies of the state $\rho$.\
\end{theorem}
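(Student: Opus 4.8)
The plan is to turn the prediction procedure into a one-way communication protocol and then combine Fano's inequality~\cite{fano1961information_theory} with an information bound for a single local measurement. Fix any procedure that uses $N$ copies of an unknown $n$-qubit state and a predetermined list of single-qubit (local) measurements to output estimates $\hat o_1,\dots,\hat o_M$ satisfying $|\hat o_i-\mathrm{tr}(O_i\rho)|\le\epsilon$ for all $i$ with probability at least $2/3$. I will exhibit a family of states $\{\rho_x\}$, indexed by a uniformly random string $X$ taking $2M$ values, together with a list of $M$ observables, such that on the success event the estimates determine $X$ exactly. The procedure followed by a trivial decoder then recovers $X$ with probability $\ge 2/3$; Fano gives $I(X:\text{outcomes})\ge\tfrac{2}{3}\log_2(2M)-1=\Omega(\log M)$, while the chain rule together with the fact that the $N$ outcomes are conditionally independent given $X$ gives $I(X:\text{outcomes})\le\sum_{t=1}^N I(X:b_t)$. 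It therefore suffices to prove $I(X:b_t)\le O(\epsilon^2/3^k)$ for every single local measurement, which forces $N=\Omega(3^k\log M/\epsilon^2)$.

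For the hard instance, take $L=\lceil M/3^k\rceil$ disjoint blocks of $k$ qubits (so $n$ must be at least of order $kM/3^k$) and let the $M$ observables be the weight-$k$ Pauli strings $\tilde Q$ --- tensor products of $\sigma^X,\sigma^Y,\sigma^Z$ supported on all $k$ sites of one block. Draw $X=(j,Q,s)$ uniformly, with $j\in[L]$, $Q$ one of the $3^k$ such Paulis on block $j$, and $s\in\{+1,-1\}$, and set $\rho_X=2^{-n}(\mathbb{I}+4\epsilon\, s\,\tilde Q)$. Since $\|\tilde Q\|_\infty=1$ this is a valid density matrix whenever $\epsilon\le 1/4$, and trace-orthogonality of Pauli operators gives $\mathrm{tr}(O_i\rho_X)=4\epsilon s$ if $O_i=\tilde Q$ and $0$ otherwise. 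Hence, on the success event, the unique index $i$ with $|\hat o_i|\ge 3\epsilon$ reveals $(j,Q)$, and the sign of $\hat o_i$ reveals $s$; the decoder is exact.

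For the single-measurement bound I use $I(X:b)\le\mathbb{E}_X\,\chi^2(p_X\,\|\,\bar p)=\sum_b\mathrm{Var}_X[p_X(b)]/\bar p(b)$, where $p_X(b)=\mathrm{tr}(E_b\rho_X)$ for the POVM $\{E_b\}$ of the measurement and $\bar p=\mathbb{E}_X p_X$. Because $s$ is independent and mean-zero, the perturbation averages out, $\mathbb{E}_X\rho_X=\mathbb{I}/2^n$, so $\bar p(b)=\mathrm{tr}(E_b)/2^n$ and $\mathrm{Var}_X[p_X(b)]=(4\epsilon/2^n)^2\,\mathbb{E}_{j,Q}\,\mathrm{tr}(E_b\tilde Q)^2$. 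Refining a POVM only increases mutual information and keeps a local measurement local, so I may assume each single-qubit POVM element is rank one; then $E_b=w_b|\psi_b\rangle\!\langle\psi_b|$ with $|\psi_b\rangle=\bigotimes_i|\phi_i\rangle$ a product state and $\mathrm{tr}(E_b\tilde Q)^2=w_b^2\prod_{i\in\mathrm{supp}(\tilde Q)}\langle\phi_i|Q_i|\phi_i\rangle^2$. Summing over the $3^k$ block Paulis factorizes qubit by qubit into $\prod_i\big(\sum_{\sigma\in\{X,Y,Z\}}\langle\phi_i|\sigma|\phi_i\rangle^2\big)=\prod_i|\vec r_i|^2\le 1$, each single-qubit Bloch vector having length $\le 1$; hence $\sum_{j,Q}\mathrm{tr}(E_b\tilde Q)^2\le L\,w_b^2$. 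Dividing by $M=L\,3^k$ and by $\bar p(b)=w_b/2^n$, summing over $b$, and using $\sum_b w_b=2^n$ yields $I(X:b)\le 16\epsilon^2/3^k$, which is exactly what the skeleton above needs.

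The crux is this single-measurement estimate, and its three delicate points are: (i) the random sign $s$ must be used to center the mixture so that $\bar p(b)$ carries no signal and only the variance survives; (ii) one must reduce to rank-one \emph{product} POVM elements before exploiting structure, since it is precisely the product form of $|\psi_b\rangle$ that collapses the sum over the $3^k$ Paulis of a block to a single factor $\le 1$, and this is where the hypothesis of \emph{local} measurements enters essentially --- an entangled single-copy measurement could probe several Pauli directions of a block at once, and the clean $3^k$ scaling need not persist; (iii) one must choose $L\asymp M/3^k$ disjoint blocks inside a large enough register so that both $\log M$ and $3^k$ appear in the final bound. Everything else --- conditional independence of the outcomes given $X$, the chain rule, and Fano --- is routine. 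Theorem~\ref{thm:main2Cl} follows from the same skeleton, replacing the per-block Pauli family by a single rank-one observable rescaled so that $\mathrm{tr}(O^2)$ is the free parameter and bounding the variance of an arbitrary single-copy measurement by a term proportional to $\epsilon^2/\mathrm{tr}(O^2)$.
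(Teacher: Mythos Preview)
Your proof is correct and takes a genuinely different route from the paper's. Both arguments reduce to Fano's inequality plus a per-measurement mutual-information bound of order $\epsilon^2/3^k$, and both ultimately rest on the fact that a product-state measurement can probe only one Pauli direction per qubit. The mechanics, however, differ in an interesting way.

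The paper introduces a third party (``Loki'') who applies a Haar-random tensor-product unitary $U=U^{(1)}\otimes\cdots\otimes U^{(n)}$ to every copy and reveals $U$ only after Bob has measured. Alice's codebook is $\rho_i=(\mathbb{I}+3\epsilon P_i)/2^n$ for $M$ arbitrary $k$-local Paulis, and the $3^{-k}$ factor emerges from the single-qubit Haar identity $\mathbb{E}_U[(U|v\rangle\langle v|U^\dagger)^{\otimes 2}]=(\mathbb{I}+S)/6$ applied on each of the $k$ active sites. You dispense with Loki entirely: by packing all $3^k$ full-weight Paulis onto each of $L=\lceil M/3^k\rceil$ disjoint $k$-qubit blocks, your hard instance is already isotropic with respect to any product measurement, and the $3^{-k}$ factor comes directly from the Bloch-vector identity $\sum_{\sigma\in\{X,Y,Z\}}\langle\phi|\sigma|\phi\rangle^2=|\vec r|^2\le 1$.

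What each approach buys: your argument is more elementary (no Haar integrals) and in fact establishes something slightly stronger in this local-measurement setting --- the lower bound holds even when the $M$ observables are fixed and known \emph{before} measurement, not only in the ``revealed afterwards'' regime the paper stresses. The price is that you need a large register, $n\ge k\lceil M/3^k\rceil$, to host the disjoint blocks, whereas the paper's construction works whenever $M\le 3^k\binom{n}{k}$, a much looser constraint on system size. The paper's Loki device also unifies the Clifford and Pauli lower bounds under one randomization mechanism; your closing remark that Theorem~\ref{thm:main2Cl} follows ``from the same skeleton'' is right in spirit, but there a global Haar-random rotation (or an equivalent averaging) really is needed, since arbitrary single-copy measurements lack the per-qubit directional blindness you exploit here.
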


We refer to Supplementary Section \ref{sec:proofthm2} (Clifford) and \ref{sec:proofthm2P} (Pauli) for further context, details and proofs.
In the random Pauli basis measurement setting, classical shadows provably saturate this lower bound only for tensor product observables. For general $k$-local observables, there is a small discrepancy between $4^k$ (upper bound) and $3^k$ (lower bound).

\subsection{Predicting nonlinear functions with classical shadows}

Feature prediction with classical shadows readily extends beyond the linear case. Here, we shall focus on quadratic functions, but the procedure and analysis readily extend to higher order polynomials.
Every quadratic function in an unknown state $\rho$ can be recast as a linear function acting on the tensor product $\rho \otimes \rho$:
\begin{equation}
\hat{o}_i = \mathrm{tr} \left( O_i \rho \otimes \rho \right) \quad 1 \leq i \leq M. \label{eq:quadratic-functions}
\end{equation}
 An immediate generalization of linear feature prediction with classical shadows suggests the following procedure. Take two independent snapshots $\hat{\rho}_1,\hat{\rho}_2$ of the unknown state $\rho$ and set
\begin{equation}
\hat{o}_i = \mathrm{tr} \left( O_i \hat{\rho}_1 \otimes \hat{\rho}_2 \right) \quad \text{such that} \quad \mathbb{E} \hat{o}_i = \mathrm{tr} \left( O_i \mathbb{E} \hat{\rho}_1 \otimes \mathbb{E} \hat{\rho}_2 \right) = \mathrm{tr} \left( O_i \rho \otimes \rho \right) = o_i.
\end{equation}
This random variable is designed to yield the correct target function in expectation. Similar to linear function prediction we can boost convergence to this desired target by forming empirical averages.
To make the best of use of $N$ samples, we average over all $N (N-1)$ (distinct) pairs:
\begin{equation}
\hat{o}_i(N,1)= \frac{1}{N(N-1)}\sum_{j \neq l}\mathrm{tr} \left( O_i \hat{\rho}_j \otimes \hat{\rho}_l \right). \label{eq:quadratic-estimator}
\end{equation}
This idea provides a systematic approach for constructing estimators for nonlinear (polynomial) functions.
Estimators of this form always yield the desired target in expectation.
For context, we point out that the estimator \eqref{eq:quadratic-estimator} closely resembles the sample variance, while estimators of higher order polynomials are known as \emph{U-statistics} \cite{hoeffding1992class}.
Fluctuations of $\hat{o}_i (N,1)$ around its desired expectation are once more controlled by the variance. U-statistics estimators are designed to minimize this variance and therefore considerably boost the rate of convergence.

\begin{lemma} \label{lem:symm}
Fix $O$ and a sample size $N$. Then, the variance of the U-statistics estimator \eqref{eq:quadratic-estimator} obeys
\begin{align}
    \Var[\hat{o}(N, 1)] \leq
    \frac{2}{N} \left(
      \Var[\Tr(O \hat\rho_1 \otimes \rho) ]
     + \Var[\Tr(O \rho \otimes \hat\rho_1) ]+\frac{1}{N}\Var[\Tr(O \hat\rho_1 \otimes \hat\rho_2)] \right). \label{eq:varsimplebound}
\end{align}
\end{lemma}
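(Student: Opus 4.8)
The plan is to recognize $\hat{o}(N,1)$ in \eqref{eq:quadratic-estimator} as a degree-two U-statistic \cite{hoeffding1992class} in the i.i.d.\ snapshots $\hat\rho_1,\dots,\hat\rho_N$ with kernel $g(X,Y)=\Tr(O\,X\otimes Y)$, and to control its variance through the Hoeffding (ANOVA) decomposition of $g$. Because $\E[\hat\rho]=\rho$, the conditional expectations of the kernel are precisely the two objects appearing on the right-hand side of \eqref{eq:varsimplebound}: set $g_1(X)=\E_Y[g(X,Y)]=\Tr(O\,X\otimes\rho)$, $g_2(Y)=\E_X[g(X,Y)]=\Tr(O\,\rho\otimes Y)$, $\mu=\E[g]=\Tr(O\,\rho\otimes\rho)$, and define the degenerate remainder $r(X,Y)=g(X,Y)-g_1(X)-g_2(Y)+\mu$, which by construction has vanishing conditional mean in each argument, $\E_X[r(X,Y)]=\E_Y[r(X,Y)]=0$.

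First I would substitute this decomposition into \eqref{eq:quadratic-estimator}. The first-order pieces telescope, since for fixed $j$ there are $N-1$ choices of $l\neq j$, so
\[
\hat{o}(N,1)-\mu = \frac1N\sum_{j=1}^N\big[(g_1(\hat\rho_j)-\mu)+(g_2(\hat\rho_j)-\mu)\big]+\frac{1}{N(N-1)}\sum_{j\neq l}r(\hat\rho_j,\hat\rho_l).
\]
The key structural step, and the only place where genuine care rather than mere computation is needed, is the covariance bookkeeping enforced by the zero-conditional-mean property of $r$: it makes the linear part and the degenerate part uncorrelated, and it makes any two degenerate summands $r(\hat\rho_j,\hat\rho_l)$, $r(\hat\rho_{j'},\hat\rho_{l'})$ whose index pairs overlap in at most one position uncorrelated as well. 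Hence $\Var[\hat o(N,1)]$ splits into the variance of the linear part plus the variance of the degenerate part, and in the latter only the $2N(N-1)$ "diagonal" pairs $(j',l')=(j,l)$ and "transposed" pairs $(j',l')=(l,j)$ survive.

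Then I would bound the two pieces separately. The linear part is a sum of $N$ i.i.d.\ copies of $(g_1(\hat\rho_1)-\mu)+(g_2(\hat\rho_1)-\mu)$, so its variance equals $\tfrac1N\Var[g_1(\hat\rho_1)+g_2(\hat\rho_1)]\le\tfrac2N\big(\Var[\Tr(O\hat\rho_1\otimes\rho)]+\Var[\Tr(O\rho\otimes\hat\rho_1)]\big)$ by $\Var[A+B]\le2\Var[A]+2\Var[B]$. For the degenerate part, each of the $2N(N-1)$ surviving terms contributes at most $\E[r(\hat\rho_1,\hat\rho_2)^2]$ — the transposed ones by Cauchy--Schwarz together with the equality in law of $r(\hat\rho_1,\hat\rho_2)$ and $r(\hat\rho_2,\hat\rho_1)$ — giving a bound $\tfrac{2}{N(N-1)}\E[r(\hat\rho_1,\hat\rho_2)^2]$. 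Finally, since the three Hoeffding components $g_1(\hat\rho_1)-\mu$, $g_2(\hat\rho_2)-\mu$ and $r(\hat\rho_1,\hat\rho_2)$ are pairwise orthogonal and sum to $g(\hat\rho_1,\hat\rho_2)-\mu$, one has $\E[r(\hat\rho_1,\hat\rho_2)^2]=\Var[\Tr(O\hat\rho_1\otimes\hat\rho_2)]-\Var[\Tr(O\hat\rho_1\otimes\rho)]-\Var[\Tr(O\rho\otimes\hat\rho_1)]\le\Var[\Tr(O\hat\rho_1\otimes\hat\rho_2)]$. Combining the two estimates yields \eqref{eq:varsimplebound} (after the routine step of controlling the $N/(N-1)$ prefactor on the last term). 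The main obstacle is thus not any single hard inequality but keeping the covariance bookkeeping in the Hoeffding expansion straight; everything else is elementary.
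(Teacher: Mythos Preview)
Your argument is correct and complete; the Hoeffding/ANOVA decomposition you use is the textbook route for U-statistic variances and all the orthogonality claims you invoke are valid. The paper, however, does not go through the Hoeffding decomposition. Instead it first symmetrizes the observable, replacing $O$ by $O_s=(O+SOS)/2$ with $S$ the swap, so that the estimator becomes a genuine (symmetric-kernel) U-statistic $\binom{N}{2}^{-1}\sum_{i<j}\Tr(O_s\,\hat\rho_i\otimes\hat\rho_j)$. It then expands the variance directly as a double sum over unordered pairs and counts how many terms share zero, one, or two indices, obtaining the exact identity
\[
\Var[\hat o(N,1)]=\binom{N}{2}^{-1}\Big(2(N-2)\,\Var[\Tr(O_s\,\hat\rho_1\otimes\rho)]+\Var[\Tr(O_s\,\hat\rho_1\otimes\hat\rho_2)]\Big),
\]
and only at the end uses $\Var[(A+B)/2]\le\tfrac12(\Var[A]+\Var[B])$ to undo the symmetrization and recover the two one-sided variances in $O$.

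The trade-off: your route builds the orthogonality in from the start and never needs to symmetrize, at the cost of tracking the extra ``transposed'' covariance $\E[r(\hat\rho_1,\hat\rho_2)r(\hat\rho_2,\hat\rho_1)]$ via Cauchy--Schwarz. The paper's route is more bare-hands but delivers an exact intermediate formula before any inequality is applied. Both land on the same bound up to the $1/(N(N-1))$ versus $1/N^2$ housekeeping you flag; in fact the paper's own derivation produces the coefficient $4/N^2$ on the degenerate term rather than the $2/N^2$ displayed in the lemma statement, so your $2/(N(N-1))$ is already at least as sharp as what the paper actually proves.
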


We emphasize that this variance decreases with the number of samples $N$. This sets the stage for successful quadratic function prediction with classical shadows.
Similar to the linear case, we will not use all samples to construct a single U-statistics estimator.
Instead, we construct $K$ of them and form their median:
\begin{align}
\hat{o}_i(N,K) =& \mathrm{median}\left\{ \hat{o}_i^{(1)}(N,1),\ldots,\hat{o}_i^{(K)}(N,1) \right\}, \quad \textrm{where} \nonumber \\ \hat{o}_i^{(k)}(N,1) =& \tfrac{1}{N(N-1)}\sum_{\substack{j \neq l \\ j,l \in \{N(k-1)+1, \ldots , Nk \}}} \mathrm{tr} \left( O_i \hat{\rho}_j \otimes \hat{\rho}_l\right)\quad \text{for $1 \leq k \leq K$}. \label{eq:quadratic-median}
\end{align}
This renders the entire estimation procedure more robust to outliers and exponentially suppresses failure probabilities.

\begin{theorem} \label{thm:general-quadratic}
Fix a measurement primitive $\mathcal{U}$, a collection $O_1,\ldots,O_M$ of (quadratic) target functions and accuracy parameters $\epsilon,\delta \in [0,1]$.
Set
\begin{align}
K =& 2 \log (2M/\delta) \quad \text{and} \nonumber \\
N =& \frac{34}{\epsilon^2}\max_{1 \leq i \leq M} 8 \times \max\left(\Var[\Tr(O_i \rho \otimes \hat\rho_1)], \Var[\Tr(O_i \hat{\rho}_1 \otimes \rho)], \sqrt{\Var[\Tr(O_i \hat{\rho}_1 \otimes \hat{\rho}_2)]}\right).
\end{align}
Then, a collection of $NK$ independent classical shadows allow for accurately predicting all quadratic features via the median of U-statistics estimators \eqref{eq:quadratic-median}:
\begin{equation}
\left| \hat{o}_i (N,K) - \mathrm{tr} \left( O_i \rho \otimes \rho \right) \right| \leq \epsilon \quad \text{for all $1 \leq i \leq M$}
\end{equation}
with probability at least $1-\delta$.
\end{theorem}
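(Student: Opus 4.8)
The plan is to mirror the proof of Theorem~\ref{thm:general}, with the U-statistics variance bound of Lemma~\ref{lem:symm} playing the role that Lemma~\ref{lem:variance} played in the linear case. First I would observe that each batch estimator $\hat{o}_i^{(k)}(N,1)$ in \eqref{eq:quadratic-median} is built from a block of $N$ classical shadows that is disjoint from the blocks used by the other batches, so the $K$ random variables $\hat{o}_i^{(1)}(N,1),\ldots,\hat{o}_i^{(K)}(N,1)$ are independent and identically distributed; moreover $\E[\hat\rho_j]=\rho$ together with independence of $\hat\rho_j$ and $\hat\rho_l$ for $j\neq l$ gives $\E[\hat{o}_i^{(k)}(N,1)] = \Tr(O_i\rho\otimes\rho)$, the correct target. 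By Lemma~\ref{lem:symm}, each of these estimators has variance at most $\tfrac{2}{N}\big(\Var[\Tr(O_i\hat\rho_1\otimes\rho)] + \Var[\Tr(O_i\rho\otimes\hat\rho_1)] + \tfrac1N\Var[\Tr(O_i\hat\rho_1\otimes\hat\rho_2)]\big)$.

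Next I would substitute the prescribed value of $N$. Writing $V = \max\big(\Var[\Tr(O_i\rho\otimes\hat\rho_1)],\,\Var[\Tr(O_i\hat\rho_1\otimes\rho)],\,\sqrt{\Var[\Tr(O_i\hat\rho_1\otimes\hat\rho_2)]}\big)$, the choice of $N$ guarantees $N\geq 272\,V/\epsilon^2$. The two ``linear'' variance terms then contribute at most $\tfrac{2}{N}\cdot 2V \leq \epsilon^2/68$, while the cross term contributes at most $\tfrac{2}{N^2}V^2 \leq 2\epsilon^4/272^2$, which is itself $\leq \epsilon^2/68$ because $\epsilon\in[0,1]$. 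Hence $\Var[\hat{o}_i^{(k)}(N,1)] \leq \epsilon^2/34$. This step is precisely where the appearance of $\sqrt{\Var[\Tr(O_i\hat\rho_1\otimes\hat\rho_2)]}$, rather than the variance itself, in the definition of $N$ is needed, and where the restriction $\epsilon\leq 1$ is used to absorb the term that is quadratic in the shadow variances.

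Finally I would invoke the same median-of-means tail bound used in the proof of Theorem~\ref{thm:general}~\cite{nemirovski1983medianmeans,jerrum1986medianmeans}: for i.i.d.\ random variables $Y_1,\ldots,Y_K$ with common mean $\mu$ and variance at most $\epsilon^2/34$, Chebyshev's inequality gives $\Pr[|Y_k-\mu|\geq\epsilon]\leq 1/34 < 1/2$ for each $k$, and a Chernoff bound on the number of batches that deviate yields $\Pr\big[|\median\{Y_k\}-\mu|\geq\epsilon\big]\leq 2e^{-K/2}$; this proof only uses independence, correctness of the mean, and the variance bound, so it applies verbatim with the U-statistics estimators in place of sample means. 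With $K = 2\log(2M/\delta)$ this probability is $\delta/M$, and a union bound over $i=1,\ldots,M$ finishes the argument. The genuinely substantive ingredient is Lemma~\ref{lem:symm}; everything downstream is the same bookkeeping as in the linear case, and the only new wrinkle — which I expect to be the main, though minor, obstacle — is controlling the $\tfrac1N\Var[\Tr(O_i\hat\rho_1\otimes\hat\rho_2)]$ cross term against the target $\epsilon^2/34$.
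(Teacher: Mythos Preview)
Your proposal is correct and follows exactly the same approach as the paper's proof: combine the U-statistics variance bound from Lemma~\ref{lem:symm} with the median-of-means tail bound, verify that the prescribed $N$ forces each batch variance below $\epsilon^2/34$, and finish with a union bound over the $M$ targets. In fact you supply more detail than the paper does---the paper simply asserts that ``the parameter $N$ is chosen to ensure that each $\hat{o}_i^{(k)}(N,1)$ has variance at most $\epsilon^2/34$,'' whereas you actually carry out the arithmetic and correctly identify the role of the square root and of the hypothesis $\epsilon\le 1$ in handling the $\tfrac{1}{N}\Var[\Tr(O_i\hat\rho_1\otimes\hat\rho_2)]$ term.
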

\begin{proof}
The proof is similar to the argument for linear prediction. We combine the bound on the variance of U-statistics estimators from Lemma~\ref{lem:symm} with a rigorous performance guarantee for median estimation \cite{nemirovski1983medianmeans,jerrum1986medianmeans}. Let $Z$ be a random variable with variance at most $\epsilon^2 / 34$. Then,
setting $\hat{\mu}=\mathrm{median}\left\{ Z_1,\ldots,Z_k \right\}$
produces an estimator that obeys
$\mathrm{Pr} \left[ \left| \hat{\mu}-\mathbb{E} \left[ Z \right] \right| \geq \epsilon \right] \leq 2 \mathrm{e}^{-K/2}$.
The parameter $N$ is chosen ensure that each $\hat{o}^{(k)}_i(N, 1)$ has variance at most $\epsilon^2 / 34$.
The parameter $K$ is chosen such that each probability of failure is at most $\delta/M$.
The advertised statement then follows from taking a union bound over all $M$ target estimations.
\end{proof}

\begin{remark}[Constants in Theorem~\ref{thm:general-quadratic}]
The numerical constants featuring in $N$ and $K$ result from a conservative (worst case) argument that is designed to be simple, not tight. We expect that the actual constants are \emph{much smaller} in practice.
\end{remark}

Theorem~\ref{thm:general-quadratic} is a general statement that provides upper bounds for the sample complexity associated with predicting quadratic target functions:
\begin{equation} \label{eq:totalnonlinear}
N_{\mathrm{tot}} = \mathcal{O}\left( \frac{\log(M)}{\epsilon^2} \max_{1 \leq i \leq M} \max\left(\Var[\Tr(O_i \rho \otimes \hat\rho_1)], \Var[\Tr(O_i \hat{\rho}_1 \otimes \rho)], \sqrt{\Var[\Tr(O_i \hat{\rho}_1 \otimes \hat{\rho}_2)]} \right) \right)
\end{equation}
independent randomized measurements suffice to accurately predict a collection of $M$ nonlinear target functions $\Tr(O_i \rho \otimes \rho)$. This sampling rate once more depends on the measurement primitive and it is instructive to consider concrete examples.

\paragraph{Example 1: Random Pauli measurements}

We first discuss the practically more relvant example for today's NISQ era: classical shadows constructed from random single-qubit Pauli basis measurements. This measurement primitive remains well-suited for predicting \emph{local} quadratic features $\mathrm{tr}(O \rho \otimes \rho)$. Suppose that $O$ acts nontrivially on $k$ qubits in the first state copy and on $k$ qubits in the second state copy. Thus, when viewed as an observable for a $2n$-qubit system, $O$ is $2k$-local.
A technical argument shows that  the maximum of the variances in Equation~\eqref{eq:totalnonlinear} is bounded by $4^k$. We emphasize that this scaling is much better than the naive guess $4^{2k}$ -- one of the key advantages of U-statistics.
Hence we only need a total number of $N_{\mathrm{tot}} = \mathcal{O}(\log(M) 4^{k} / \epsilon^2)$ random Pauli basis measurements to predict $M$ quadratic functions $\Tr(O_i \rho \otimes \rho)$.
An important concrete application of this procedure is the prediction of subsystem R\'enyi-2 entanglement entropies.

\paragraph{Example 2: Random Clifford measurements}

Theorem~\ref{thm:general-quadratic} also applies to the global Clifford measurement primitive.
There, the maximum of the variances in Equation~\eqref{eq:totalnonlinear} can be bounded by $\sqrt{9+6/2^n} \Tr(O_i^2) \simeq 3 \Tr (O_i^2)$.
Hence we only need a total number of $N_{\mathrm{tot}} = \mathcal{O}(\log(M) \max_i \Tr(O_i^2) / \epsilon^2)$ random Clifford basis measurements to predict $M$ quadratic functions $\Tr(O_i \rho \otimes \rho)$.
While a clean extension of linear feature prediction with Clifford basis measurements, the applicability of this result seems somewhat limited. Interesting global quadratic features tend to have prohibitively large Hilbert-Schmidt norms.
The purity $\mathrm{tr}(\rho^2)$ provides an instructive non-example. It can be written as $\mathrm{tr} \left( S \rho \otimes \rho \right)$, where $S|\psi \rangle \otimes | \phi \rangle = | \phi \rangle \otimes | \psi \rangle$ denotes the swap operator. Alas,  $\mathrm{tr}(S^2)=\mathrm{tr}(\mathbb{I})=2^n$ which scales exponentially in the number of qubits.
Nonetheless,
quadratic feature prediction with Clifford measurements is by no means useless.
For instance, it can help provide statistical \textit{a posteriori} guarantees on the quality of linear feature prediction --- for example, by estimating sample variances to construct confidence intervals.

\section{Additional numerical experiments}
\label{sec:morenum}

In this section we report additional numerical experiments that demonstrate the viability of linear feature prediction with classical shadows. We focus on the Clifford basis measurement primitive, \textit{i.e.} applying a random Clifford circuit to $\rho$ and then measuring in the computational basis.

\subsection{Direct fidelity estimation for the toric code ground state}
\label{sec:toricexp}

\begin{figure}[t]
    \centering
    \includegraphics[width=1.0\textwidth]{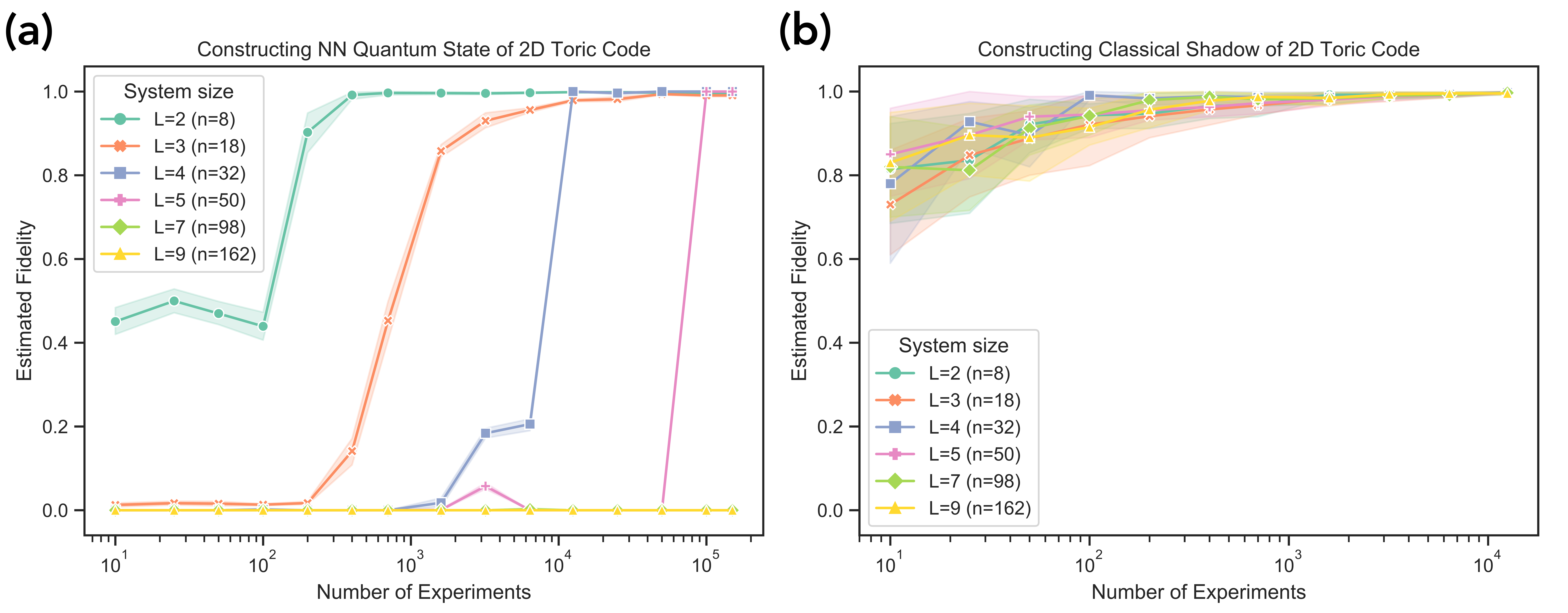}
    \caption{\emph{Comparison between classical shadow and neural network tomography (NNQST); toric code.} \\
\textbf{(a)} \emph{(Left)}\textbf{:} Number of measurements required for neural network tomography to identify a particular toric-code ground state. We use classical fidelity for NNQST, which is an upper bound for quantum fidelity.\\
\textbf{(b)}\emph{(Right)}\textbf{:} Performance of classical shadows for the same problem. We use quantum fidelity for classical shadows.
The shaded regions are the standard deviation of the estimated fidelity over ten runs.
}
    \label{fig:tomoToric}
\end{figure}

In the main text, we have considered direct fidelity estimation for GHZ states and compared it with neural network quantum state tomography (NNQST).
While highly instructive from a theoretical perspective, GHZ states comprised of 100 qubits are very fragile and challenging to implement in practice.
To conduct experiments for more physical target states, we consider \emph{Toric code ground states} \cite{kitaev2002toric}.
Not only are they the most prominent example of a topological quantum error correcting code and thus highly relevant for quantum computing devices. They also correspond to ground states of a Hamiltonian: $H=-\sum_v A_v - \sum_p B_p$, where $A_v$ and $B_p$ denote vertex- and plaquette operators\footnote{$A_v$ is the product of four Pauli-$X$ operators around a vertex $v$, while $B_p$ is the product of four Pauli-$Z$ operators around the plaquette $p$.}.
The ground space of $H$ is four-fold degenerate and we select the superposition of all closed-loop configurations ($| \psi \rangle \propto \sum_{S:\textrm{ closed loop}} |S \rangle$) as a test state for both classical shadows and NNQST: how many measurement repetitions are required to accurately identify this toric code ground state with high fidelity? The results are shown in Supplementary Figure~\ref{fig:tomoToric}.
Neural network tomography based on a deep generative model seems to require a number of samples that scales unfavorably in the system size $n$ (left). In contrast, fidelity estimation with classical shadows is completely independent of the system size.
The difficulty of NNQST in learning 2D toric code may be related to some observed failures of deep learning \cite{shalev2017failures} for learning patterns with combinatorial structures.
In Supplementary Section~\ref{sec:detail}, we provide
further evidence for potential difficulties when using machine learning approaches to reconstruct some simple quantum states due to a well-known computational hardness conjecture.

\subsection{Witnesses for tripartite entanglement}

\begin{figure}[t]
    \centering
    \includegraphics[width=1.0\textwidth]{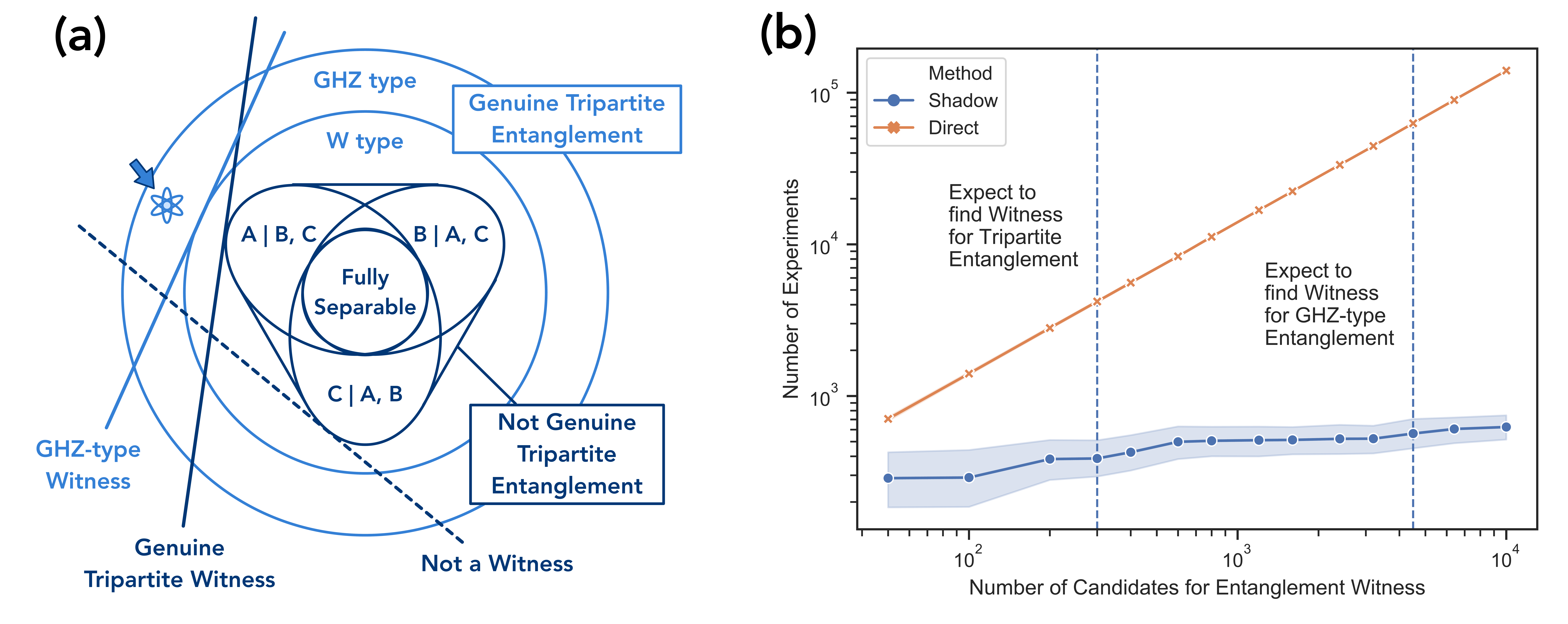}
    \caption{\emph{Detection of GHZ-type entanglement for 3-qubit states.} \\
\textbf{(a)} \emph{(Left)}\textbf{:} Schematic illustration of 3-partite entanglement. Entanglement witnesses are linear functions that separate part of one entanglement class from all other classes. \\
\textbf{(b)} \emph{(Right)}\textbf{:} Number of entanglement witnesses vs.\ number of experiments required to accurately estimate all of them. The dashed lines represent the expected number of (random) entanglement witnesses required to detect genuine three-partite entanglement and GHZ-type entanglement in a randomly rotated GHZ state. The shaded region is the standard deviation of the required number of experiments over ten independent repetitions of the entire setup.}
    \label{fig:witness}
\end{figure}

Entanglement is at the heart of virtually all quantum communication and cryptography protocols and an important resource for quantum technologies in general.
This renders the task of detecting entanglement important both in theory and practice \cite{friis2019entanglement,guehne2009entanglement}. While bipartite entanglement is comparatively well-understood, multi-partite entanglement has a much more involved structure.
Already for $n=3$ qubits, there is a variety of inequivalent entanglement classes. These include fully-separable, as well as bi-separable states, $W$-type states and finally GHZ-type states. The relations between these classes are summarized in Supplementary Figure~\ref{fig:witness} and we refer to \cite{acin2001classification} for a complete characterization. Despite this increased complexity, entanglement witnesses remain a simple and useful tool for testing which class a certain state $\rho$ belongs to.
However, any given entanglement witness only provides a one-sided test -- see Supplementary Figure~\ref{fig:witness} (left) for an illustration -- and it is often necessary to compute multiple witnesses for a definitive answer.

Classical shadows based on random Clifford measurements can considerably speed up this search: according to Theorem~\ref{thm:main} a classical shadow of moderate size allows for checking an entire list of fixed entanglement witnesses simultaneously. Supplementary Figure~\ref{fig:witness} (right) underscores the economic advantage of such an approach over measuring the individual witnesses directly.
Directly measuring $M$ different entanglement witnesses requires a number of  quantum measurements that scales (at least) linearly in $M$. In contrast, classical shadows get by with $\log (M)$-many measurements only.

More concretely, suppose that the state to be tested is a local, random unitary transformation of the GHZ state. Then, this state is genuinely tripartitely entangled and moreover belongs to the GHZ class.
The dashed vertical lines in Supplementary Figure~\ref{fig:witness} (right) denote the expected number of (randomly selected) witnesses required to detect genuine tripartite entanglement (first) and GHZ-type entanglement (later).
From the experiment, we can see that classical shadows achieve these thresholds with an exponentially smaller number of samples than the naive direct method.
Finally, classical shadows are based on random Clifford measurements and do not depend on the structure of the concrete witness in question.
In contrast, direct estimation crucially depends on the concrete witness in question and may be considerably more difficult to implement.

\section{Related work} \label{sec:related-work}

\paragraph{General quantum state tomography}

The task of reconstructing a full classical description --- the density matrix $\rho$ --- of a $d$-dimensional quantum system from experimental data is one of the most fundamental problems in quantum statistics, see e.g.\ \cite{hradil1997tomography,blumekohout2010tomography,gross2010compressed,gross2013focus} and references therein. Sample-optimal protocols, i.e.\ estimation techniques that get by with a minimal number of measurement repetitions, have only been developed recently.
Information-theoretic bounds assert that of order $\mathrm{rank}(\rho)d$ state copies are necessary to fully reconstruct $\rho$  \cite{haah2017sample}.
Constructive protocols \cite{wright2016tomography,haah2017sample} saturate this bound, but require entangled circuits and measurements that act on all state copies simultaneously. More tractable single-copy measurement procedures require of order $\mathrm{rank}(\rho)^2 d$ measurements \cite{haah2017sample}. This more stringent bound is saturated by low rank matrix recovery \cite{flammia2012quantum,kueng2017lowrank,kueng2016low} and projected least squares estimation \cite{guta2018fast,sugiyama2013tomography}.

These results highlight an exponential bottleneck for tomography protocols that work in full generality: at least $d=2^n$ copies of an unknown $n$-qubit state are necessary. This exponential scaling extends to the computational cost associated with storing and processing the measurement data.

\paragraph{Matrix product state tomography}

Restricting attention to highly structured subsets of quantum states sometimes allows for overcoming the exponential bottleneck that plagues general tomography.
Matrix product state (MPS) tomography \cite{cramer2010efficient} is the most prominent example for such an approach. It only requires a polynomial number of samples, provided that the underlying quantum state is well approximated by a MPS with low bond dimension. In quantum many-body physics this assumption is often justifiable \cite{lanyon2017efficient}. However, MPS representations of general states have exponentially large bond dimension. In this case, MPS tomography offers no advantage over general tomography.
Similar ideas could also be extended to multi-scale entangled states (MERA) tomography \cite{landon2012practical}.

\paragraph{Neural network tomography}

Recently, machine learning has also been applied to the problem of predicting features of a quantum systems.
These approaches construct a classical representation of the quantum system by means of a deep neural network that is trained by feeding in quantum measurement outcomes.
Compared to MPS tomography, neural network tomography may be more broadly applicable \cite{gao2017efficient, torlai2018neural, carrasquilla2019reconstructing}.
However, the actual class of systems that can be efficiently represented, reconstructed and manipulated is still not well understood.

\paragraph{Compressed classical description of quantum states}

To circumvent the exponential scaling in representing quantum states, Gosset and Smolin \cite{gosset2018compressed} have proposed a stabilizer sketching approach that compresses a classical description of quantum states to an accurate sketch of subexponential size.
This approach bears some similarity with classical shadows based on random Clifford measurements.
However, stabilizer sketching requires a fully-characterized classical description of the state as an input.
So, it still suffers from an exponential scaling in the resources used in practice.
Recently, Paini and Kalev \cite{paini2019approximate} have proposed an approximate classical description of a quantum state that can estimate the expectation value of an observable from Haar-random single-qubit rotations followed by computational basis measurements.
They focus on estimating a single observable,
while we focus on estimating many observables simultaneously.
In our classical shadow approach, the Haar-random single-qubit rotations \cite{paini2019approximate} are replaced by
random single-qubit Clifford rotations, or -- equivalently -- measuring each qubit in a random Pauli basis.
This simplification may be viewed as a partial derandomization and works, because the (single-qubit) Clifford group forms a 3-design \cite{kueng2015qubit,zhu2017clifford,webb2015clifford}.

\paragraph{Direct fidelity estimation}

Direct fidelity estimation is a procedure that allows for predicting a single pure target fidelity $\langle \psi| \rho | \psi \rangle$ up to accuracy $\epsilon$.
The best-known technique is based on few Pauli measurements that are selected randomly using importance sampling \cite{flammia2011direct, da2011practical}.
The required number of samples depends on the target: it can range from a dimension-independent order of $1/\epsilon^2$
(if $| \psi \rangle$ is a stabilizer state) to roughly $2^n/\epsilon^4$ in the worst case.

\paragraph{Efficient estimation of local observables}

In quantum many-body physics, many interesting observables can be decomposed into local constituents. This renders the task of accurately predicting many local observables very important --- both in theory and practice.
A series of recent works
 \cite{cotler2019quantum, bonet2019nearly, jiang2019optimal, evans2019scalable} propose different measurement strategies to measure many local observables simultaneously.
All of them focus on estimating $k$-local Pauli observables up to accuracy $\epsilon$. This would directly translate  to an approximation error $2^k \epsilon$ for general $k$-local observables.
For some measurement schemes, this general error bound seems unavoidable. But, for certain strategies a careful analysis could lead to an improved performance.
The two works \cite{cotler2019quantum, bonet2019nearly}
are based on properly analyzing the commutation relations between the $k$-local Pauli observables of interest. Subsequently, one can group commuting observables together and measure them all at once.
Different from this more standardized strategy, \cite{jiang2019optimal} uses entangled Bell-basis measurements, and \cite{evans2019scalable} is based on randomized measurements to efficiently measure local observables.
The prior earlier works \cite{cotler2019quantum, bonet2019nearly} have worse performance compared to the more recent two \cite{jiang2019optimal, evans2019scalable}.
While the latter two procedures are seemingly different from prediction with classical shadows (Pauli measurements), the sample complexities associated with all three approaches are comparable.
Derandomizing classical shadows, however, could considerably reduce the number of measurements required. We will address such a substantial and practical improvement in upcoming work.

\paragraph{Shadow tomography}

Shadow tomography aims at simultaneously estimating the outcome probabilities associated with $M$ 2-outcome measurements up to accuaracy $\epsilon$: $p_i (\rho) = \mathrm{tr}(E_i \rho)$, where each $E_i$ is a positive semidefinite matrix with operator norm at most one \cite{aaronson2018shadow,brandao2019sdp,aaronson2019gentle}.
This may be viewed as a generalization of fidelity estimation.
The best existing result is due to Aaronson and Rothblum \cite{aaronson2019gentle}. They showed that
$
N= \tilde{\mathcal{O}} \left(\log (M)^2 \log (d)^2/\epsilon^8 \right)
$ copies of the unknown state suffice to achieve this task \footnote{The scaling symbol $\tilde{\mathcal{O}}$ suppresses logarithmic expressions in other problem-specific parameters.}.
Broadly speaking, their protocol is based on performing gentle 2-outcome measurements one-by-one and subsequently (partially) reversing the damage to the quantum state caused by the measurement. This task is achieved by explicit quantum circuits of exponential size that act on all copies of the unknown state simultaneously.
This rather intricate procedure bypasses the no-go result advertised in Theorem~\ref{thm:main2}
and results in a sampling rate that is independent of the 2-outcome measurements in question --- only their cardinality $M$ matters.

\section{Details regarding numerical experiments}
\label{sec:detail}

\subsection{Predicting quantum fidelities}

This numerical experiment considers classical shadows based on random Clifford measurements.
We exploit the Gottesman-Knill theorem for efficient classical computations. This well-known result states that Clifford circuits can be simulated efficiently on classical computers; see also \cite{aaronson2004improved} for an improved classical algorithm. This has allowed us to address rather large system sizes (more than 160 qubits).
To test the performance of feature prediction with classical shadows
we first have to simulate the (quantum) data acquisition phase. We do this by repeatedly executing the following (efficient) protocol:
\begin{enumerate}
    \item Sample a Clifford unitary $U$ from the Clifford group using the algorithm proposed in \cite{koenig2014efficiently}. This Clifford unitary is parameterized by $(\alpha, \beta, \gamma, \delta, r, s)$ which fully characterize its action on Pauli operators:
    \begin{equation}
    UP^X_j U^\dagger = (-1)^{r_j} \Pi_{i=1}^n (P^X_i)^{\alpha_{ji}} (P^Z_i)^{\beta_{ji}} \quad \textrm{and} \quad
    U P^Z_j U^\dagger = (-1)^{s_j} \Pi_{i=1}^n (P^X_i)^{\gamma_{ji}} (P^Z_i)^{\delta_{ji}}
    \end{equation}
    for all $j=1,\ldots,n$. Here,
     $P^X_j, P^Z_j$ are the Pauli $X$, $Z$-operators acting on the $j$-th qubit, and $\alpha_{ji}, \beta_{ji}, \gamma_{ji}, \delta_{ji}, r_j, s_j \in \{0, 1\}$.
    \item
    Given a unitary $U$ parameterized by $(\alpha, \beta, \gamma, \delta, r, s)$, we can apply $U$ on any stabilizer state by changing the stabilizer generators and the destabilizers as defined in \cite{aaronson2004improved}.
    \item A computational basis measurement can be simulated using the standard algorithm provided in \cite{aaronson2004improved}.
\end{enumerate}

Although originally designed for pure target states $|\psi_i \rangle \! \langle \psi_i|$, we can readily extend this strategy to mixed states $\rho=\sum_i p_i | \psi_i \rangle \! \langle \psi_i|$.
Operationally speaking, mixed states  arise from sampling from a pure state ensemble. This mixing process can be simulated efficiently on classical machines.

\begin{figure}[t]
    \centering
    \includegraphics[width=0.84\textwidth]{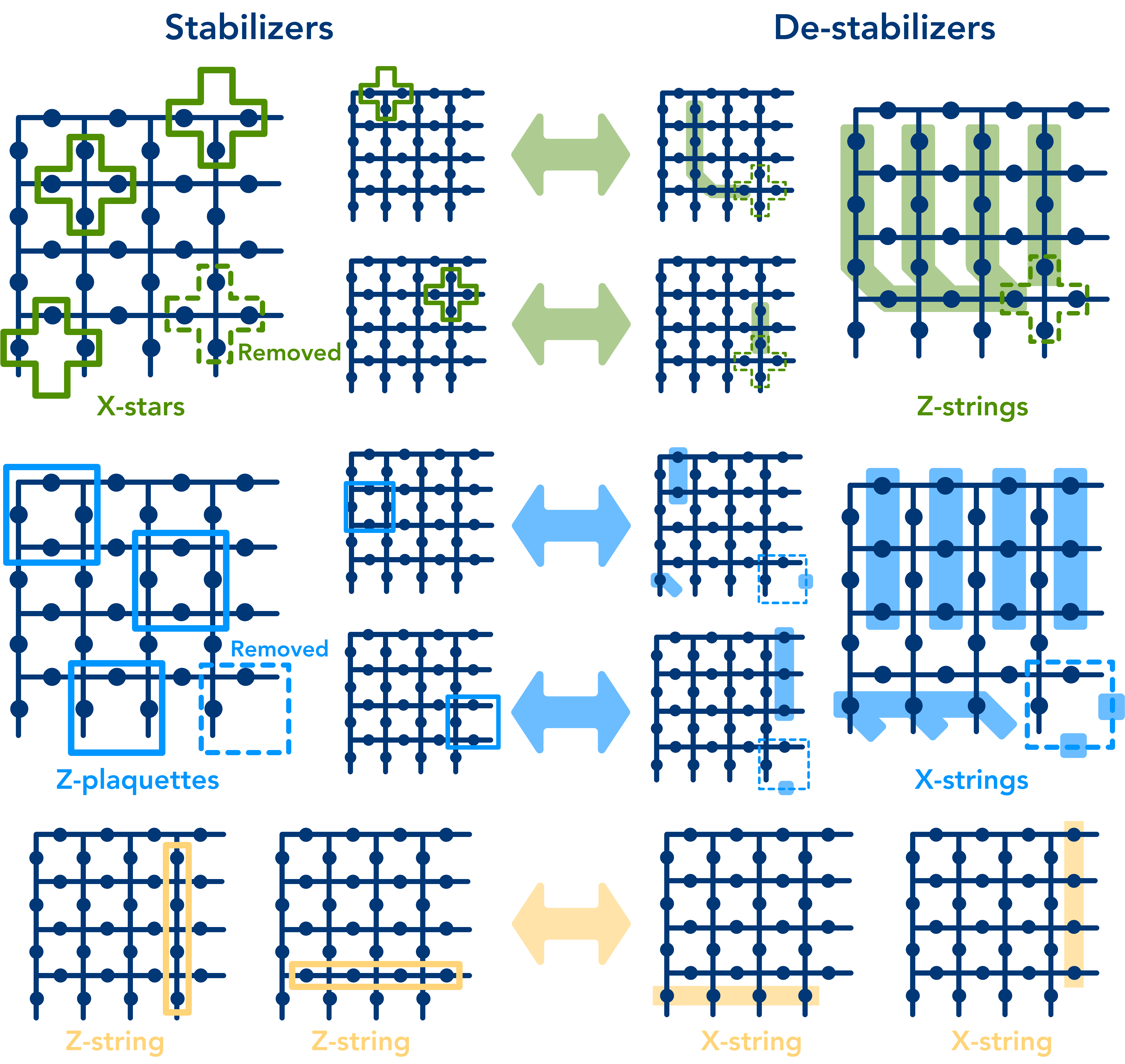}
    \caption{Stabilizers and de-stabilizers of the toric code that encodes $\ket{00}$.}
    \label{fig:tableau}
\end{figure}

For neural network quantum state tomography, we use the open-source code provided by the authors \cite{carrasquilla2019reconstructing}.
The main challenge is generating training data, i.e.\ simulating measurement outcomes.
For pure and noisy GHZ states, we use the tetrahedral POVM \cite{carrasquilla2019reconstructing}.
For the toric code ground state, we use the Psi2 POVM (which is a measurement in the computational ($Z$-) basis).
Note that measuring in the $Z$-basis is not a tomographically complete measurement, but we found machine learning models to perform better using Psi2. This is possibly because the pattern is much more obvious (closed-loop configurations) and the figure of merit used in NNQST is a classical fidelity.

A concrete algorithm for creating training data for pure GHZ states is included in the aforementioned open-source implementation of \cite{carrasquilla2019reconstructing}. It uses matrix product states to simulate quantum measurements efficiently.
The training data for noisy GHZ states is a slight modification of the existing code. With probability $1-p$, we sample a measurement outcome from the original
state $|\psi_{\mathrm{GHZ}}^+ \rangle = \tfrac{1}{\sqrt{2}} (|0 \rangle^{\otimes n} + |1 \rangle^{\otimes n})$.
And with probability $p$, we sample
a measurement outcome from
 $| \psi_{\mathrm{GHZ}}^-\rangle=\frac{1}{\sqrt{2}} (\ket{0}^{\otimes n} - \ket{1}^{\otimes n})$ (phase error).
Since the figure of merit is the fidelity with the pure GHZ state in both pure and noisy GHZ experiment, we reuse the implementation provided in \cite{carrasquilla2019reconstructing}.

Creating training data for toric code is somewhat more involved.
The goal is to sample a closed-loop configuration on a 2D torus uniformly at random.
This can again be done using classical simulations of stabilizer states \cite{aaronson2004improved}.
The main technical detail is to create a tableau that contains both the stabilizer and the de-stabilizer for the state in question. The rich structure of the toric code renders this task rather easy.
The stabilizers are the $X$-stars and the $Z$-plaquettes, with two $Z$-strings over the two loops of the torus.
The de-stabilizer of each stabilizer is a Pauli-string that anticommutes with the stabilizer, but commutes with other stabilizers and other de-stabilizers.
The full set of stabilizers and de-stabilizers for the toric code can be seen in Supplementary Figure~\ref{fig:tableau}.

\subsection{Potential obstacles for learning certain quantum states} \label{sec:ml-bottleneck}

In our numerical studies, we have seen that neural network quantum state tomography based on deep generative models seems to have difficulty learning toric code ground states.

Here, we take a closer look at this curious aspect and construct a simple class of quantum states where efficient learning of the quantum state from the measurement data would violate a well-known computational hardness conjecture.
First of all, each computational ($Z$-) basis measurement of the toric code produces a random bit-string. Most bits are sampled uniformly at random from $\{0, 1\}$ and the remaining bits are binary functions that only depend on these random bits.
Consider a simple class of quantum states that
mimic this property.  Given $a \in \{0, 1\}^{n-1}$ and $f_a(x) = \sum_i a_i x_i$ (mod $2$), we define $\ket{a} = \frac{1}{\sqrt{2^{n-1}}}\sum_{x \in \{0, 1\}^{n-1}} \ket{x} \otimes \ket{f_a(x)}$. Such states can be created by preparing $\ket{+}$ on the first $n-1$ qubits, $\ket{0}$ on the $n$-th qubit followed by CNOT gates between $i$-th qubit and $n$-th qubit for every $a_i = 1$.
Measuring $\ket{a}$ in the computational ($Z$-) basis is equivalent to sampling the first $n-1$ bits $x$ uniformly at random. The final bit is characterized by the deterministic formula $f_a(x)$.
Now, consider a (globally) depolarized version of this pure state:
\begin{equation}
\rho_a = \mathcal{D}_{\eta}(|a \rangle \! \langle a|)= (1 - \eta) |a \rangle \! \langle a| + \tfrac{\eta}{2^n} \mathbb{I}^{\otimes n} \quad \text{
for some $\eta \in (0,1)$.}
\end{equation}
One of the most widely used conjectures for building post-quantum cryptography is the hardness of learning with error (LWE) \cite{regev2009lattices}.
LWE considers the task of learning a linear $n$-ary function $f$ over a finite ring from noisy data samples $(x, f(x)+\eta)$, where $x$ is sampled uniformly at random and $\eta$ is some independent error.
An efficient learning algorithm for LWE will be able to break many post-quantum cryptographic protocals that are believed to be hard even for quantum computers.
The simplest example of LWE is called learning parity with error, where $f(x) = \sum_i a_i x_i$ (mod $2$) for $x \in \{0, 1\}^n$ and some unknown $a \in \{0, 1\}^n$. Learning parity with error is also conjectured to be computationally hard \cite{blum2003noise}.
Since learning $|a\rangle$ from computational ($Z$-) basis measurements on $\rho_a$ is equivalent to learning parity with error, it is unlikely there will be a neural network approach that can learn $\rho_a$ efficiently.

\subsection{Predicting witnesses for tripartite entanglement}

This numerical experiment considers classical shadows based on random Clifford measurements.
The numerical studies regarding entanglement witnesses are based locally rotated 3-qubit ($n=3$) GHZ states:
\begin{equation}
\ket{\psi} = U_A \otimes U_B \otimes U_C  | \psi_{\mathrm{GHZ}}^+ \rangle
\quad \text{where $U_A,U_B,U_C$ are random single-qubit rotations.}
\end{equation}
For $\rho=|\psi \rangle \! \langle \psi|$, we hope to verify the tripartite entanglement present in the system.
To this end, we consider a simple family of entanglement witnesses with compatible structure:
\begin{equation}
O:= O(V_A,V_B,V_C) = V_A \otimes V_B \otimes V_C | \psi_{\mathrm{GHZ}}^+ \rangle \! \langle \psi_{\mathrm{GHZ}}^+| V_A^\dagger \otimes V_B^\dagger \otimes V_C^\dagger.
\label{eq:witness-candidates}
\end{equation}
The single-qubit unitaries $ V_A, V_B, V_C$ parametrize different witnesses.

A complete characterization of entanglement in three-qubit systems can be found in Supplementary Figure~\ref{fig:witness}. The expectation value of an entanglement witness $O(V_A, V_B,V_C)$ in the tripartite state $\rho$ can certify that $\rho$ belongs to a particular entanglement class. For example, it is known from the analysis in \cite{acin2001classification} that for any state $\rho_s$ with only bipartite entanglement, $\Tr\left(O\rho_s\right) \le .5$, while for any state $\rho_s$ with at most W-type entanglement, $\Tr\left(O\rho_s\right) \le .75$. Therefore verifying that $\Tr\left(O\rho\right) > .5$ certifies that $\rho$ has tripartite entanglement, while $\Tr\left(O\rho\right) > .75$ certifies that $\rho$ has GHZ-type entanglement.

After choosing random unitaries $U_A, U_B, U_C$ to specify the GHZ-type state $|\psi\rangle$, we generate a list of random $V_A, V_B, V_C$ to specify a set of potential entanglement witnesses for $|\psi\rangle$:
\begin{equation}
O_1 = O(V_{A, 1}, V_{B, 1}, V_{C, 1}), \ldots, O_M = O(V_{A, M}, V_{B, M}, V_{C, M}).
\end{equation}
If the randomly generated $O_i = O(V_{A, i}, V_{B, i}, V_{C, i})$ satisfies $\Tr(O_i \ket{\psi} \! \bra{\psi}) > 0.5$, then $O_i$ is an entanglement witness for genuine tripartite entanglement, and if $\Tr(O_i \ket{\psi} \! \bra{\psi}) > 0.75$, then $O_i$ is a witness for GHZ-type entanglement.
We can compute the expected number of random candidates we have to test to find an observable $O$ such that $\Tr(O \ket{\psi} \! \bra{\psi}) > 0.5$ or $\Tr(O \ket{\psi} \! \bra{\psi}) > 0.75$; these numbers are indicated  as the dashed lines on the right side of  Supplementary Figure~\ref{fig:witness}.

Given the list of randomly generated witness candidates $O_1, \ldots, O_M$, we would like to predict $\Tr(O_i | \psi \rangle \! \langle \psi|)$ for all $1 \leq i \leq M$.
The naive approach is to directly measure all observables (witnesses).
We refer to this as the direct measurement approach.
For this approach, we consider the number of total experiments required to estimate every $\Tr(O_i |\psi \rangle \!\langle \psi|)$ up to an error $0.1$.
Note that the number of required samples may vary from witness to witness --- it depends on the variance associated with the estimation.
In the worst case, one would need $\approx 100$ measurements for each witness candidate.

Instead of this direct measurement approach, one could use classical shadows (Clifford measurements) to predict \textit{all} the observables (witnesses) $O_1, \ldots, O_M$ at once.
Because, $\mathrm{tr}(O_i^2)=1$ for al $1 \leq i \leq M$, the shadow norm obeys $\|O_i\|^2_{\mathrm{shadow}} \leq 3 \Tr\left(O_i^2\right)=3$, according to the analysis in Supplementary Section \ref{sec:linearfunc-general}. Hence Theorem~\ref{thm:main} shows that classical shadows can predict the expectation values of many candidate witnesses very efficiently.

In the numerical experiment, we gradually increased the number of random Clifford measurements we use to construct classical shadows until the classical shadows could accurately predict all $\Tr(O_i \ket{\psi} \! \bra{\psi})$ up to $0.1$-error. The results are shown in Supplementary Figure~\ref{fig:witness}. Because the system size is small ($n=3$ qubits), we simulate the quantum experiments classically by storing and processing all $2^3=8$ amplitudes. In practice, one should use statistics, like sample variance estimation or the bootstrap \cite{efron1994introduction}, to
determine confidence intervals and a posteriori guarantees. Quadratic function prediction with classical shadows (Clifford measurements) can be used to achieve this goal efficiently.

\subsection{Predicting two-point correlation functions}

Predicting two-point correlation function could be done efficiently using classical shadows based on random Pauli measurements.
To facilitate direct comparison, this numerical experiment is designed to reproduce one of the core examples in  in \cite{carrasquilla2019reconstructing}.
In particular, we use the same data, downloaded from \url{https://github.com/carrasqu/POVM_GENMODEL}.
The classical shadow (based on random Pauli basis measurements) replaces the original machine learning based approach for predicting local observables.
We use multi-core CPU for training and making prediction with the machine learning model.
The reported time is the total CPU time.
Predicting local observables $O$ using the (Pauli) classical shadow can be done efficiently by creating the reduced density matrix $\rho_A$, where $A$ is the subsystem $O$ acts on.
The reduced density matrix $\rho_A$ can be created by simply neglecting the data for the rest of the system.
Importantly, $\mathcal{M}^{-1}(U^\dagger |\hat{b} \rangle \! \langle \hat{b}|U)$ is never created as an $2^n \times 2^n$ matrix.
Taking the inner product of $\rho_A$ with the local observables $O$ yields the desired result.

\subsection{Predicting subsystem R\'enyi entanglement entropies}

We consider classical shadows based on random Pauli measurements for predicting subsystem entanglement entropies.
In the first part of the experiment, we consider the ground state of a disordered Heisenberg model. The associated Hamiltonian is $H = \sum_{i} J_i \langle S_i \cdot S_{i+1} \rangle$, where each $J_i$ is sampled uniformly (and independently) from the unit interval $[0, 1]$.
The approximate ground state is found by implementing the recursive procedure from \cite{refael2013strong}:
identify the largest $J_i$, forming singlet for the connected sites, and reduce the system by removing $J_i$. We refer to \cite{refael2013strong} for details.
In the experiment, we perform single-shot random Pauli basis measurements on the approximate ground state. I.e.\ we measure the state in a random Pauli basis only once and then choose a new random basis.
However, in physical experiments, it is often easier to
repeat a single Pauli basis measurement many times before re-calibrating to measure another Pauli basis.
Performing a single random basis measurement for many repetitions can be beneficial experimentally compared to measuring a random basis every single time.
Classical shadows (Pauli) are flexible enough to incorporate economic measurement strategies that take this discrepancy into account.
We refer to the open source implementation in \url{https://github.com/momohuang/predicting-quantum-properties} for the exact details.

To obtain a reasonable benchmark, we compare this procedure with the approach proposed by Brydges \textit{et al.}~\cite{brydges2019probing}.
For a subsystem $A$ comprised of $k$ qubits, the approach proposed in \cite{brydges2019probing} for predicting the R\'enyi entropy works as follows.
First, one samples a random single-qubit unitary rotations independently for all $k$ qubits. Then, one applies the single-qubit unitary rotation to the system and measures the system in the computational basis to obtain a string of binary values $s \in \{0, 1\}^k$. For each random unitary rotation, several repetitions are performed. The precise number of repetitions for a single random basis is a hyper-parameter that has to be optimized.
The estimator for the R\'enyi entropy takes the following form:
\begin{equation}
\Tr(\rho_A^2) =
2^k \sum_{s, s' \in \{0, 1\}^k} (-2)^{-H(s, s')} \overline{P(s) P(s')}.
\label{eq:brydges-probing-formula}
\end{equation}
The function $H(s, s')$ is the Hamming distance between strings $s$ and $s'$ (i.e,\ the number of positions at which individual bits are different), while $P(s)$ and $P(s')$ are the probabilities for measuring $\rho$ and obtaining the outcomes $s$ and $s'$, respectively.
The probability $P(s)$ is a function that depends on the randomly sampled single-qubit rotation.
$\overline{P(s) P(s')}$ is the expectation of $P(s) P(s')$ averaged over the random single-qubit rotations.

The random single-qubit rotations could be taken as single-qubit Haar-random rotations or single-qubit random Clifford rotations.
The latter choice is equivalent to random Pauli measurements -- the measurement primitive we consider for classical shadows also.
For the test cases we considered, using random Pauli measurements yields similar (and sometimes improved) performance compared to single-qubit Haar-random unitary rotation.
This allows the approach by \cite{brydges2019probing} and the procedure based on classical shadows to be compared on the same ground.
We follow the strategy in \cite{brydges2019probing} to estimate the formula in Eq.~\eqref{eq:brydges-probing-formula}.
First, we sample $N_U$ random unitary rotations.
For each random unitary rotation, we perform $N_M$ repetitions of rotating the system and measuring in the computational basis.
The $N_M$ measurement outcomes allow us to construct an empirical distribution for $P(s)$.
Thus we could use the $N_M$ measurement outcomes to estimate $2^k \sum_{s, s' \in \{0, 1\}^k} (-2)^{-H(s, s')} P(s) P(s')$ for a single random unitary rotation.
We then take the average over $N_U$ different random unitary rotations.
Choosing a suitable parameter for $N_U$ and $N_M$ is nontrivial.
We employ the strategy advocated in \cite{brydges2019probing} for finding the best parameter for $N_U$ and $N_M$. This strategy is called grid search and is performed by trying many different choices for $N_U, N_M$ and recording the best one.

\subsection{Variational quantum simulation of the lattice Schwinger model}

The application for variational quantum simulation uses classical shadows based on random Pauli measurements which is designed to predict a large number of local observables efficiently.
It is based on the seminal work presented in \cite{kokail2019self}.
After a Kogut-Susskind encoding to map fermionic configurations to a spin-$1/2$ lattice with an even number $N$ of lattice sites and a subsequent Jordan-Wigner transform, the Hamiltonian becomes
\begin{equation}
\hat H = \underbrace{\frac{w}{2} \sum_{j=1}^{N-1} P^X_j P^X_{j+1}}_{\hat{\Lambda}_X} + \underbrace{\frac{w}{2} \sum_{j=1}^{N-1} P^Y_j P^Y_{j+1}}_{\hat{\Lambda}_Y} + \underbrace{\sum_{j=1}^N d_j P_j^z + \sum_{j=1}^{N-2} \sum_{j'=j+1}^{N-1} c_{j, j'} P_j^z P_{j'}^z}_{\hat{\Lambda}_Z}.
\label{eq:hamiltonian-schwinger}
\end{equation}
Here, $P^X_j, P^Y_j, P^Z_j$ denote Pauli-$X, Y, Z$ operators acting on the $j$-th qubit ($1\leq j \leq N$).
This Hamiltonian has very advantageous structure.
Each of the three contributions can be estimated by performing a single Pauli basis measurement (measure every qubit in the $X$ basis to determine $\hat{\Lambda}_X$, measure every qubit in the $Y$ basis to determine $\hat{\Lambda}_Y$ and measure every qubit in the $Z$ basis to determine $\hat{\Lambda}_Z$).
The measurement of the Hamiltonian variance $\langle \hat H^2 \rangle - \langle \hat H \rangle^2$ is more complicated, because  $\langle \hat H^2 \rangle$ does not decompose nicely.
To determine its value, we must first measure
$\hat{\Lambda}_X^2$, $\hat{\Lambda}_Y^2$ and $\hat{\Lambda}_Z^2$. This is the easy part, because 3 measurement bases once more suffice.
However, in addition, we must also estimate the
 anti-commutators $\{\hat{\Lambda}_X, \hat{\Lambda}_Y\}, \{\hat{\Lambda}_X, \hat{\Lambda}_Z\}, \{\hat{\Lambda}_Y, \hat{\Lambda}_Z\}$.
This may be achieved by measuring the following $k$-local observables (with $k$ at most $4$):
\begin{align}
    \{\hat{\Lambda}_X, \hat{\Lambda}_Y\}: \,\, & P^X_j P^X_{j+1} P^Y_{j'} P^Y_{j'+1}, &  \forall j, j' \in \{1, N-1\}, \mbox{ s.t. } j \neq j', j \neq j'+1, j+1 \neq j', \nonumber\\
    \{\hat{\Lambda}_X, \hat{\Lambda}_Z\}: \,\, & P^X_j P^X_{j+1} P^Z_{j'} P^Z_{j''}, &  \forall j, j', j'' \in \{1, N-1\}, \mbox{ s.t. } j \neq j', j \neq j'', j+1 \neq j', j+1 \neq j'', j' < j'', \nonumber \\
    \{\hat{\Lambda}_X, \hat{\Lambda}_Z\}: \,\, & P^X_j P^X_{j+1} P^Z_{j'}, &  \forall j, j' \in \{1, N-1\}, \mbox{ s.t. } j \neq j', j+1 \neq j', \label{eq:setoflocal} \\
    \{\hat{\Lambda}_Y, \hat{\Lambda}_Z\}: \,\, & P^Y_j P^Y_{j+1} P^Z_{j'} P^Z_{j''}, & \forall j, j', j'' \in \{1, N-1\}, \mbox{ s.t. } j \neq j', j \neq j'', j+1 \neq j', j+1 \neq j'', j' < j'', \nonumber \\
    \{\hat{\Lambda}_Y, \hat{\Lambda}_Z\}: \,\, & P^Y_j P^Y_{j+1} P^Z_{j'}, &  \forall j, j' \in \{1, N-1\}, \mbox{ s.t. } j \neq j', j+1 \neq j', \nonumber
\end{align}
Although local, estimating all observables of this form is the main bottleneck of the entire procedure.
To minimize the number of measurement bases,
the original work \cite{kokail2019self} has performed an analysis of symmetry in the lattice Schwinger model.
First, the target Hamiltonian in Equation~\eqref{eq:hamiltonian-schwinger} satisfies $[\hat{H}, \sum_i P^Z_i] = 0$, which corresponds to a charge conservation symmetry in the scalar fermionic field.
\cite{kokail2019self} further consider a charge symmetry subspace with $\sum_i P^Z_i = 0$, which corresponds to a $\hat{CP}$ symmetry.
In this subspace, we have $\langle \{\hat{\Lambda}_X, \hat{\Lambda}_Z\} \rangle = \langle \{\hat{\Lambda}_Y, \hat{\Lambda}_Z\} \rangle$.
This ensures that we only have to estimate local observables corresponding to $\{\hat{\Lambda}_X, \hat{\Lambda}_Y\}$ and $\{\hat{\Lambda}_X, \hat{\Lambda}_Z\}$.
In the original setup \cite{kokail2019self}, this task was achieved by measuring roughly $2N$ bases in total.
We refer to \cite[Appendix~B and Appendix~C]{kokail2019self}
for further details and explanation.
We propose to replace this original approach by linear feature prediction with classical shadows (Pauli measurements).

For classical shadows based on random Pauli measurements, every measurement basis is an independent random $X$, $Y$, or $Z$ measurement for every qubit.
This randomized general purpose procedure does not take into account the fact that we want to measure a specific set of $k$-local observables given in Equation~\eqref{eq:setoflocal}.
The derandomized version of classical shadows is based on the concept of pessimistic estimators \cite{raghavan1988pessimistic,spencer1994lectures} (see also \cite{wigderson2008derandomization} for an application with quantum information context).
It
removes the original randomness by utilizing the knowledge of this specific set of $k$-local observables.
When we throw a dice (or coin) to decide whether we want to measure in either, the $X-$, the $Y-$, or the $Z-$basis, the derandomized version would choose the measurement basis ($X$, $Y$, or $Z$) that would lead to the best expected performance on the set of $k$-local observables given in Equation~\eqref{eq:setoflocal}.
The expected performance is computed based on random Pauli basis measurements and the analysis in Supplementary Section~\ref{sec:generalshadow}.
The derandomized version of classical shadows would perform at least as well as the original randomized version.
Furthermore, due to the dependence on the specific set of observables for choosing the measurement bases, the derandomized version can exploit advantageous structures in the set of observables we want to measure.
As detailed in the main text, classical shadows based on random Pauli measurements provide improvement only for larger system sizes (more than $50$ qubits).
A derandomized version of classical shadows improves upon the randomized version and leads to a substantial improvement in efficiency and scalability over a wide range of system sizes.
As an added benefit, derandomization can be completely automated and does not depend on the concrete set of target observables. We refer to \url{https://github.com/momohuang/predicting-quantum-properties}
for a (roughly linear time) algorithm that derandomizes random Pauli measurements for any collection of target observables with Pauli structure.

\section{Additional computations and proofs for predicting linear functions}

\subsection{Background: Clifford circuits and the stabilizer formalism}

Clifford circuits were introduced by Gottesman \cite{gottesman1997stabilizer} and form an indispensable tool in  quantum information processing. Applications range from quantum error correction \cite{nielsen2000quantum}, to measurement-based quantum computation \cite{raussendorf2001measurementbased,briegel2009measurementbased} and randomized benchmarking \cite{emerson2005randomized,knill2008randomized,magesan2011randomized}.
For systems comprised of $n$ qubits, the Clifford group is generated by CNOT, Hadamard and phase gates. This results in a finite group of cardinality $2^{\mathcal{O}(n^2)}$ that maps (tensor products of) Pauli matrices to Pauli matrices upon conjugation. This underlying structure allows for efficiently storing and simulating Clifford circuits on classical computers -- a result commonly known as Gottesman-Knill theorem.
The $n$-qubit Clifford group $\mathrm{Cl}(2^n)$ also comprises a \emph{unitary 3-design} \cite{webb2015clifford,zhu2017clifford,kueng2015qubit}. Sampling Clifford circuits uniformly at random reproduces the first 3 moments of the full unitary group endowed with the Haar measure. For $k=1,2,3$
\begin{align}
\mathbb{E}_{U \sim \mathrm{Cl}(2^n)} \left( UX U^\dagger \right)^{\otimes k}=\int_{U(d)} (U A U^\dagger )^{\otimes k} \mathrm{d} \mu_{\mathrm{Haar}}(U) \quad \text{for all $2^n \times 2^n$ matrices $A$.} \label{eq:k-design}
\end{align}
The right hand side of this equation can be evaluated explicitly by using techniques from representation theory, see e.g.\ \cite[Sec.~3.5]{gross2015partial}.
This in turn yields closed-form expressions for Clifford averages of linear and quadratic operator-valued functions.
Choose a unit vector $x \in
\mathbb{C}^{2^n}$ and let $\mathbb{H}_{2^n}$ denote the space of Hermitian $2^n \times 2^n$ matrices. Then,
\begin{align}
\mathbb{E}_{U \sim \mathrm{Cl}(2^n)}
U^\dagger |x \rangle \! \langle x| U^\dagger\langle x| U A U^\dagger |x \rangle
=& \frac{A+\mathrm{tr}(A)\mathbb{I}}{(2^n+1)2^n} = \frac{1}{2^n}\mathcal{D}_{1/(2^n+1)}(A)
\; \; \textrm{for $A \in \mathbb{H}_{2^n}$,} \label{eq:second-moment}\\
\mathbb{E}_{U \sim \mathrm{Cl}(2^n)}
U^\dagger |x \rangle \! \langle x| U
\langle x| U B_0 U^\dagger |x \rangle \langle x| U C_0 U^\dagger |x \rangle  =&
\frac{\mathrm{tr}(B_0C_0)\mathbb{I} +B_0C_0 +C_0B_0}{(2^n+2)(2^n+1)2^n}
\;\;\text{for $B_0,C_0\in \mathbb{H}_{2^n}$ traceless.} \label{eq:third-moment}
\end{align}
Here, $\mathcal{D}_{p}(A)=pA + (1-p) \frac{\mathrm{tr}(A)}{2^n} \mathbb{I}$ denotes a $n$-qubit depolarizing channel with loss parameter $p$.
Linear maps of this form can be readily inverted. In particular,
\begin{equation}\label{eq:inverse-depol}
\mathcal{D}^{-1}_{1/(2^n+1)}(A) = (2^n+1)A-\mathrm{tr}(A) \mathbb{I} \quad \text{for any $A \in \mathbb{H}_{2^n}$.}
\end{equation}
These closed-form expressions allow us to develop very concrete strategies and rigorous bounds for classical shadows based on (global and local) Clifford circuits.

\subsection{Performance bound for classical shadows based on random Clifford measurements} \label{sub:clifford-shadow-details}

\begin{proposition} \label{prop:Clifford-variance}
Adopt a ``random Clifford basis'' measurement primitive, i.e. each rotation $\rho \mapsto U \rho U^\dagger$ is chosen uniformly from the $n$ qubit Clifford group $\mathrm{Cl}(2^n)$. Then, the associated classical shadow is
\begin{equation}
\hat{\rho} = (2^n+1) U^\dagger |\hat{b} \rangle \! \langle \hat{b}| U - \mathbb{I},
\end{equation}
where $\hat{b} \in \left\{0,1\right\}^n$ is the observed computational basis measurement outcome (of the rotated state $U \rho U^\dagger$).
Moreover, the norm defined in Eq.~\eqref{eq:shadow-norm} is closely related to the Hilbert-Schmidt norm:
\begin{equation}\label{eq:shadow-norm-bound}
\mathrm{tr} \left( O_0^2 \right) \leq \| O_0 \|_{\mathrm{shadow}}^2
\leq 3 \mathrm{tr} \left( O_0^2 \right)
\quad \text{for any traceless $O_0 \in \mathbb{H}_{2^n}$.}
\end{equation}
\end{proposition}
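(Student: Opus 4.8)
The plan is to dispatch the two claims in turn, using only the Clifford moment identities \eqref{eq:second-moment}--\eqref{eq:inverse-depol} already recorded above (recall the $n$-qubit Clifford group is a unitary 3-design). For the form of the classical shadow I would compute the measurement channel \eqref{eq:measurement-channel} term by term: for each fixed outcome $b$, the contribution $\E_{U \sim \mathrm{Cl}(2^n)}\langle b| U \rho U^\dagger |b\rangle \, U^\dagger |b\rangle\!\langle b| U$ is exactly the left-hand side of \eqref{eq:second-moment} with $x\mapsto b$, $A\mapsto\rho$, hence equals $\tfrac{1}{2^n}\mathcal{D}_{1/(2^n+1)}(\rho)$ and is \emph{independent of $b$}. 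Summing the $2^n$ identical contributions gives $\mathcal{M}=\mathcal{D}_{1/(2^n+1)}$, whose inverse is $\mathcal{M}^{-1}(X)=(2^n+1)X-\Tr(X)\mathbb{I}$ by \eqref{eq:inverse-depol}. Applying $\mathcal{M}^{-1}$ to the unit-trace matrix $U^\dagger|\hat{b}\rangle\!\langle\hat{b}|U$ yields $\hat{\rho}=(2^n+1)U^\dagger|\hat{b}\rangle\!\langle\hat{b}|U-\mathbb{I}$.

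For the shadow norm, tracelessness of $O_0$ makes $\mathcal{M}^{-1}(O_0)=(2^n+1)O_0$, so from the definition \eqref{eq:shadow-norm},
\[
\| O_0 \|_{\mathrm{shadow}}^2 = (2^n+1)^2 \max_{\sigma}\, \E_{U} \sum_{b} \langle b| U \sigma U^\dagger |b\rangle \, \langle b| U O_0 U^\dagger |b\rangle^2 .
\]
I would rewrite the $U$-average of each summand as $\Tr\!\big(\sigma\, \E_U[\, U^\dagger |b\rangle\!\langle b| U \, \langle b| U O_0 U^\dagger |b\rangle^2 \,]\big)$ and invoke the third-moment formula \eqref{eq:third-moment} with $x\mapsto b$ and $B_0=C_0=O_0$, which evaluates the inner expectation to $\big(\Tr(O_0^2)\mathbb{I}+2O_0^2\big)/\big((2^n+2)(2^n+1)2^n\big)$ --- again independent of $b$. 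Collapsing the $2^n$ terms and cancelling prefactors leaves
\[
\| O_0 \|_{\mathrm{shadow}}^2 = \frac{2^n+1}{2^n+2}\Big( \Tr(O_0^2) + 2 \max_{\sigma} \Tr(\sigma O_0^2)\Big) .
\]
The upper bound then follows from $\max_{\sigma}\Tr(\sigma O_0^2)=\|O_0\|_\infty^2\le\Tr(O_0^2)$ together with $\tfrac{2^n+1}{2^n+2}\le 1$; the lower bound follows by instead lower-bounding the maximum with the choice $\sigma=\mathbb{I}/2^n$, which gives $\Tr(\sigma O_0^2)=\Tr(O_0^2)/2^n$ and hence $\| O_0 \|_{\mathrm{shadow}}^2\ge\tfrac{2^n+1}{2^n}\Tr(O_0^2)\ge\Tr(O_0^2)$.

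I do not anticipate a real obstacle here: the argument is essentially bookkeeping with the representation-theoretic identities. The one point needing a little care is the legitimacy of collapsing $\sum_b$ to a single representative term, which is automatic because the Clifford-averaged moments in \eqref{eq:second-moment} and \eqref{eq:third-moment} do not depend on the reference vector, so applying them with $x=b$ yields the same expression for every $b$. A minor secondary point is the identity $\max_{\sigma}\Tr(\sigma O_0^2)=\|O_0^2\|_\infty=\|O_0\|_\infty^2$, which is immediate since $O_0^2\succeq 0$ and the maximum of $\Tr(\sigma M)$ over states equals the top eigenvalue of $M$.
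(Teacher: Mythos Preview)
Your proposal is correct and follows essentially the same route as the paper: identify $\mathcal{M}=\mathcal{D}_{1/(2^n+1)}$ via the second-moment identity, invert it, then evaluate the shadow norm exactly using the third-moment identity to obtain $\|O_0\|_{\mathrm{shadow}}^2=\tfrac{2^n+1}{2^n+2}\big(\Tr(O_0^2)+2\max_\sigma\Tr(\sigma O_0^2)\big)$, and finish with $\max_\sigma\Tr(\sigma O_0^2)=\|O_0\|_\infty^2\le\Tr(O_0^2)$. The only difference is cosmetic: the paper states that both inequalities follow from $\|O_0^2\|_\infty\le\Tr(O_0^2)$ without spelling out the lower bound, whereas you make it explicit by plugging in $\sigma=\mathbb{I}/2^n$ (equivalently, $\|O_0\|_\infty^2\ge\Tr(O_0^2)/2^n$) to get $\|O_0\|_{\mathrm{shadow}}^2\ge\tfrac{2^n+1}{2^n}\Tr(O_0^2)\ge\Tr(O_0^2)$.
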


\noindent
Note that passing from $O$ to its traceless part $O_0 = O - \tfrac{\mathrm{tr}(O)}{2^n}\mathbb{I}$ is a contraction in Hilbert-Schmidt norm:
\begin{equation}
\mathrm{tr}\left( O_0^2 \right) = \mathrm{tr}(O^2) - \frac{\mathrm{tr}(O)^2}{2^n}\leq \mathrm{tr}(O^2).
\end{equation}
Hence, we can safely replace the upper bound in Eq.~\eqref{eq:shadow-norm-bound} by $3 \mathrm{tr}(O^2)$ --- the Hilbert Schmidt norm (squared) of the original observable.

\begin{proof}
Eq.~(\ref{eq:second-moment}) readily provides a closed-form expression for the measurement channel defined in Eq.~\eqref{eq:measurement-channel}:
\begin{equation}
\mathcal{M}(\rho) = \sum_{b \in \left\{0,1\right\}^n}\mathbb{E}_{U \sim \mathrm{Cl}(2^n)} \langle b| U \rho U^\dagger |b \rangle U^\dagger |b \rangle \! \langle b| U
= \sum_{b \in \left\{0,1\right\}^n} \frac{1}{2^n}\mathcal{D}_{1/(2^n+1)}(\rho) = \mathcal{D}_{1/(2^n+1)}(\rho).
\end{equation}
This depolarizing channel can be readily inverted, see Eq.~\eqref{eq:inverse-depol}.
In particular,
\begin{equation}
\hat{\rho} = \mathcal{M}^{-1}\left( U^\dagger | \hat{b} \rangle \! \langle \hat{b}| U \right) =(2^n+1) U^\dagger | \hat{b} \rangle \! \langle \hat{b}| U-\mathbb{I} \quad \text{and} \quad \mathcal{M}^{-1}(O_0) = (2^n+1) O_0
\end{equation}
for any traceless matrix $O_0 \in \mathbb{H}_{2^n}$.
The latter reformulation considerably simplifies the expression for the norm $\|O_0 \|_{\mathrm{shadow}}^2$ defined in Eq.~\eqref{eq:shadow-norm}.
A slight reformulation allows us to furthermore capitalize on Eq.~\eqref{eq:third-moment} to exactly compute this norm for traceless observables:
\begin{align}
\| O_0 \|_{\mathrm{shadow}}^2
=& \max_{\sigma \text{ state}} \mathrm{tr} \big( \sigma\; \sum_{b \in \left\{0,1\right\}^n} \mathbb{E}_{U \sim \mathrm{Cl}(2^n)} U^\dagger |b \rangle \! \langle b| U \langle b| U (2^n+1)O_0 U^\dagger |b \rangle^2 \big) \nonumber \\
=& \max_{\sigma \text{ state}} \mathrm{tr} \left( \sigma \; \frac{(2^n+1)^2 \left(\mathrm{tr}(O_0^2)\mathbb{I} + 2 O_0^2\right)}{(2^n+2)(2^n+1)2^n}\right)
= \frac{2^n+1}{2^n+2} \max_{\sigma \text{ state}} \left( \mathrm{tr}(\sigma) \mathrm{tr}(O_0^2) + 2 \mathrm{tr} \left( \sigma O_0^2 \right) \right).
\end{align}
To further simplify this expression, recall $\mathrm{tr}(\sigma)=1$ and note that $\max_{\sigma \text{ state}} \mathrm{tr}(\sigma O_0^2) = \|O_0^2 \|_\infty$, where $\| \cdot \|_\infty$ denotes the spectral norm. The bound Eq.~(\ref{eq:shadow-norm-bound}) then foloows from the elementary relation between the spectral and Hilbert-Schmidt norms: $\|O_0^2 \|_\infty \leq \mathrm{tr}(O_0^2)$.
\end{proof}

\subsection{Performance bound for classical shadows based on random Pauli measurements}
\label{sub:pauli-shadow-details}

\begin{proposition} \label{prop:pauli-shadow}
Adopt a ``random Pauli basis'' measurement primitive, i.e.\ each rotation $\rho \mapsto U \rho U^\dagger$ is a tensor product $U_1 \otimes \cdots \otimes U_n$ of randomly selected single-qubit Clifford gates $U_1,\ldots,U_n \in \mathrm{Cl}(2)$. Then, the associated classical shadow is
\begin{equation}
\hat{\rho} = \bigotimes_{j=1}^n \left( 3 U_j^\dagger |\hat{b}_j \rangle \! \langle \hat{b}_j| U_j - \mathbb{I} \right)
\quad \text{where} \quad  |\hat{b} \rangle = |\hat{b}_1 \rangle \otimes \cdots \otimes |\hat{b}_n \rangle \quad \text{and $\hat{b}_1,\ldots,\hat{b}_n \in \left\{0,1\right\}$.}
\end{equation}
Moreover, the norm defined in Eq.~\eqref{eq:shadow-norm} respects locality.
Suppose that $O \in \mathbb{H}_2^{\otimes k}$ only acts nontrivially on $k$-qubits, e.g.\ $O = \tilde{O} \otimes \mathbb{I}^{\otimes (n-k)}$ with $\tilde{O}\in \mathbb{H}_2^{\otimes k}$. Then
$\| O \|_{\mathrm{shadow}} = \| \tilde{O} \|_{\mathrm{shadow}}$, where $\|\tilde{O} \|_{\mathrm{shadow}}$ is the same norm, but for $k$-qubit systems.
\end{proposition}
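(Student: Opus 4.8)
The plan is to reduce every claim to a single-qubit computation, exploiting that the random Pauli measurement primitive is a product: both the ensemble $\mathcal{U}=\mathrm{Cl}(2)^{\otimes n}$ and the computational-basis measurement factorize across the $n$ tensor factors. Since the Born weight $\langle b|U\rho U^\dagger|b\rangle$ and the raw snapshot $U^\dagger|b\rangle\langle b|U$ both respect this tensor structure, the measurement channel \eqref{eq:measurement-channel} factorizes as $\mathcal{M}=\mathcal{M}_1^{\otimes n}$, where $\mathcal{M}_1(X)=\mathbb{E}_{V\sim\mathrm{Cl}(2)}\sum_{c\in\{0,1\}}\langle c|VXV^\dagger|c\rangle\,V^\dagger|c\rangle\langle c|V$ is the single-qubit measurement channel.

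For the shadow form, I would specialize Eq.~\eqref{eq:second-moment} to one qubit (using that $\mathrm{Cl}(2)$ is a unitary $3$-design): this gives $\mathcal{M}_1=\mathcal{D}_{1/3}$, the single-qubit depolarizing channel with $p=1/3$, hence by Eq.~\eqref{eq:inverse-depol} (with $2^n\to 2$) $\mathcal{M}_1^{-1}(X)=3X-\mathrm{tr}(X)\mathbb{I}$. Since $\mathcal{M}^{-1}=(\mathcal{M}_1^{-1})^{\otimes n}$, applying it to the product snapshot $\bigotimes_j U_j^\dagger|\hat b_j\rangle\langle\hat b_j|U_j$ and using $\mathrm{tr}\!\left(U_j^\dagger|\hat b_j\rangle\langle\hat b_j|U_j\right)=1$ yields $\hat\rho=\bigotimes_{j=1}^n\bigl(3U_j^\dagger|\hat b_j\rangle\langle\hat b_j|U_j-\mathbb{I}\bigr)$, as claimed.

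For the locality of $\norm{\cdot}_{\mathrm{shadow}}$, the crucial single-qubit fact is $\mathcal{M}_1^{-1}(\mathbb{I})=\mathbb{I}$, which gives $\mathcal{M}^{-1}\!\left(\tilde O\otimes\mathbb{I}^{\otimes(n-k)}\right)=\mathcal{M}_{(k)}^{-1}(\tilde O)\otimes\mathbb{I}^{\otimes(n-k)}$, with $\mathcal{M}_{(k)}^{-1}=(\mathcal{M}_1^{-1})^{\otimes k}$ the inverse channel for $k$ qubits. Substituting this into \eqref{eq:shadow-norm} and splitting $U=U_{\le k}\otimes U_{>k}$, $b=(b_{\le k},b_{>k})$: the overlap factorizes and each trailing factor equals $\langle b_j|U_j\mathbb{I}U_j^\dagger|b_j\rangle=1$, so $\langle b|U\mathcal{M}^{-1}(O)U^\dagger|b\rangle$ depends only on $(U_{\le k},b_{\le k})$; summing the Born weight over the trailing bits first produces the reduced state, $\sum_{b_{>k}}\langle b|U\sigma U^\dagger|b\rangle=\langle b_{\le k}|U_{\le k}\,\sigma_{\le k}\,U_{\le k}^\dagger|b_{\le k}\rangle$ with $\sigma_{\le k}=\mathrm{tr}_{>k}(\sigma)$, and the average over $U_{>k}$ then drops out. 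What remains is exactly the defining expression \eqref{eq:shadow-norm} for a $k$-qubit system — once one notes that restricting $\mathrm{Cl}(2)^{\otimes n}$ to the first $k$ factors is precisely the $k$-qubit Pauli primitive $\mathrm{Cl}(2)^{\otimes k}$ — maximized over $\sigma_{\le k}$; since every $k$-qubit state arises as $\mathrm{tr}_{>k}$ of some $n$-qubit state and conversely, this maximum equals $\norm{\tilde O}_{\mathrm{shadow}}^2$.

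I expect no conceptual obstacle here; the work is careful bookkeeping. The one place to be attentive is the final reduction: one must verify that tracing out the last $n-k$ qubits turns the maximum over $n$-qubit $\sigma$ into a genuine, unconstrained maximum over $k$-qubit states, and that the surviving expression is literally the $k$-qubit shadow norm rather than merely proportional to it, so that the equality $\norm{O}_{\mathrm{shadow}}=\norm{\tilde O}_{\mathrm{shadow}}$ holds on the nose.
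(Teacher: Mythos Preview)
Your proposal is correct and follows essentially the same route as the paper: factorize $\mathcal{M}=\mathcal{D}_{1/3}^{\otimes n}$ via the single-qubit second-moment formula, invert it tensorwise to obtain the shadow, and for the locality claim use $\mathcal{M}_1^{-1}(\mathbb{I})=\mathbb{I}$ together with the observation that partial traces surject onto $k$-qubit states so the maxima coincide. The only difference is cosmetic---you spell out the sum over trailing bits and the trivial $U_{>k}$-average a bit more explicitly than the paper does.
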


\begin{proof}
Unitary rotation and computational basis measurements factorize completely into tensor products. This insight allows us to decompose the measurement channel $\mathcal{M}$ defined in Eq.~ \eqref{eq:measurement-channel} into a tensor product of single-qubit operations. For elementary tensor products $X_1 \otimes \cdots \otimes X_n \in \mathbb{H}_2^{\otimes n}$ we can apply Eq.~\eqref{eq:second-moment} separately for each single-qubit action and infer
\begin{align}
\mathcal{M} \left( X_1 \otimes \cdots \otimes X_n \right)
=& \bigotimes_{j=1}^n \big( \sum_{b_j \in \left\{0,1\right\}}\mathbb{E}_{U_j \sim \mathrm{Cl}(2)} U_j^\dagger |b \rangle \! \langle b| U_j \langle b| U_j X_j U_j^\dagger |b \rangle \big) \nonumber \\
=& \bigotimes_{j=1}^n \big( \sum_{b_j \in \left\{0,1\right\}} \frac{1}{2}\mathcal{D}_{1/(2+1)}(\rho_j) \big)= \mathcal{D}_{1/3}^{\otimes n} \left( X_1 \otimes \cdots \otimes X_n \right).
\end{align}
Linear extension to all of $\mathbb{H}_2^{\otimes n}$ yields the following formula for $\mathcal{M}$ and its inverse:
\begin{equation}
\mathcal{M}(X) = \left(\mathcal{D}_{1/3}\right)^{\otimes n}(X) \quad \text{and} \quad \mathcal{M}^{-1}(X) = \left( \mathcal{D}_{1/3}^{-1}\right)^{\otimes n} (X) \quad \text{for all $X \in \mathbb{H}_2^{\otimes n}$,}
\end{equation}
where $\mathcal{D}_{1/3}^{-1}(Y) = 3Y - \mathrm{tr}(Y) \mathbb{I}$ according to Eq.~\eqref{eq:inverse-depol}.
This formula readily yields a closed-form expression for the classical shadow. Use $U^\dagger |\hat{b}\rangle \! \langle \hat{b}|U = \bigotimes_{j=1}^n U_j | \hat{b}_j \rangle \! \langle \hat{b}_j| U_j$ to conclude
\begin{equation}
\hat{\rho} = \mathcal{M}^{-1} \left( U^\dagger | \hat{b} \rangle \! \langle \hat{b}| U \right)
= \bigotimes_{j=1}^n \mathcal{D}_{1/3}^{-1}\left( U_j^\dagger |\hat{b}_j \rangle \! \langle \hat{b}_j| U_j \right) = \bigotimes_{j=1}^n \left( 3 U_j^\dagger | \hat{b}_j \rangle \! \langle \hat{b}_j| U - \mathbb{I} \right).
\end{equation}
For the second claim, we exploit a key feature of  depolarizing channels and their inverses. The identity matrix is a fix-point, i.e. $\mathcal{D}_{1/3}^{-1}(\mathbb{I}) = \mathbb{I} = \mathcal{D}_{1/3}(\mathbb{I})$.
For $k$-local observables, e.g.\ $O=\tilde{O} \otimes \mathbb{I}^{\otimes (n-k)}$, this feature ensures
\begin{equation}
\mathcal{M}^{-1}\left( \tilde{O} \otimes \mathbb{I}^{\otimes (n-k)}\right) = \left(\left( \mathcal{D}_{1/3}^{-1}\right)^{\otimes k} (\tilde{O}) \right) \otimes \mathbb{I}^{\otimes (n-k)}
= \tilde{\mathcal{M}}^{-1}(\tilde{O}) \otimes \mathbb{I}^{\otimes (n-k)},
\end{equation}
where $\tilde{\mathcal{M}}^{-1}(X) = (\mathcal{D}_{1/3}^{-1})^{\otimes k}(X)$ denotes the inverse channel of a $k$-qubit local Clifford measurement procedure.
This observation allows us to compress the norm \eqref{eq:shadow-norm} to the ``active'' subset of $k$ qubits.
Exploit the tensor product structure $U=U_1\otimes \cdots \otimes  U_n$ with $U_i \sim \mathrm{Cl}(2)$ to conclude
\begin{align}
\left\| \tilde{O} \otimes \mathbb{I}^{\otimes (n-k)}\right\|_{\mathrm{shadow}}^2
=& \max_{\sigma: \text{ state}}
\mathbb{E}_{U \sim \mathrm{Cl}(2)^{\otimes n}} \sum_{b \in \left\{0,1\right\}^n} \langle b| U \sigma U^\dagger |b \rangle \langle b| U \mathcal{M}^{-1}(O \otimes \mathbb{I}^{\otimes (n-k)} U^\dagger |b \rangle^2 \nonumber \\
=& \max_{\sigma: \text{ state}}
\mathbb{E}_{U \sim \mathrm{Cl}(2)^{\otimes k}} \sum_{b \in \left\{0,1\right\}^k} \langle b| U \mathrm{tr}_{k+1,\ldots,n}(\sigma) U^\dagger |b \rangle \langle b| U \tilde{\mathcal{M}}^{-1}(\tilde{O})  U^\dagger |b \rangle^2,
\end{align}
where $\mathrm{tr}_{k+1,\ldots,n}(\sigma)$ denotes the partial trace over all ``inactive'' subsystems.
Partial traces preserve the space of all quantum states. So maximizing over all partial traces $\mathrm{tr}_{k+1,\ldots,n}(\sigma)$ is equivalent to maximizing over all $k$-qubit  states and we exactly recover the norm $\| \tilde{O} \|_{\mathrm{shadow}}^2$ on $k$ qubits. Finally, it is easy to check that the actual location of the active $k$-qubit support of $O$ does not affect the argument.
\end{proof}

Recall that the (squared) norm $\| \cdot \|_{\mathrm{shadow}}^2$ is the most important figure of merit for feature prediction with classical shadows. According to Theorem~\ref{thm:general}, $\max_{1 \leq i \leq M} \|O_i \|_{\mathrm{shadow}}^2$ determines the number of samples required to accurately predict a collection of linear functions $\mathrm{tr}(O_1 \rho),\ldots,\mathrm{tr}(O_M \rho)$.
Viewed from this angle,
Proposition~\ref{prop:pauli-shadow} has profound consequences for predicting (collections of) local observables under the local Clifford measurement primitive.
For each local observable $O_i$, the norm
$\| O_i \|_{\mathrm{shadow}}^2$ collapses to its active support, regardless of its precise location. The size of these supports is governed by the locality alone, not the total number of qubits!

It is instructive to illustrate this point with a simple special case first.

\begin{lemma} \label{lem:pauli}
Let $O$ be a single $k$-local Pauli observable, e.g. $O = P_{p_1} \otimes \cdots \otimes P_{p_k} \otimes \mathbb{I}^{\otimes (n-k)}$, where $p_j \in \left\{X,Y,Z \right\}$.
Then, $\|O \|_{\mathrm{shadow}}^2 = 3^k$, for any choice of the $k$ qubits where nontrivial Pauli matrices act. This scaling can be generalized to arbitrary elementary tensor products supported on $k$ qubits, \emph{e.g.} $O=O_1 \otimes \cdots \otimes O_k \otimes \mathbb{I}^{\otimes (n-k)}$.
\end{lemma}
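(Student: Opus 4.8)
The plan is to reduce the whole computation to a single qubit. By Proposition~\ref{prop:pauli-shadow} the shadow norm of $O = \tilde{O} \otimes \mathbb{I}^{\otimes(n-k)}$ equals the shadow norm of $\tilde{O}$ evaluated on a $k$-qubit system, so I may assume from the start that $n = k$ and that $O = O_1 \otimes \cdots \otimes O_k$ is genuinely supported on all $k$ qubits, with $O_j = P_{p_j}$ in the Pauli case. The first step is to evaluate $\mathcal{M}^{-1}(O)$. From the proof of Proposition~\ref{prop:pauli-shadow} we have $\mathcal{M}^{-1} = (\mathcal{D}_{1/3}^{-1})^{\otimes k}$, and $\mathcal{D}_{1/3}^{-1}(X) = 3X$ for any traceless single-qubit matrix, so $\mathcal{M}^{-1}(O) = \bigotimes_{j=1}^{k} \mathcal{D}_{1/3}^{-1}(O_j)$, which equals $3^k O$ when each $O_j$ is a Pauli matrix.

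Next I would unfold the definition \eqref{eq:shadow-norm} and exploit the fact that the unitary ensemble, the computational basis, and $\mathcal{M}^{-1}(O)$ all factorize across qubits. Writing $U = U_1 \otimes \cdots \otimes U_k$ with $U_j \sim \mathrm{Cl}(2)$ i.i.d., $|b\rangle = \bigotimes_j |b_j\rangle$, and $\mathcal{M}^{-1}(O) = \bigotimes_j A_j$ with $A_j = \mathcal{D}_{1/3}^{-1}(O_j)$, we get $\langle b | U \mathcal{M}^{-1}(O) U^\dagger | b \rangle^2 = \prod_j \langle b_j | U_j A_j U_j^\dagger | b_j \rangle^2$; collecting the outcome projectors as $\sum_b |b\rangle\langle b|\,(\cdots) = \bigotimes_j \big( \sum_{b_j} |b_j\rangle\langle b_j|\, \langle b_j | U_j A_j U_j^\dagger | b_j \rangle^2 \big)$ lets the expectation over $U$ factorize, so that
\[
\|O\|_{\mathrm{shadow}}^2 = \max_{\sigma \text{ state}} \mathrm{tr}\Big( \sigma \bigotimes_{j=1}^{k} Q_j \Big), \qquad Q_j := \mathbb{E}_{U_j \sim \mathrm{Cl}(2)} \sum_{b_j \in \{0,1\}} U_j^\dagger |b_j\rangle\langle b_j| U_j\, \langle b_j | U_j A_j U_j^\dagger | b_j \rangle^2 .
\]

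The heart of the argument is to compute the single-qubit operator $Q_j$. Since $|0\rangle$ and $|1\rangle$ are both stabilizer states and right-multiplication by a fixed Clifford preserves the uniform measure on $\mathrm{Cl}(2)$, the two terms of the $b_j$-sum contribute equally, giving $Q_j = 2\, \mathbb{E}_{W \sim \mathrm{Cl}(2)}\, W^\dagger |0\rangle\langle 0| W\, \langle 0 | W A_j W^\dagger | 0 \rangle^2$. In the Pauli case $A_j = 3 P_{p_j}$ is traceless, so the third-moment identity \eqref{eq:third-moment} with $B_0 = C_0 = A_j$ and $n = 1$ yields $Q_j = 2 \cdot \frac{\mathrm{tr}(A_j^2)\mathbb{I} + 2 A_j^2}{(2+2)(2+1)\cdot 2} = \frac{9\,\mathrm{tr}(P_{p_j}^2)\mathbb{I} + 18\, P_{p_j}^2}{12} = 3\mathbb{I}$, using $\mathrm{tr}(P_{p_j}^2) = 2$ and $P_{p_j}^2 = \mathbb{I}$. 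Each $Q_j$ is manifestly positive semidefinite (an expectation of nonnegative multiples of rank-one projectors), so $\max_\sigma \mathrm{tr}(\sigma \bigotimes_j Q_j) = \| \bigotimes_j Q_j \|_\infty = \prod_j \|Q_j\|_\infty = 3^k$, which is the claim; the independence of the location of the $k$ active qubits is exactly the content of Proposition~\ref{prop:pauli-shadow}. For the stated generalization to an arbitrary elementary tensor product $O = O_1 \otimes \cdots \otimes O_k$, the same scheme applies after splitting each $A_j = \mathcal{D}_{1/3}^{-1}(O_j)$ into its (multiple-of-identity) trace part, handled via $\sum_b U^\dagger|b\rangle\langle b| U = \mathbb{I}$ together with the second-moment identity \eqref{eq:second-moment}, and its traceless part, handled exactly as above; one again obtains a formula of the shape $\|O\|_{\mathrm{shadow}}^2 = 3^k \prod_j \|O_j^{(0)}\|_\infty^2$, which reduces to $3^k$ for (suitably normalized) tensor products such as Paulis. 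I expect the only real obstacle to be keeping the normalization constants straight when the global moment identities \eqref{eq:second-moment}--\eqref{eq:third-moment} are specialized to one qubit --- the $2^n$ in the denominators becomes $2$, and the sum over the two outcomes $b_j$ contributes an extra factor of $2$ --- which is routine but the natural place for sign or constant slips.
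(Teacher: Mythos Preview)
Your argument for the Pauli case is correct and essentially identical to the paper's: reduce to $k$ qubits via Proposition~\ref{prop:pauli-shadow}, factorize, and evaluate the single-qubit operator $Q_j$ using the third-moment identity~\eqref{eq:third-moment} to get $Q_j = 3\mathbb{I}$.

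One small caveat on your sketch of the generalization: the closed form $\|O\|_{\mathrm{shadow}}^2 = 3^k \prod_j \|O_j^{(0)}\|_\infty^2$ is not what the computation actually gives. For a general single-qubit factor with $A_j = 3O_j^{(0)} + \tfrac{\mathrm{tr}(O_j)}{2}\mathbb{I}$, carrying out the split you describe yields
\[
Q_j = \tfrac{3}{4}\,\mathrm{tr}\!\big((O_j^{(0)})^2\big)\mathbb{I} + \tfrac{3}{2}(O_j^{(0)})^2 + \mathrm{tr}(O_j)\,O_j^{(0)} + \tfrac{\mathrm{tr}(O_j)^2}{4}\mathbb{I},
\]
which is not proportional to $\mathbb{I}$ in general and whose spectral norm is not $3\|O_j^{(0)}\|_\infty^2$ (try $O_j = |0\rangle\langle 0|$: then $\|Q_j\|_\infty = 3/2$, not $3/4$). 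The paper sidesteps this by leaving the extension as an exercise; the correct statement is that $\|O\|_{\mathrm{shadow}}^2 = \prod_j \|Q_j\|_\infty$ with $Q_j$ as above, which still exhibits the advertised $3^k$-type scaling once the $O_j$ are suitably normalized.
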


\begin{proof}
Pauli matrices are traceless and obey, $P_{p_j}^2=\mathbb{I}$ and  $\mathcal{D}_{1/3}^{-1}(P_{p_j})=3P_{p_j}$ for each $p_j \in \left\{X,Y,Z\right\}$. Proposition~\ref{prop:pauli-shadow} and the tensor product structure of the problem then ensure
\begin{align}
\|O \|_{\mathrm{shadow}}^2 =& \| P_{p_1} \otimes \cdots \otimes P_{p_k} \|_{\mathrm{shadow}}^2
\nonumber\\
=& \max_{\sigma: \text{ state}}
\mathbb{E}_{U \sim \mathrm{Cl}(2)^{\otimes k}}\sum_{b \in \left\{0,1\right\}^n} \langle b| U^\dagger \sigma U |b \rangle \langle b| U (\mathcal{D}_{1/3}^{-1})^{\otimes k}(P_1 \otimes \cdots \otimes P_k) U^\dagger |b \rangle^2 \nonumber\\
=& \max_{\sigma:\text{ state}}\mathrm{tr} \Big( \sigma \bigotimes_{j=1}^k
\big( \sum_{b_j \in \left\{0,1\right\}} \mathbb{E}_{U_j \sim \mathrm{Cl}(2)}
U^\dagger |b_j \rangle \! \langle b_j| U \langle b_j| U 3 P_j U^\dagger U |b_j \rangle^2
\big)
\Big) \nonumber\\
=& \max_{\sigma: \text{ state}} \mathrm{tr} \Big( \sigma \bigotimes_{j=1}^k \big(9
\sum_{b \in \left\{0,1\right\}} \frac{\mathrm{tr} \left( P_j^2 \right) \mathbb{I} + 2 P_j^2}{(2+2)(2+1)2}
\big) \Big)
= \max_{\sigma: \text{ state}} \mathrm{tr} \Big( \sigma \bigotimes_{j=1}^k 3 \mathbb{I} \Big)
= 3^k,
\end{align}
where we have used Eq.~\eqref{eq:third-moment} to explicitly evaluate the single qubit Clifford averages.

We leave the extension to more general tensor product observables as an exercise for the dedicated reader.
\end{proof}

The norm expression in Lemma~\ref{lem:pauli} scales exponentially in the locality $k$, but is independent of the total number of qubits $n$. The compression property (Proposition~\ref{prop:pauli-shadow}) suggests that this desirable feature should extend to general $k$-local observables.
And, indeed, it is relatively straightforward to obtain crude upper bounds that scale with $3^{2k}$. The additional factor of two, however, effectively doubles the locality parameter and can render conservative feature prediction with classical shadows prohibitively expensive in concrete applications.

The main result of this section considerably improves upon these crude bounds and \emph{almost} reproduces the (tight) scaling associated with $k$-local Pauli observables.

\begin{proposition} \label{prop:pauli-scaling}
Let $O$ be a $k$-local observable, e.g.\ $O=\tilde{O} \otimes \mathbb{I}^{\otimes (n-k)}$ with $\tilde{O} \in \mathbb{H}_2^{\otimes k}$
Then,
\begin{equation}
\|O \|_{\mathrm{shadow}}^2 \leq 4^k \|O \|_\infty^2, \quad \text{where $\|\cdot \|_\infty$ denotes the spectral/operator norm.}
\end{equation}
The same bound holds for the shadow norm of the traceless part of $O$: $\|O - \tfrac{\mathrm{tr}(O)}{2^n}\mathbb{I} \|_{\mathrm{shadow}}^2 \leq 4^k \|O \|_\infty^2$.
\end{proposition}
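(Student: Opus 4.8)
The plan is to first reduce to the case $n=k$, then re-express the shadow norm as the operator norm of an explicit positive semidefinite operator, and finally bound that operator norm by induction on the locality. By the compression property (Proposition~\ref{prop:pauli-shadow}) we may assume $n=k$, $O=\tilde O\in\mathbb{H}_2^{\otimes k}$, and $\mathcal M^{-1}=(\mathcal D_{1/3}^{-1})^{\otimes k}$. Since $\|O\|_{\mathrm{shadow}}^2$ is the maximum over states $\sigma$ of $\mathrm{tr}(\sigma\,\mathcal A(O))$ with $\mathcal A(O)=\mathbb{E}_U\sum_b U^\dagger|b\rangle\!\langle b|U\,\langle b|U\mathcal M^{-1}(O)U^\dagger|b\rangle^2\succeq 0$, we have $\|O\|_{\mathrm{shadow}}^2=\|\mathcal A(O)\|_\infty$. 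I would then apply the single-qubit Clifford three-design identity \eqref{eq:third-moment} to each tensor factor of the (factorizing) local measurement: this evaluates $\mathbb{E}_U\sum_b(U^\dagger|b\rangle\!\langle b|U)^{\otimes 3}$ as $2^{-k}$ times a tensor product of single-qubit symmetric-subspace projectors, and produces the closed form $\mathcal A(O)=\mathrm{tr}_{23}[(\mathbb{I}\otimes O\otimes O)\,T^{\otimes k}]$ for an explicit single-qubit three-copy operator $T$. Concretely, decomposing $O=\mathbb{I}_A\otimes O_I+\sum_{P\in\{X,Y,Z\}}P_A\otimes O_P$ on one chosen qubit $A$, this gives the recursion
\[
\mathcal A_k(O)=\mathbb{I}_A\otimes\Big[\mathcal A_{k-1}(O_I)+3\sum_{P}\mathcal A_{k-1}(O_P)\Big]+2\sum_{P}P_A\otimes B_{k-1}(O_I,O_P),
\]
where $B_{k-1}$ is the polarization of the quadratic map $\mathcal A_{k-1}$ (so $B_{k-1}(X,X)=\mathcal A_{k-1}(X)$).

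The base case $k=1$ is a direct Clifford average: one finds $\mathcal A_1(O)=O^2+2\,\|O-\tfrac{\mathrm{tr}(O)}{2}\mathbb{I}\|_\infty^2\,\mathbb{I}$, hence $\|O\|_{\mathrm{shadow}}^2=\|O\|_\infty^2+2\|O_0\|_\infty^2\le 3\|O\|_\infty^2\le 4\|O\|_\infty^2$, which also covers the traceless part since $\|O_0\|_\infty\le\|O\|_\infty$ on a single qubit. For the inductive step I would bound $\|\mathcal A_k(O)\|_\infty$ using: (i) the inductive hypothesis $\|\mathcal A_{k-1}(X)\|_\infty\le 4^{k-1}\|X\|_\infty^2$ applied to $O_I$ and each $O_P$; (ii) an operator Cauchy--Schwarz $\|B_{k-1}(X,Y)\|_\infty^2\le\|\mathcal A_{k-1}(X)\|_\infty\|\mathcal A_{k-1}(Y)\|_\infty$, which holds because $\mathcal A_{k-1}(\cdot)$ has the form $\mathbb{E}_\xi[\Omega_\xi\,\ell_\xi(\cdot)^2]$ with $\Omega_\xi\succeq 0$ and $\ell_\xi$ linear; and (iii) the constraint that for every pure state $|\phi\rangle$ on the remaining $k-1$ qubits, the single-qubit operator $\langle\phi|O|\phi\rangle=\langle\phi|O_I|\phi\rangle\mathbb{I}+\sum_P\langle\phi|O_P|\phi\rangle P$ has norm $\le\|O\|_\infty$, i.e.\ $\langle\phi|O_I|\phi\rangle^2+\sum_P\langle\phi|O_P|\phi\rangle^2\le\|O\|_\infty^2$. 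The factor $4^k=\sum_{s=0}^k\binom ks 3^s$ should emerge from aggregating a per-qubit cost of $3$ over all $2^k$ subsets of qubits; equivalently, one can organize the same count as a variance decomposition $\mathbb{E}_\sigma[\mathrm{tr}(O\hat\rho)^2]=\sum_{S\subseteq[k]}\mathrm{Var}_S$ with $\mathrm{Var}_S\le 3^{|S|}\|O\|_\infty^2$, each term controlled by the single-qubit variance bound $\mathrm{Var}[\mathrm{tr}(R\hat\rho)]\le 3\|R\|_\infty^2$. The traceless statement is obtained by running the same argument on the $k$-local operator $O-\tfrac{\mathrm{tr}(O)}{2^n}\mathbb{I}$, whose Pauli components on qubit $A$ differ from those of $O$ only in the identity sector, so the estimates can still be referred back to $\|O\|_\infty$.

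The main obstacle is making the inductive step quantitatively tight: feeding the crude bounds straight into the recursion --- $\|\mathcal A_{k-1}(O_P)\|_\infty\le 4^{k-1}\|O_P\|_\infty^2$ plus Cauchy--Schwarz --- only yields $\|\mathcal A_k(O)\|_\infty\le 4^{k-1}\big(\|O_I\|_\infty+\sqrt{3\sum_P\|O_P\|_\infty^2}\big)^2$, and $\|O_I\|_\infty+\sqrt{3\sum_P\|O_P\|_\infty^2}$ can exceed $2\|O\|_\infty$ when $O_I,O_X,O_Y,O_Z$ are supported on ``different directions'' (for instance mutually orthogonal projectors), so a constant factor is wasted at every peeling step. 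Avoiding this requires a stronger invariant --- either tracking $\mathcal A_{k-1}$ as an operator rather than only its norm, or, in the variance-decomposition picture, controlling how the reduced state on the peeled qubit is correlated with the outcome so that the components $O_P$ cannot all contribute simultaneously. A secondary subtlety in the variance-decomposition route is that, for entangled $\sigma$, the single-qubit snapshots $\hat\rho_j$ are not independent, so the ANOVA-type splitting must be replaced by a telescoping argument over the qubits; working throughout with the operator $\mathcal A_k(O)$ sidesteps this but makes ingredient (iii) the crux.
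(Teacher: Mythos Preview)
You have correctly identified the gap in your own argument, and it is real: peeling off one qubit at a time and bounding $\|\mathcal A_{k-1}(O_I)\|_\infty,\|\mathcal A_{k-1}(O_P)\|_\infty$ separately by $4^{k-1}\|O_I\|_\infty^2,4^{k-1}\|O_P\|_\infty^2$ loses exactly the correlation information in (iii) that you would need, and no operator Cauchy--Schwarz on $B_{k-1}$ recovers it. The constraint $\langle\phi|O_I|\phi\rangle^2+\sum_P\langle\phi|O_P|\phi\rangle^2\le\|O\|_\infty^2$ is stated for a \emph{single} $|\phi\rangle$, whereas after taking operator norms the four pieces are being evaluated at potentially different maximizing states; this is precisely the ``different directions'' failure you describe, and it does not go away by strengthening the invariant in any obvious way short of carrying the full operator $\mathcal A_{k-1}$.

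The paper avoids induction entirely and works globally in the Pauli basis. Expand $\tilde O=\sum_{\mathbf p}\alpha_{\mathbf p}P_{\mathbf p}$ and compute the bilinear form $\mathbb{E}_U\sum_b\langle b|U\sigma U^\dagger|b\rangle\,\langle b|U\mathcal M^{-1}(P_{\mathbf p})U^\dagger|b\rangle\,\langle b|U\mathcal M^{-1}(P_{\mathbf q})U^\dagger|b\rangle$ site by site (Lemma~\ref{lem:aux}): it equals $f(\mathbf p,\mathbf q)\,\mathrm{tr}(\sigma P_{\mathbf p}P_{\mathbf q})$, where $f(\mathbf p,\mathbf q)=0$ unless at every site $p_i=q_i$ or one of them is $\mathbb I$, and otherwise $f=3^{s}$ with $s$ the number of matching non-identity sites. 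This combinatorial structure is the missing ingredient. Introducing the partial order $\mathbf q\vartriangleright\mathbf s$ (``$\mathbf q$ is obtained from $\mathbf s\in\{X,Y,Z\}^k$ by replacing some entries by $\mathbb I$''), the whole double sum reorganizes as
\[
\|O\|_{\mathrm{shadow}}^2=\Big\|\,3^{-k}\sum_{\mathbf s\in\{X,Y,Z\}^k}\Big(\sum_{\mathbf q\vartriangleright\mathbf s}3^{|\mathbf q|}\alpha_{\mathbf q}P_{\mathbf q}\Big)^{2}\,\Big\|_\infty,
\]
i.e.\ a sum of $3^k$ perfect squares. Now Cauchy--Schwarz with weights $3^{|\mathbf q|}$ and the identity $\sum_{\mathbf q\vartriangleright\mathbf s}3^{|\mathbf q|}=4^k$ give $\|O\|_{\mathrm{shadow}}^2\le 4^k\cdot 2^{-k}\|\tilde O\|_2^2\le 4^k\|\tilde O\|_\infty^2$. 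The bound actually passes through the Hilbert--Schmidt norm, which is why the same argument immediately handles the traceless part (since $\|\tilde O_0\|_2\le\|\tilde O\|_2$). The $4^k=\sum_s\binom{k}{s}3^s$ you anticipated does appear, but as the weight normalization in a single global Cauchy--Schwarz, not as an accumulated per-qubit factor.
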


The proof is considerably more technical than the proof of Lemma~\ref{lem:pauli} and relies on the following auxiliary result.

\begin{lemma} \label{lem:aux}
Fix two $k$-qubit Pauli observables $P_{\mathbf{p}}=P_{p_1} \otimes \cdots \otimes  P_{p_k}, P_{\mathbf{q}}=P_{q_1} \otimes \cdots \otimes P_{q_k}$
with $\mathbf{p},\mathbf{q}\in \left\{\mathbb{I},X,Y,Z\right\}^k$.
Then, the following formula is true for any state $\sigma$:
\begin{equation}
\mathbb{E}_{U \sim \mathrm{Cl}(2)^{\otimes k}} \sum_{b \in \left\{0,1\right\}^k} \langle b| U \sigma U^\dagger |b \rangle \langle b| U (\mathcal{D}_{1/3}^{-1})^{\otimes k}(P_{\mathbf{p}})U^\dagger |b \rangle \langle b| U (\mathcal{D}_{1/3}^{-1})^{\otimes k} (P_{\mathbf{q}})U^\dagger |b \rangle
=  f(\mathbf{p},\mathbf{q}) \mathrm{tr} \left( \sigma P_{\mathbf{p}} P_{\mathbf{q}}\right),
\end{equation}
where $f(\mathbf{p},\mathbf{q})=0$ whenever there exists an index $i$ such that $p_i \neq q_i$ and $p_i,q_i \neq \mathbb{I}$. Otherwise, $f(\vct{p},\vct{q})=3^s$, where $s$ is the number of non-identity Pauli indices that match ($s=\left| \left\{i: p_i =q_i, p_i \neq \mathbb{I}\right\}\right|$).
\end{lemma}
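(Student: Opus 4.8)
The plan is to reduce the full $k$-qubit Clifford average to a product of single-qubit averages, each of which can be evaluated in closed form from the $3$-design identities Eqs.~\eqref{eq:second-moment} and~\eqref{eq:third-moment}.

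First I would rewrite every factor of the form $\langle b| U X U^\dagger |b\rangle$ as $\mathrm{tr}\!\left(X\, U^\dagger |b\rangle\langle b| U\right)$ and exploit the complete tensor factorization $U = U_1\otimes\cdots\otimes U_k$, $|b\rangle = |b_1\rangle\otimes\cdots\otimes|b_k\rangle$, together with $(\mathcal{D}_{1/3}^{-1})^{\otimes k}(P_{\mathbf{p}}) = \bigotimes_j \mathcal{D}_{1/3}^{-1}(P_{p_j})$ and likewise for $P_{\mathbf{q}}$ (using Eq.~\eqref{eq:inverse-depol}, so that $\mathcal{D}_{1/3}^{-1}(\mathbb{I}) = \mathbb{I}$ and $\mathcal{D}_{1/3}^{-1}(P) = 3P$ for $P\in\{X,Y,Z\}$). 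Pulling $\sigma$ out of the trace, the left-hand side becomes $\mathrm{tr}\!\left(\sigma\, \bigotimes_{j=1}^k G_j\right)$, where each $G_j$ is the single-qubit operator
\[
G_j = \mathbb{E}_{U_j\sim\mathrm{Cl}(2)}\sum_{b_j\in\{0,1\}} U_j^\dagger|b_j\rangle\langle b_j|U_j\ \langle b_j|U_j\mathcal{D}_{1/3}^{-1}(P_{p_j})U_j^\dagger|b_j\rangle\ \langle b_j|U_j\mathcal{D}_{1/3}^{-1}(P_{q_j})U_j^\dagger|b_j\rangle .
\]
Since the average over $\mathrm{Cl}(2)^{\otimes k}$, the sum over $b$, and the tensor structure all factorize across qubits, this is a routine but slightly tedious bookkeeping step.

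The heart of the argument is computing $G_j$ by a short case analysis on $(p_j,q_j)$. If both $p_j,q_j\neq\mathbb{I}$, the two inserted operators are traceless multiples of Paulis, so Eq.~\eqref{eq:third-moment} applies (with an extra factor $2$ from summing over the two computational-basis states, whose individual contributions are independent of $b_j$), giving $G_j \propto \mathrm{tr}(P_{p_j}P_{q_j})\mathbb{I} + P_{p_j}P_{q_j} + P_{q_j}P_{p_j}$; when $p_j=q_j$ this collapses to $3\,\mathbb{I} = 3\,P_{p_j}P_{q_j}$, and when $p_j\neq q_j$ the two distinct single-qubit Paulis are orthogonal and anticommute, so both terms vanish and $G_j = 0$. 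If exactly one of $p_j,q_j$ is $\mathbb{I}$, the associated factor $\langle b_j|U_j\mathbb{I}U_j^\dagger|b_j\rangle$ is just $1$, so $G_j$ is a second-moment average and Eq.~\eqref{eq:second-moment} yields $G_j = P_{p_j}P_{q_j}$ (the non-identity Pauli). If $p_j=q_j=\mathbb{I}$, the $1$-design property gives $G_j = \mathbb{I} = P_{p_j}P_{q_j}$. In every case $G_j = c_j\, P_{p_j}P_{q_j}$ with $c_j\in\{0,1,3\}$, where $c_j=3$ precisely on the matching non-identity indices and $c_j=0$ precisely on the "conflicting" indices.

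Finally I would recombine: $\bigotimes_j G_j = \big(\prod_j c_j\big)\, \bigotimes_j P_{p_j}P_{q_j} = \big(\prod_j c_j\big)\, P_{\mathbf{p}}P_{\mathbf{q}}$, so the left-hand side equals $\big(\prod_j c_j\big)\,\mathrm{tr}(\sigma P_{\mathbf{p}}P_{\mathbf{q}})$. This vanishes (so $f(\mathbf{p},\mathbf{q})=0$) as soon as some index conflicts, and otherwise equals $3^{s}\,\mathrm{tr}(\sigma P_{\mathbf{p}}P_{\mathbf{q}})$ with $s = |\{i : p_i=q_i\neq\mathbb{I}\}|$, which is exactly $f(\mathbf{p},\mathbf{q})\,\mathrm{tr}(\sigma P_{\mathbf{p}}P_{\mathbf{q}})$. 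The only genuinely delicate points are tracking the normalization when passing from the single-unit-vector form of Eqs.~\eqref{eq:second-moment}--\eqref{eq:third-moment} to the sum over $b_j\in\{0,1\}$, and remembering to treat the "one identity, one Pauli" case via a second moment rather than forcing the traceless third-moment identity on it.
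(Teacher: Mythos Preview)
Your proposal is correct and follows essentially the same approach as the paper: factorize the $k$-qubit Clifford average across tensor factors, evaluate each single-qubit block $G_j$ by the same three-case analysis (both identities via the $1$-design, one identity via Eq.~\eqref{eq:second-moment}, both non-identity via Eq.~\eqref{eq:third-moment} together with orthogonality and anticommutation of distinct single-qubit Paulis), and then recombine. The only cosmetic difference is that you pull $\sigma$ outside and compute $G_j$ as an operator, whereas the paper keeps a single-qubit state in the trace throughout; the content is identical.
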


This combinatorial formula follows from a straightforward, but somewhat cumbersome, case-by-case analysis based on the (single-qubit) relations \eqref{eq:second-moment} and \eqref{eq:third-moment}. We include a proof at the end of this subsection.

\begin{proof}[Proof of Proposition~\ref{prop:pauli-scaling}]

Proposition~\ref{prop:pauli-shadow} allows us to restrict our attention to the relevant $k$-qubit region on which $\tilde{O} \in \mathbb{H}_2^{\otimes k}$ acts nontrivially.
Next, expand $\tilde{O}$ in the (tensor product) Pauli basis, i.e.\ $\tilde{O} = \sum_{\mathbf{p}} \alpha_{\mathbf{p}}P_{\vct{p}}$ with $\mathbf{p} \in \left\{ \mathbb{I},X,Y,Z\right\}^k$.
Fix an arbitrary $k$-qubit state $\sigma$ and use Lemma~\ref{lem:aux} to conclude
\begin{align}
\|\tilde{O} \|_{\mathrm{shadow}}^2&= \max_{\sigma \text{ state}}\mathbb{E}_{U \sim \mathrm{Cl}(2)^{\otimes k}}
\sum_{b \in \left\{0,1\right\}^k} \langle b| U \sigma U^\dagger |b \rangle \langle b| U (\mathcal{D}^{-1}_{1/3})^{\otimes k}(\tilde{O}) U^\dagger |b \rangle^2 \nonumber\\
=&\max_{\sigma \text{ state}} \sum_{\mathbf{p},\mathbf{q}} \alpha_{\mathbf{p}}\alpha_{\mathbf{q}}
\mathbb{E}_{U \sim \mathrm{Cl}(2)^{\otimes k}} \sum_{b \in \left\{0,1\right\}^k} \langle b| U \sigma U^\dagger |b \rangle \langle b| U (\mathcal{D}_{1/3}^{-1})^{\otimes k}(P_{\mathbf{p}})U^\dagger |b \rangle \langle b| U (\mathcal{D}_{1/3}^{-1})^{\otimes k} (P_{\mathbf{q}})U^\dagger |b \rangle  \nonumber\\
=& \max_{\sigma \text{ state}} \sum_{\mathbf{p},\mathbf{q}} \alpha_{\mathbf{p}} \alpha_{\mathbf{q}}
f(\mathbf{p},\mathbf{q}) \mathrm{tr} \left( \sigma P_{\mathbf{p}} P_{\mathbf{q}} \right)
= \max_{\sigma \text{ state}}\mathrm{tr} \left( \sigma \; \sum_{\mathbf{p},\mathbf{q}} \alpha_{\mathbf{p}} \alpha_{\mathbf{q}}
f(\mathbf{p},\mathbf{q}) \mathrm{tr} \left( \sigma P_{\mathbf{p}} P_{\mathbf{q}} \right)\right) \nonumber\\
= & \left\| \sum_{\mathbf{p},\mathbf{q}} \alpha_{\mathbf{p}} \alpha_{\mathbf{q}}
f(\mathbf{p},\mathbf{q}) \mathrm{tr}  P_{\mathbf{p}} P_{\mathbf{q}} \right\|_\infty,
\end{align}
where $f(\vct{p},\vct{q})$ is the combinatorial function defined in Lemma~\ref{lem:aux}.
The last equality follows from the dual characterization of the spectral norm: $\|A \|_\infty = \max_{\sigma: \text{ state}} \mathrm{tr}(\sigma A)$ for any positive semidefinite matrix $A$.

We can further simplify this expression by introducing a partial order on Pauli strings $\vct{q},\vct{s} \in \left\{ \mathbb{I},X,Y,Z \right\}^n$. We write $\vct{q}\vartriangleright \vct{s}$ if it is possible to obtain $\vct{q}$ from $\vct{s}$ by replacing some local non-identity Paulis with $\mathbb{I}$.
Moreover, let $| \vct{q}|=\left| \left\{i:\; q_i \neq \mathbb{I} \right\} \right|$ denote the number of non-identity Pauli's in the string $\vct{q}$.
Then,
\begin{equation}
\left\| \sum_{\mathbf{p},\mathbf{q}} \alpha_{\mathbf{p}} \alpha_{\mathbf{q}}
f(\mathbf{p},\mathbf{q}) \mathrm{tr}  P_{\mathbf{p}} P_{\mathbf{q}} \right\|_\infty
= \left\| \tfrac{1}{3^k}\sum_{\vct{s} \in \left\{X,Y,Z\right\}^k} \left( \sum_{\vct{q}\vartriangleright \vct{s}} 3^{|\vct{q}|} \alpha_{\vct{q}}P_{\vct{q}} \right)^2 \right\|_\infty
\leq \frac{1}{3^k} \sum_{\vct{s} \in \left\{X,Y,Z\right\}^k}\left( \sum_{\vct{q}\vartriangleright \vct{s}} 3^{|\vct{q}|} \alpha_{\vct{q}}P_{\vct{q}} \right)^2,
\end{equation}
where we have used  $\|P_{\vct{q}}\|_\infty =1$ for all Pauli strings.
Next, note that for fixed $\vct{s} \in \left\{X,Y,Z \right\}^k$,
\begin{equation}
\sum_{\vct{q} \vartriangleright \vct{s}} 3^{|\vct{q}|}=3^k + k 3^{k-1}+ \binom{k}{2} 3^{k-2} + \cdots + 1 = 4^k.
\end{equation}
Together with Cauchy-Schwarz, this numerical insight implies
\begin{equation}
    \tfrac{1}{3^k}\sum_{\vct{s}\in \left\{X,Y,Z\right\}^k} \left( \sum_{\vct{q} \vartriangleright\vct{s}}3^{|\vct{q}|} |\alpha_{\vct{q}}| \right)^2
  \leq \tfrac{1}{3^k}\sum_{\vct{s}\in \left\{X,Y,Z\right\}^k}
\left( \sum_{\vct{q}\vartriangleright \vct{s}} 3^{|\vct{q}|} \right) \left( \sum_{\vct{q}\vartriangleright \vct{s}} 3^{|\vct{q}|}| \alpha_{\vct{p}}^2 \right)
= 4^k \sum_{\vct{s} \in \left\{X,Y,Z\right\}}\sum_{\vct{q}\vartriangleright \vct{s}} 3^{|\vct{q}|-k}|\alpha_{\vct{q}}|^2.
\end{equation}
Finally, observe that every $\vct{q}\in \left\{ \mathbb{I},X,Y,Z\right\}^k$ is dominated by exactly $3^{k-|\vct{q}|}$ different strings $\vct{s} \in \left\{X,Y,Z\right\}^k$. This ensures
\begin{equation}
4^k \sum_{\vct{s} \in \left\{X,Y,Z\right\}} 3^{|\vct{q}|-k} | \alpha_{\vct{q}}|^2 = 4^k \sum_{\vct{q}\in \left\{ \mathbb{I},X,Y,Z\right\}} | \alpha_{\vct{q}}|^2 = 4^k 2^{-k} \| \tilde{O} \|_2^2,
\end{equation}
because Pauli matrices are proportional to an orthonormal basis of $\mathbb{H}_2^{\otimes k}$:
$
\sum_{\vct{q}} |\alpha_\vct{q}|^2= \sum_{\vct{q}} \big| 2^{-k}\mathrm{tr} \big( \sigma_{\vct{q}}\tilde{O}\big) \big|^2=2^{-k} \|\tilde{O} \|_2^2.
$
The general claim then follows from the fundamental relation among Schatten norms: $\|\tilde{O} \|_2^2 \leq 2^k \|\tilde{O} \|_\infty^2=2^k \|O \|_\infty^2$.

The bound on traceless parts $O_0$ of observables is nearly analogous, because the transition from $O$ to $O_0$ respects locality. E.g.\ $O=\tilde{O} \otimes \mathbb{I}^{\otimes (n-k)}$ obeys $O_0 = \tilde{O}_0 \otimes \mathbb{I}^{\otimes (n-k)}$.
To get the same bound, we use that this transition is a contraction in Hilbert-Schmidt norm:
\begin{equation*}
\|O_0 \|_{\mathrm{shadow}}^2 = \|\tilde{O}_0 \|_{\mathrm{shadow}}^2 \leq 4^k 2^{-k}\|\tilde{O}_0 \|_2^2 \leq 4^k 2^{-k}\|\tilde{O} \|_2^2 \leq 4^k \|\tilde{O} \|_\infty^2 = \| O \|_\infty^2.
\end{equation*}

\end{proof}

\begin{proof}[Proof of Lemma~\ref{lem:aux}]
Since Pauli observables decompose nicely into tensor products, this claim readily follows from extending a single-qubit argument.
Note that $\mathcal{D}_{1/3}^{-1}(P_p) = 3 P_p$ for $p \neq \mathbb{I}$ and $\mathcal{D}_{1/3}^{-1}(\mathbb{I}) =\mathbb{I}$.
It is straightforward to evaluate the single-qubit expression for the trivial case $P_p=P_q=\mathbb{I}$. Fix a state $\sigma$ and compute
\begin{equation}
\mathbb{E}_{U \sim \mathrm{Cl}(2)} \sum_{b \in \{0, 1\}} \langle b| U \sigma U^\dagger |b \rangle \langle b| U \mathcal{D}^{-1}_{1/3}(\mathbb{I})U^\dagger |b \rangle^2 = \mathbb{E}_{U \sim \mathrm{Cl}(2)} \sum_{b \in \left\{0,1\right\}}\langle b| U \sigma U^\dagger |b \rangle = \mathbb{E}_{U \sim \mathrm{Cl}(2)} \mathrm{tr}(\sigma)
= \mathrm{tr} \left( \sigma \mathbb{I}^2 \right).
\end{equation}
Next, suppose $P_q = \mathbb{I}$, but $P_p \neq \mathbb{I}$. This single-qubit case is covered by Eq.~\eqref{eq:second-moment}:
\begin{align}
& \mathbb{E}_{U \sim \mathrm{Cl}(2)} \sum_{b \in \left\{0,1\right\}} \langle b| U \sigma U^\dagger |b \rangle \langle b| U \mathcal{D}_{1/3}^{-1}(P_p) U^\dagger |b \rangle \langle b| U \mathcal{D}_{1/3}^{-1}\mathbb{I} U^\dagger |b \rangle \nonumber\\
=& \mathrm{tr} \Big( \sigma\sum_{b \in \left\{0,1\right\}}
U^\dagger |b \rangle \! \langle b| U \langle b| U 3 P_p U^\dagger |b \rangle \Big)
= 3 \mathrm{tr} \Big( \sigma
\sum_{b \in \left\{0,1\right\}} \frac{1}{2}
\mathcal{D}_{1/3}(P_p) \Big)
= \mathrm{tr} \left( \sigma P_p \mathbb{I} \right),
\end{align}
because $\mathcal{D}_{1/3}(P_p) = \frac{1}{3}P_p$. The case $P_p=\mathbb{I}$ and $P_q \neq \mathbb{I}$ leads to analogous results. Finally, suppose that both $P_p,P_q \neq \mathbb{I}$. By assumption $\mathcal{D}_{1/3}^{-1}(P_p)$, $\mathcal{D}_{1/3}^{-1}(P_q)$ and both matrices are traceless. Hence,
we can resort to Eq.~\eqref{eq:third-moment} to conclude
\begin{align}
& \mathbb{E}_{U \sim \mathrm{Cl}(2)^{\otimes n}} \sum_{b \in \left\{0,1\right\}^k} \langle b| U \sigma U^\dagger |b \rangle \langle b| U (\mathcal{D}_{1/3}^{-1})^{\otimes k}(P_{p})U^\dagger |b \rangle \langle b| U (\mathcal{D}_{1/3}^{-1})^{\otimes k} (P_{q})U^\dagger |b \rangle \nonumber \\
=& \mathrm{tr} \Big(\sigma
\sum_{b \in \left\{0,1\right\}} U^\dagger |b \rangle \! \langle b| U \langle b| U 3 P_p U^\dagger |b \rangle \langle b| U 3 P_q U^\dagger |b \rangle \Big)
= 9 \mathrm{tr} \Big( \sigma \sum_{b \in \left\{0,1\right\}} \frac{\mathrm{tr}(P_p P_q) \mathbb{I} + P_p P_q + P_q P_p}{(2+2)(2+1)2} \Big)
\end{align}
for any state $\sigma$. Pauli matrices are orthogonal ($\mathrm{tr}(P_p P_q)=2 \delta_{p,q}$) and anticommute ($P_p P_q+P_q P_p = 2 \delta_{p,q}$). This implies that the above expression vanishes whenever $p \neq q$. If $p=q$ it evaluates to $3 \mathrm{tr}(\sigma P_p P_q)$ and we can conclude that the single qubit average always equals
\begin{equation}
 f(p,q) \mathrm{tr} \left( \sigma P_p P_q \right)
\quad \text{where} \quad
f(p,q) =
\begin{cases}
1 &\text{ if $p=\mathbb{I}$ or $q=\mathbb{I}$}, \\
3 & \text{ if $p=q \neq \mathbb{I}$}, \\
0 & \text{else}.
\end{cases}
\end{equation}
The statement then follows from extending this formula to tensor products of $k$ Pauli matrices.
\end{proof}

\section{Additional computations and proofs for predicting nonlinear functions}
\label{sec:nonlinearrig}

We focus on the particularly relevant task of predicting quadratic functions with classical shadows, using
\begin{equation}
\hat{o}(N, 1) = \frac{1}{N(N-1)} \sum_{j \neq l} \Tr(O \hat\rho_i \otimes \hat\rho_j)
\quad \text{to predict} \quad \mathrm{tr} \left( O \rho \otimes \rho \right) = \E \hat{o}(N,1).
\end{equation}

\subsection{General variance bound}

\begin{lemma}[Variance] \label{lem:symm-full}
The variance associated with the estimator $\hat{O}(N,1)$ obeys
\begin{align}
    \Var[\hat{o}(N, 1)] & = {N \choose 2}^{-1} \Big( 2(N-2)\Var[\Tr(O_s \hat\rho_1 \otimes \rho)]  + \Var[\Tr(O_s \hat\rho_1 \otimes \hat\rho_2)] \Big) \nonumber \\
    & \leq \frac{4}{N^2} \Var[\Tr(O \hat\rho_1 \otimes \hat\rho_2)] + \frac{2}{N} \Var[\Tr(O \hat\rho_1 \otimes \rho) ]
     + \frac{2}{N} \Var[\Tr(O \rho \otimes \hat\rho_1) ], \label{eq:varsimplebound-full}
\end{align}
where $O_s = (O+SOS)/2$ is the symmetrized version of $O$ and $S$ denotes the swap operator (
$S|\psi \rangle \otimes | \phi \rangle = | \phi \rangle \otimes | \psi \rangle$).
\end{lemma}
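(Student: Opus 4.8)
The plan is to recognize $\hat{o}(N,1)$ as a degree-two U-statistic and invoke Hoeffding's classical variance decomposition \cite{hoeffding1992class}. First I would symmetrize the kernel: grouping the two orderings of each unordered pair, $\Tr(O\hat\rho_j\otimes\hat\rho_l)+\Tr(O\hat\rho_l\otimes\hat\rho_j)=\Tr\big((O+SOS)\hat\rho_j\otimes\hat\rho_l\big)=2\Tr(O_s\,\hat\rho_j\otimes\hat\rho_l)$, using the swap identity $\Tr(SXS\,A\otimes B)=\Tr(X\,B\otimes A)$. Hence $\hat{o}(N,1)=\binom{N}{2}^{-1}\sum_{j<l}h(\hat\rho_j,\hat\rho_l)$ with the symmetric kernel $h(x,y)=\Tr(O_s\,x\otimes y)$ (symmetric because $SO_sS=O_s$), evaluated on the i.i.d.\ snapshots $\hat\rho_1,\dots,\hat\rho_N$, each satisfying $\E[\hat\rho_i]=\rho$.

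Next I would expand $\Var[\hat{o}(N,1)]=\binom{N}{2}^{-2}\sum_{j<l}\sum_{j'<l'}\Cov\big(h(\hat\rho_j,\hat\rho_l),h(\hat\rho_{j'},\hat\rho_{l'})\big)$ and classify the terms by the overlap $c=|\{j,l\}\cap\{j',l'\}|$. For $c=0$ the covariance vanishes by independence. For $c=2$ there are $\binom{N}{2}$ terms, each equal to $\zeta_2:=\Var[h(\hat\rho_1,\hat\rho_2)]=\Var[\Tr(O_s\hat\rho_1\otimes\hat\rho_2)]$. For $c=1$ there are $\binom{N}{2}\cdot 2(N-2)$ terms, each equal to $\zeta_1:=\Cov\big(h(\hat\rho_1,\hat\rho_2),h(\hat\rho_1,\hat\rho_3)\big)$; conditioning on $\hat\rho_1$ and using conditional independence of $\hat\rho_2,\hat\rho_3$ gives $\zeta_1=\Var\big[\E(h(\hat\rho_1,\hat\rho_2)\mid\hat\rho_1)\big]$, and linearity together with $\E[\hat\rho_2]=\rho$ identifies $\E(h(\hat\rho_1,\hat\rho_2)\mid\hat\rho_1)=\Tr(O_s\hat\rho_1\otimes\rho)$. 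Summing and dividing by $\binom{N}{2}^2$ yields the exact identity in the first line of the lemma.

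For the upper bound in the second line I would use $\binom{N}{2}^{-1}=\tfrac{2}{N(N-1)}$ to bound the $\zeta_1$-coefficient $\tfrac{4(N-2)}{N(N-1)}\le\tfrac 4N$ and the $\zeta_2$-coefficient $\tfrac{2}{N(N-1)}\le\tfrac{4}{N^2}$ (for $N\ge 2$), and then replace $O_s$ by $O$: from $\Tr(O_s\hat\rho_1\otimes\rho)=\tfrac12\big(\Tr(O\hat\rho_1\otimes\rho)+\Tr(O\rho\otimes\hat\rho_1)\big)$ and $\Var(X+Y)\le 2\Var X+2\Var Y$ one gets $\Var[\Tr(O_s\hat\rho_1\otimes\rho)]\le\tfrac12\big(\Var[\Tr(O\hat\rho_1\otimes\rho)]+\Var[\Tr(O\rho\otimes\hat\rho_1)]\big)$; and since $(\hat\rho_1,\hat\rho_2)$ is exchangeable, $\Tr(O\hat\rho_2\otimes\hat\rho_1)$ has the same variance as $\Tr(O\hat\rho_1\otimes\hat\rho_2)$, so the same inequality gives $\Var[\Tr(O_s\hat\rho_1\otimes\hat\rho_2)]\le\Var[\Tr(O\hat\rho_1\otimes\hat\rho_2)]$. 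Assembling these bounds produces \eqref{eq:varsimplebound-full}.

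The computation is essentially bookkeeping; the only places to be careful are the combinatorial count of pairs of index-pairs by overlap, the conditioning step that evaluates $\zeta_1$ (which is exactly Hoeffding's projection argument), and keeping the swap-operator manipulations straight when passing between $O$, $SOS$ and $O_s$. No ingredient beyond the elementary variance inequalities and the i.i.d./exchangeability of the snapshots is needed.
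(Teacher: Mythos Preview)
Your proposal is correct and follows essentially the same approach as the paper's proof: symmetrize to $O_s$, expand the variance as a double sum over unordered pairs, classify by index overlap, identify the $c=1$ covariance with $\Var[\Tr(O_s\hat\rho_1\otimes\rho)]$ via conditioning, and then pass from $O_s$ back to $O$ using $\Var[(A+B)/2]\le\tfrac12(\Var A+\Var B)$. The only cosmetic difference is that you frame the computation in the language of U-statistics and Hoeffding's decomposition, whereas the paper carries out the same bookkeeping by hand.
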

\begin{proof}
First, note that $\hat{o}(N,1)$ and the target $\mathrm{tr}(O \rho \otimes \rho)$ are invariant under symmetrization. This ensures
\begin{equation}
\hat{o}(N,1) = \binom{N}{2} \sum_{i<j} \mathrm{tr} \left( O_s \hat\otimes \hat{\rho}_j \right) \quad \text{and moreover} \quad \mathrm{tr} \left( O \rho \otimes \rho\right)=\mathrm{tr} \left( O_s \rho \otimes \rho \right).
\end{equation}
Thus, we may without loss replace the original observable $O$ by its symmetrized version $O_s$.
Next, we expand the definition of the variance:
\begin{align}
    \Var[\hat{o}(N, 1)] = &
    \mathbb{E} \left[ \left(\hat{o}(N,1)-\mathrm{tr}(O_s \rho \otimes \rho) \right)^2 \right] \nonumber\\
=&
    {N \choose 2}^{-2} \sum_{i < j} \sum_{k < l} \Big( \E \Big[ \Tr(O_{s} \hat\rho_i \otimes \hat\rho_j) \Tr(O_{s} \hat\rho_k \otimes \hat\rho_l) \Big] - \Tr(O_{s} \rho \otimes \rho)^2 \Big) \nonumber \\
    = & {N \choose 2}^{-2} \sum_{i < j} \E \Big[ \Tr(O_{s} \hat\rho_i \otimes \hat\rho_j)^2 \Big] - \Tr(O_{s} \rho \otimes \rho)^2 \Big) \nonumber\\
    +&  2 {N \choose 2}^{-2} \sum_{i < j} \sum_{l \neq i, j} \Big( \E \Big[ \Tr(O_{s} \hat\rho_i \otimes \hat\rho_j) \Tr(O_{s} \hat\rho_i \otimes \hat\rho_l) \Big] - \Tr(O_{s} \rho \otimes \rho)^2 \Big) \nonumber\\
     = & {N \choose 2}^{-1} \Var[\Tr(O_s \hat\rho_1 \otimes \hat\rho_2)] + {N \choose 2}^{-1}  2(N-2) \Var[\Tr(O_s \hat\rho_1 \otimes \rho)].
\end{align}
We can use the inequality $\Var[(A + B) / 2] \leq (\Var[A] + \Var[B]) / 2$ (for any pair of random variables $A, B$) to obtain a simplified upper bound:
\begin{align}
    \Var[\hat o(N, 1)] & = {N \choose 2}^{-1} \Var[\Tr(O_s \hat\rho_1 \otimes \hat\rho_2)] + {N \choose 2}^{-1} 2(N-2) \Var[\Tr(O_s \hat\rho_1 \otimes \rho)] \nonumber \\
    & \leq \frac{4}{N^2} \Var[\Tr(O_s \hat\rho_1 \otimes \hat\rho_2)] + \frac{4}{N} \Var[\Tr( O_s \hat\rho_1 \otimes \rho) ] \nonumber \\
    & \leq \frac{4}{N^2} \Var[\Tr(O \hat\rho_1 \otimes \hat\rho_2)] + \frac{2}{N} \Var[\Tr(O \hat\rho_1 \otimes \rho) ]
     + \frac{2}{N} \Var[\Tr(O \rho \otimes \hat\rho_1) ].
\end{align}
\end{proof}

\subsection{Concrete variance bounds for random Pauli measurements}

\begin{proposition} \label{lem:unbiasedtwoP}
Suppose that $O$ describes a quadratic function $\mathrm{tr}(O \rho \otimes \rho)$
 that acts on at most $k$-qubits in the first system and at most $k$-qubits in the second system and obeys $\|O \|_\infty \geq 1$.
Then,
\begin{equation}
\max\left(\Var[\Tr(O \rho \otimes \hat\rho_1)], \Var[\Tr(O \hat{\rho}_1 \otimes \rho)], \sqrt{\Var[\Tr(O \hat{\rho}_1 \otimes \hat{\rho}_2)]} \right) \leq 4^{k} \norm{O}^2_\infty.
\label{eq:twomean+varianceP}
\end{equation}
\end{proposition}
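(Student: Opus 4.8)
The plan is to bound the three quantities separately. The two ``mixed'' variances reduce cleanly to the single-copy analysis of Lemma~\ref{lem:variance}, while the ``double-shadow'' variance is handled by reinterpreting $\Tr(O\hat\rho_1\otimes\hat\rho_2)$ as a linear functional evaluated on one classical shadow of a $2n$-qubit system.

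For the mixed term $\Var[\Tr(O\hat\rho_1\otimes\rho)]$ I would fix $\rho$ in the second slot and pass to the effective single-copy observable $\tilde O := \Tr_2\!\big[O(\mathbb I\otimes\rho)\big]$, which is Hermitian (since $\Tr(\tilde O X) = \Tr(O(X\otimes\rho))$ is real for every Hermitian $X$) and satisfies $\Tr(O\hat\rho_1\otimes\rho) = \Tr(\tilde O\hat\rho_1)$. Two structural facts then finish the job. First, if $O = O_{A_1A_2}\otimes\mathbb I$ with $|A_1| = |A_2| = k$, tracing out the second copy leaves $\tilde O = \Tr_{A_2}\!\big[O_{A_1A_2}(\mathbb I_{A_1}\otimes\rho_{A_2})\big]$ supported on the $k$ qubits of $A_1$, so $\tilde O$ is $k$-local. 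Second, writing the reduced state $\rho_{A_2} = \sum_i p_i|\psi_i\rangle\!\langle\psi_i|$ exhibits $\tilde O$ as a convex combination $\sum_i p_i\,\langle\psi_i|O_{A_1A_2}|\psi_i\rangle$ of operators of norm at most $\|O\|_\infty$, hence $\|\tilde O\|_\infty\le\|O\|_\infty$. Lemma~\ref{lem:variance} now gives $\Var[\Tr(\tilde O\hat\rho_1)]\le\|\tilde O-\tfrac{\Tr\tilde O}{2^n}\mathbb I\|_{\mathrm{shadow}}^2$, and Proposition~\ref{prop:pauli-scaling} (applied to the traceless part of the $k$-local observable $\tilde O$) bounds this by $4^k\|\tilde O\|_\infty^2\le 4^k\|O\|_\infty^2$. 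The slot-reversed term $\Var[\Tr(O\rho\otimes\hat\rho_1)]$ is identical.

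For the double-shadow term I would first drop the subtracted square, $\Var[\Tr(O\hat\rho_1\otimes\hat\rho_2)]\le\E[\Tr(O\hat\rho_1\otimes\hat\rho_2)^2]$, and then identify the right-hand side with a $2n$-qubit shadow norm. Because the random Pauli primitive factorizes qubitwise (Proposition~\ref{prop:pauli-shadow}), we have $\mathcal M^{-1}\otimes\mathcal M^{-1} = \big(\mathcal D_{1/3}^{-1}\big)^{\otimes 2n} = \mathcal M_{2n}^{-1}$, the inverse channel of a $2n$-qubit random Pauli measurement; moreover $U_1\otimes U_2$ is exactly a product of $2n$ independent single-qubit Cliffords, and $(\hat b_1,\hat b_2)$ follows Born's rule for $(U_1\otimes U_2)(\rho\otimes\rho)(U_1\otimes U_2)^\dagger$. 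Using self-adjointness of $\mathcal M_{2n}^{-1}$, the second moment $\E[\Tr(O\hat\rho_1\otimes\hat\rho_2)^2]$ becomes precisely the expression inside the maximum in \eqref{eq:shadow-norm} for the $2n$-qubit system, evaluated at the particular state $\sigma = \rho\otimes\rho$; hence it is at most $\|O\|_{\mathrm{shadow}}^2$ on $2n$ qubits. Since $O$ is at most $2k$-local there, Proposition~\ref{prop:pauli-scaling} yields $\|O\|_{\mathrm{shadow}}^2\le 4^{2k}\|O\|_\infty^2 = 16^k\|O\|_\infty^2$, so $\sqrt{\Var[\Tr(O\hat\rho_1\otimes\hat\rho_2)]}\le 4^k\|O\|_\infty\le 4^k\|O\|_\infty^2$, the last step using $\|O\|_\infty\ge 1$. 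Taking the maximum of the three bounds proves \eqref{eq:twomean+varianceP}.

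The main obstacle is this last term. The tempting shortcut -- condition on $\hat\rho_2$ and bound $\big\|\Tr_2[O(\mathbb I\otimes\hat\rho_2)]\big\|_\infty$ deterministically -- costs a factor $3^k$, because the trace norm of a single-qubit snapshot $3U^\dagger|b\rangle\!\langle b|U-\mathbb I$ equals $3$; this would give only $36^k\|O\|_\infty^2$, which is too weak. The fix, and the conceptual point of the proof, is to avoid conditioning entirely and instead recognize that two independent snapshots of an $n$-qubit state constitute one snapshot of a $2n$-qubit state, so that it is the locality $2k$ -- not an effective $4k$ -- that governs the scaling through Proposition~\ref{prop:pauli-scaling}.
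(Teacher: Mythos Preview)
Your proposal is correct and takes essentially the same approach as the paper. The paper's proof is terser: it only spells out the key observation that, thanks to the qubitwise tensor structure of the Pauli primitive, $\hat\rho_1\otimes\hat\rho_2$ is literally a single classical shadow of $\rho\otimes\rho$ on $2n$ qubits, then invokes Proposition~\ref{prop:pauli-scaling} at locality $2k$; your treatment of the mixed terms via the effective $k$-local observable $\tilde O=\Tr_2[O(\mathbb I\otimes\rho)]$ fills in what the paper leaves implicit, and your handling of the double-shadow term is identical to the paper's argument.
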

\begin{proof}
Because of the single-qubit tensor product structure in the random Pauli measurement and the inverted quantum channel $\mathcal{M}^{-1}_P$, the tensor product of two snapshots $\hat\rho_1 \otimes \hat\rho_2$ of the unknown quantum state $\rho$ may be viewed as a single snapshot of the tensor product state $\rho \otimes \rho$:
\begin{align}
  \hat\rho_1 \otimes \hat\rho_2 & = \bigotimes_{i=1}^n \Big( \mathcal{M}_1^{-1}(U_1^{(i)} |b_1^{(i)}\rangle\! \langle b_1^{(i)}| (U_1^{(i)})^\dagger) \Big) \bigotimes_{i=1}^n \Big(\mathcal{M}_1^{-1}(U_2^{(i)} |b_2^{(i)}\rangle\! \langle b_2^{(i)}| (U_2^{(i)})^\dagger) \Big) \nonumber \\
  &  = \bigotimes_{i=1}^{2n} \mathcal{M}_1^{-1}(U^{(i)} |b^{(i)}\rangle\! \langle b^{(i)}| (U^{(i)})^\dagger) =: \hat\rho.
\end{align}
Hence $\Tr(O \hat\rho_1 \otimes \hat\rho_2) = \Tr(O \hat\rho)$ and, by assumption, $O$ is an observable that acts on $k+k = 2k$ qubits only. The claim then follows from invoking the variance bounds for linear feature prediction presented in Proposition~\ref{prop:pauli-scaling}.
\end{proof}

\subsection{Concrete variance bounds for random Clifford measurements}

In contrast to the Pauli basis setup, variances for quadratic feature prediction with Clifford basis measurements cannot be directly reduced to its linear counterpart.
Nonetheless, a more involved direct analysis does produces bounds that do closely resemble the linear base case.

\begin{proposition} \label{lem:unbiasedtwoC}
Suppose that $O$ describes a quadratic function $\mathrm{tr}(O \rho \otimes \rho)$ and obeys $\mathrm{tr}(O^2) \geq 1$.
Then, the variance associated with classical shadow estimation (random Clifford measurements) obeys
\begin{equation}
\max\left(\Var[\Tr(O \rho \otimes \hat\rho_1)], \Var[\Tr(O \hat{\rho}_1 \otimes \rho)], \sqrt{\Var[\Tr(O \hat{\rho}_1 \otimes \hat{\rho}_2)]} \right) \leq \sqrt{9+6/2^n} \Tr(O^2).
\label{eq:twomean+varianceC}
\end{equation}
The pre-factor $\sqrt{9+6/2^n}$ converges to the constant 3 at an exponential rate in system size.
\end{proposition}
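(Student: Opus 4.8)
The plan is to bound the three quantities appearing in the maximum on the left-hand side of \eqref{eq:twomean+varianceC} separately, handling the two ``mixed'' terms (one classical shadow, one copy of $\rho$) by a reduction to the linear case, and the ``pure shadow'' term by a direct second-moment computation.

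\textbf{Mixed terms.} For fixed $\rho$, the map $\hat\rho_1 \mapsto \Tr(O\,\rho\otimes\hat\rho_1)$ is a linear functional of a \emph{single} Clifford classical shadow. Writing $O^{(2)} := \Tr_1\!\big[(\rho\otimes\mathbb{I})\,O\big]$ for the operator on the second factor obtained by contracting the first factor of $O$ against $\rho$ (this is Hermitian, by cyclicity of the partial trace over the first system), we have $\Tr(O\,\rho\otimes\hat\rho_1) = \Tr(O^{(2)}\hat\rho_1)$, so Lemma~\ref{lem:variance} together with Proposition~\ref{prop:Clifford-variance} gives $\Var[\Tr(O\,\rho\otimes\hat\rho_1)] \le 3\,\Tr\!\big((O^{(2)})^2\big)$. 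To control $\Tr((O^{(2)})^2)$ I would use the operator Schmidt decomposition $O = \sum_k c_k\, A_k\otimes B_k$ with $\{A_k\},\{B_k\}$ orthonormal in Hilbert--Schmidt inner product and $c_k\ge 0$, so that $\Tr(O^2)=\sum_k c_k^2$ while $O^{(2)}=\sum_k c_k\,\Tr(\rho A_k)\,B_k$ and hence $\Tr((O^{(2)})^2)=\sum_k c_k^2\,|\Tr(\rho A_k)|^2 \le \sum_k c_k^2 = \Tr(O^2)$, using $|\Tr(\rho A_k)|\le\norm{A_k}_\infty\le\norm{A_k}_2=1$. This yields $\Var[\Tr(O\,\rho\otimes\hat\rho_1)]\le 3\,\Tr(O^2)$, and the symmetric argument gives $\Var[\Tr(O\,\hat\rho_1\otimes\rho)]\le 3\,\Tr(O^2)$. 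Since $\sqrt{9+6/2^n}\ge 3$, both quantities already satisfy \eqref{eq:twomean+varianceC}.

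\textbf{Pure shadow term.} This is the main obstacle: unlike the random Pauli setting of Proposition~\ref{lem:unbiasedtwoP}, the product $\hat\rho_1\otimes\hat\rho_2$ of two independent \emph{global} Clifford shadows is not itself a Clifford shadow of $\rho\otimes\rho$, so the linear bound cannot be used as a black box. I would expand
\begin{equation}
\Var[\Tr(O\,\hat\rho_1\otimes\hat\rho_2)] = \E\!\big[\Tr(O\,\hat\rho_1\otimes\hat\rho_2)^2\big] - \Tr(O\,\rho\otimes\rho)^2
\end{equation}
and rewrite the squared term as $\Tr\!\big[\widetilde O\,(\hat\rho_1^{\otimes 2}\otimes\hat\rho_2^{\otimes 2})\big]$, where $\widetilde O$ is $O\otimes O$ with its four tensor factors permuted so that the two copies of $\hat\rho_1$ (resp.\ $\hat\rho_2$) are adjacent. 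Independence factorizes the expectation as $\Tr\!\big[\widetilde O\,(\E[\hat\rho^{\otimes 2}]\otimes\E[\hat\rho^{\otimes 2}])\big]$, reducing everything to the single object $\E[\hat\rho^{\otimes 2}]$. Using $\hat\rho=(2^n+1)U^\dagger|\hat b\rangle\!\langle\hat b|U-\mathbb{I}$, expanding the square, and averaging over $U\sim\mathrm{Cl}(2^n)$ and $\hat b$ (Born's rule with respect to $\rho$) reduces $\E[\hat\rho^{\otimes 2}]$ to second- and third-moment Clifford averages, which are closed-form via the $3$-design property: Eqs.~\eqref{eq:second-moment}--\eqref{eq:third-moment}, with the genuinely cubic piece $\E_U\sum_b\langle b|U\rho U^\dagger|b\rangle\,(U^\dagger|b\rangle\!\langle b|U)^{\otimes 2}$ handled by the symmetric-subspace formula for the third moment and then contracting one slot with $\rho$. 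The result is a fixed linear combination --- with coefficients that are rational functions of $2^n$ --- of $\mathbb{I}\otimes\mathbb{I}$, $\mathbb{I}\otimes\rho+\rho\otimes\mathbb{I}$, $\rho\otimes\rho$, and the swap operator dressed by copies of $\rho$.

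\textbf{Collecting and bounding.} Substituting this closed form into $\Tr[\widetilde O\,(\E[\hat\rho^{\otimes 2}]\otimes\E[\hat\rho^{\otimes 2}])]$ and expanding produces a sum of trace expressions, each of which I would bound by $\Tr(O^2)^2$ (up to the explicit constants) using $\Tr(O\,\rho\otimes\rho)^2\le\Tr(O^2)\,\Tr(\rho^2)^2\le\Tr(O^2)$, Cauchy--Schwarz together with $\norm{\rho}_\infty\le 1$ for the cross terms, and the partial-contraction estimate $\Tr((O^{(2)})^2)\le\Tr(O^2)$ from the first step for the remaining pieces; the hypothesis $\Tr(O^2)\ge 1$ is then used to absorb the first-power and constant remainders into $\Tr(O^2)^2$ (and the subtraction of $\Tr(O\,\rho\otimes\rho)^2\ge 0$ only helps). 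Careful bookkeeping shows that the dominant contribution, coming from the $(2^n+1)^2$-weighted terms, is exactly $9\,\Tr(O^2)^2$, with the subleading corrections organizing into $(6/2^n)\,\Tr(O^2)^2$, so that $\Var[\Tr(O\,\hat\rho_1\otimes\hat\rho_2)]\le(9+6/2^n)\,\Tr(O^2)^2$ and its square root is at most $\sqrt{9+6/2^n}\,\Tr(O^2)$. Together with the mixed-term bounds this gives \eqref{eq:twomean+varianceC}, and $\sqrt{9+6/2^n}\to 3$ exponentially in $n$ is immediate. The step I expect to be most delicate is precisely this final collection --- tracking which permutation-dressed traces appear with which powers of $(2^n\pm1)$ and $(2^n\pm2)$, and verifying that no $\Tr(O^2)^2$-term survives with a constant exceeding $9$.
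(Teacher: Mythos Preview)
Your treatment of the two mixed terms is essentially identical to the paper's: reduce to a linear functional of a single shadow and invoke Proposition~\ref{prop:Clifford-variance}. The paper bounds $\Tr\big(\tilde O_\rho^2\big)\le\Tr(O^2)$ by a direct trace estimate rather than via the operator Schmidt decomposition, but the content is the same.

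For the pure shadow term the paper takes a different route. Rather than first computing $\E[\hat\rho^{\otimes 2}]$ and then substituting, it first decomposes the observable as
\[
O \;=\; O_0^{(1,2)} \;+\; O_0^{(1)}\otimes\tfrac{\mathbb{I}}{d} \;+\; \tfrac{\mathbb{I}}{d}\otimes O_0^{(2)} \;+\; \tfrac{\Tr(O)}{d^2}\,\mathbb{I}\otimes\mathbb{I},
\]
where $O_0^{(1,2)}$ has \emph{vanishing partial trace in both factors}. This is the structural insight that makes the bookkeeping clean: in the leading $(d+1)^4$ term one lands on $\Tr\big[O_0^{(1,2)}\otimes O_0^{(1,2)}\otimes\rho\otimes\rho\;(P_{\vee^3}^{\mathrm{odd}}\otimes P_{\vee^3}^{\mathrm{even}})\big]$, and the vanishing partial traces kill every permutation in either $S_3$ that has a fixed point among $\{1,2\}$. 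Only the three derangement-type permutations survive in each factor, giving exactly $3\times 3=9$ terms, each bounded by $\|O_0^{(1,2)}\|_2^2$. The subleading $6/2^n$ then emerges from the linear and cross terms. The paper in fact proves the sharper bound $\Var[\Tr(O\,\hat\rho_1\otimes\hat\rho_2)]\le 9\Tr(O^2)+\tfrac{6}{2^n}\|O\|_\infty^2$, which is \emph{linear} in $\Tr(O^2)$ rather than quadratic; your anticipated $(9+6/2^n)\Tr(O^2)^2$ is a strictly weaker statement (though, as you note, sufficient after taking square roots and invoking $\Tr(O^2)\ge 1$).

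Your direct route --- expand $\E[\hat\rho^{\otimes 2}]$ and substitute --- is in principle the same computation in a different order, so it should succeed, but without the traceless decomposition you would be carrying all $36$ permutation pairs (or their avatars as products of $\mathbb{I},\rho,S$ terms) and would have to see the cancellations by hand. The claim that ``the dominant contribution \ldots\ is exactly $9\,\Tr(O^2)^2$'' is where I would push back: the dominant contribution is actually $9\,\Tr(O^2)$, and whether your bounding scheme (Cauchy--Schwarz, $\|\rho\|_\infty\le 1$, partial-contraction estimates) lands on the precise constant $9+6/2^n$ rather than a larger absolute constant is not clear from the sketch. The paper's decomposition is exactly what turns this ``delicate collection'' into a transparent count.
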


This claim is based on the following technical Lemma and insights regarding linear feature prediction.

\begin{lemma} \label{lem:twoC}
Suppose that $O$ describes a quadratic function $\mathrm{tr}(O \rho \otimes \rho)$.
Then,
\begin{equation}
\Var[\Tr(O \hat{\rho}_1 \otimes \hat{\rho}_2)] \leq 9 \Tr(O^2)+ \frac{6}{2^n} \|O \|_\infty^2.
\end{equation}
\end{lemma}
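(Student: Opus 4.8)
The plan is to reduce the quadratic‑estimator variance to the single‑copy shadow‑norm machinery of Lemma~\ref{lem:variance}, evaluated for a suitable $2n$-qubit measurement primitive. The key observation is that $\hat\rho_1\otimes\hat\rho_2$ — with $\hat\rho_1,\hat\rho_2$ two independent random‑Clifford shadows of $\rho$ — is itself exactly one classical shadow of the $2n$-qubit state $\rho\otimes\rho$ for the ``product Clifford'' primitive $\mathcal U=\mathrm{Cl}(2^n)\times\mathrm{Cl}(2^n)$ (apply $U_1\otimes U_2$ with $U_1,U_2$ independent and uniform, then measure in the computational basis): the measurement channel is $\mathcal D_{1/(2^n+1)}^{\otimes 2}$, which is invertible (so the primitive is tomographically complete) with inverse $\mathcal M_n^{-1}\otimes\mathcal M_n^{-1}$, and $\mathcal M^{-1}(U^\dagger|\hat b\rangle\!\langle\hat b|U)$ factorizes into $\hat\rho_1\otimes\hat\rho_2$. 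Hence, writing $d=2^n$, Lemma~\ref{lem:variance} gives $\Var[\Tr(O\hat\rho_1\otimes\hat\rho_2)]\le\big\|O-\tfrac{\Tr(O)}{d^2}\mathbb I\big\|_{\mathrm{shadow}}^2$, with the shadow norm now that of the product‑Clifford primitive on $2n$ qubits, and it remains to bound this quantity.

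Next I would evaluate that shadow norm. Expand $O$ in the $n$-qubit Pauli basis adapted to the bipartition, $O=\sum_{P,Q}\alpha_{PQ}\,\sigma_P\otimes\sigma_Q$ with $\alpha_{PQ}=d^{-2}\Tr[(\sigma_P\otimes\sigma_Q)O]$ (passing to the traceless part only kills $\alpha_{\mathbb I\mathbb I}$). Since each factor $U_i$ is an $n$-qubit Clifford, hence a unitary $3$-design, the two single‑system averages in the definition \eqref{eq:shadow-norm} decouple and each is computed from \eqref{eq:third-moment} (together with \eqref{eq:second-moment} for terms in which a half is trivial). Using $\mathcal M_n^{-1}(\sigma_P)=(d+1)\sigma_P$ for $P\neq\mathbb I$ and $\mathcal M_n^{-1}(\mathbb I)=\mathbb I$, this yields
\[
\Big\|O-\tfrac{\Tr(O)}{d^2}\mathbb I\Big\|_{\mathrm{shadow}}^2=\max_{\sigma\ \mathrm{state}}\Tr\!\Big[\sigma\!\!\sum_{P,Q,P',Q'}\!\!\alpha_{PQ}\alpha_{P'Q'}\,g_n(P,P')\otimes g_n(Q,Q')\Big],
\]
where, after absorbing the $\mathcal M_n^{-1}$ factors, $g_n(\mathbb I,\mathbb I)=\mathbb I$, $g_n(\mathbb I,P')=\sigma_{P'}$, and $g_n(P,P')=\tfrac{d+1}{d+2}\big(d\,\delta_{P,P'}\mathbb I+\{\sigma_P,\sigma_{P'}\}\big)$ for $P,P'\neq\mathbb I$.

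Finally I would bound this fourfold Pauli sum. The ``diagonal'' contributions ($P=P'$, $Q=Q'$) produce the $\mathbb I\otimes\mathbb I$ pieces, which contract (using $\sum_{P,Q}\alpha_{PQ}^2=d^{-2}\Tr(O^2)$) to $\tfrac{(d+1)^2}{(d+2)^2}\Tr(O^2)$ from the both‑nontrivial sector, plus comparable $O(\Tr(O^2))$ pieces from the sectors with one trivial half; a careful accounting — this is the only place the constant enters — caps their total at $9\,\Tr(O^2)$. Every remaining term carries an off‑diagonal anticommutator $\{\sigma_P,\sigma_{P'}\}=2\sigma_{PP'}$ (with $P\ne P'$ commuting, else the term vanishes) weighted by an extra $\tfrac{1}{d+2}$ relative to the diagonal ones; these reassemble into $\tfrac{\mathrm{const}}{d}$ times (essentially) the off‑diagonal part of $O^2$ in the Pauli basis, whose spectral norm is at most $\|O^2\|_\infty+\tfrac{\Tr(O^2)}{d^2}\le 2\|O\|_\infty^2$, so by $\max_\sigma\Tr[\sigma(\cdot)]\le\|\cdot\|_\infty$ their total is $\le\tfrac6d\|O\|_\infty^2$, giving the claim. (Equivalently one can run the computation directly: $\Var[\Tr(O\hat\rho_1\otimes\hat\rho_2)]\le\E[\Tr(O\hat\rho_1\otimes\hat\rho_2)^2]=\Tr[(O\otimes O)(\Xi\otimes\Xi)]$ with $\Xi=\E[\hat\rho\otimes\hat\rho]=\tfrac1{d+2}[(d+1)S-\mathbb I][\mathbb I+\rho\otimes\mathbb I+\mathbb I\otimes\rho]$ from \eqref{eq:third-moment}, then Schmidt‑decomposing $O$ across its two $n$-qubit halves.)

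The main obstacle is precisely this last step: organizing the sum so that the $(d+1)$-factors from $\mathcal M_n^{-1}$ are exactly balanced against the $(d+2)^{-1}$ normalizations and the $\delta$-contractions, isolating the $O(1)$ part and forcing it out as $\le 9\Tr(O^2)$, while showing the surviving cross terms are genuinely $O(1/d)$ and controlled by $\|O\|_\infty^2$ rather than by a dimensional factor. In either formulation one must exhibit the large cancellations among the individually $O(d)$-sized terms (as the case $O=\mathbb I$, where the variance is $0$, already makes clear), rather than bound them one by one.
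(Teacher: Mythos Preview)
Your reduction---viewing $\hat\rho_1\otimes\hat\rho_2$ as a single classical shadow of $\rho\otimes\rho$ under the product-Clifford primitive $\mathrm{Cl}(2^n)\times\mathrm{Cl}(2^n)$ and invoking Lemma~\ref{lem:variance}---is a valid and genuinely different entry point from the paper, which computes $\mathbb{E}[\mathrm{tr}(O\hat\rho_1\otimes\hat\rho_2)^2]$ directly without passing through a shadow norm.

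But the proof has a real gap precisely where you flag it, and your proposed organization does not close it. The paper obtains the constant $9$ not by Pauli bookkeeping but by a structural decomposition: it writes $O$ as $O_0^{(1,2)}$ plus lower-order pieces, where $O_0^{(1,2)}$ has \emph{both partial traces} equal to zero, and then expands each factor $P_{\vee^3}$ as $\tfrac{1}{3!}\sum_{\pi\in S_3}W_\pi$. The vanishing partial traces annihilate every permutation pair in which either $\pi$ or $\tau$ fixes index $1$ or $2$, leaving only the flip and the two $3$-cycles in each factor---hence exactly $3\times 3=9$ surviving contributions, each bounded by $\|O_0^{(1,2)}\|_2^2$. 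The $O(1/d)$ correction then arises from cross terms with the lower-order pieces $O_0^{(1)},O_0^{(2)}$, which are handled by the linear theory. Your Pauli-basis expansion does not expose this mechanism: the off-diagonal anticommutator terms $\{\sigma_P,\sigma_{P'}\}$ carry an individual $1/(d{+}2)$ suppression, but there are order-$d^4$ commuting pairs $(P,P')$, and you have not shown that the $\alpha$-weighted sum collapses to something of size $\|O\|_\infty^2/d$ rather than picking up a dimensional factor. The sentence ``these reassemble into $\tfrac{\mathrm{const}}{d}$ times the off-diagonal part of $O^2$'' is exactly the step that needs proof; to carry it out with the stated constants you would essentially be forced back to the partial-trace decomposition and permutation count the paper uses.
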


\begin{proof}[Proof of Proposition~\ref{lem:unbiasedtwoC}]
The variance of $\Tr(O \rho \otimes \hat\rho_1)$ is equivalent to the variance of $\Tr (\tilde{O}_\rho \hat{\rho})$, where $\tilde{O}_\rho = \mathrm{tr}_1 \left(\rho \otimes \mathbb{I} O \right)$ describes a linear function. According to Proposition~\ref{prop:Clifford-variance}, this variance term obeys
\begin{equation}
\mathrm{Var} \left[ \mathrm{tr} \left( O \rho \otimes \hat{\rho} \right)\right]
=\mathrm{Var} \left[ \mathrm{tr} \left( \tilde{O}_\rho \hat{\rho}_1 \right) \right]
\leq 3 \mathrm{tr} \left( \tilde{O}_\rho^2 \right)
= \mathrm{tr} \left( \mathrm{tr}_1 \left( \rho \otimes \mathbb{I} O \right)^2 \right)
\leq 3 \mathrm{tr}(O^2),
\end{equation}
because $\mathrm{tr}(\rho)=1$ and $\mathrm{tr}(\rho^2) \leq 1$.
A similar argument takes care of the second variance contribution $\mathrm{Var}\left[ \mathrm{tr} \left( O \hat{\rho}_1 \otimes \rho \right) \right]$. Lemma~\ref{lem:twoC} supplies a bound for the square of the final contribution. By assumption $\sqrt{\mathrm{tr}(O^2)} \leq \mathrm{tr}(O^2)$ and the claim follows.
\end{proof}

The remainder of this section is devoted to proving Lemma~\ref{lem:twoC}. Unfortunately, there does not seem to be a direct way to relate this task to variance bounds for linear feature prediction.
Instead, we base our analysis on the 3-design property \eqref{eq:third-moment} of Clifford circuits and a reformulation of this feature in terms of permutation operators.
This strategy is inspired by the approach developed in \cite{brandao2019complexity}, but conceptually and technically somewhat simpler. We believe that similar arguments extend to variances associated with higher order polynomials, but do refrain from a detailed analysis. Instead, we carefully outline the main ideas and leave a rigorous extension to future work.

\paragraph{Problem statement and reformulation:}

We will ignore symmetrization (which can only make the variance smaller) and focus on bounding the variance of $\mathrm{tr} \left(O \hat{\rho}_1 \otimes \hat{\rho}_2\right)$, where each $\hat{\rho}_i$ is an independent classical shadow.
To simplify notation, we set $d=2^n$ and
define the following traceless variants of $O$:
\begin{align}
O_0^{(1)}=& \mathrm{tr}_2 (O) - \frac{\mathrm{tr} \left(O\right)}{d}\mathbb{I}, \quad \text{and} \quad
O_0^{(2)}= \mathrm{tr}_1 (O) - \frac{\mathrm{tr}(O)}{d}\mathbb{I}, \quad \text{as well as} \nonumber\\
O_0^{(1,2)}=& O- \mathrm{tr}_2 (O) \otimes \frac{\mathbb{I}}{d}-\frac{\mathbb{I}}{d}\otimes \mathrm{tr}_1 (O) + \mathrm{tr}(O) \frac{\mathbb{I}}{d}\otimes \frac{\mathbb{I}}{d}.
\end{align}
Here, $\mathrm{tr}_a (O)$ with $a=1,2$ denotes the partial trace over the first and second system, respectively.
All three operators are traceless (recall $\mathrm{tr} \left( \mathrm{tr}_a (O)\right)=\mathrm{tr}(O)$) and the final (bipartite) operator has the additional property that both partial traces vanish identically: $\mathrm{tr}_a \big( O_0^{(1,2)}\big)=0$.

 Proposition~\ref{prop:Clifford-variance}
 asserts $\hat{\rho}_a = (d+1)U_a^\dagger |\hat{b}_a \rangle \! \langle \hat{b}_a| U_a-\mathbb{I}$, where each $U_a \in \mathrm{Cl}(d)$ is a random Clifford unitary and $\hat{b}_a\in \left\{0,1\right\}^n$ is the outcome of a computational basis measurement.
 These explicit formulas allow us to
decompose the expression of interest in the following fashion:
\begin{align}
\mathrm{tr} \left( O \hat{\rho}_1 \otimes \hat{\rho}_2 \right)
=& (d+1)^2 \mathrm{tr} \left( O_0^{(1,2)} U_1^\dagger| \hat{b}_1 \rangle \! \langle \hat{b}_1|U_1 \otimes U_2^\dagger| \hat{b}_1 \rangle \! \langle \hat{b}_2|U_2 \right) + \frac{\mathrm{tr}(O)^2}{d^2} \nonumber \\
+& \frac{d+1}{d} \mathrm{tr} \left( O_0^{(1)} U_1^\dagger |\hat{b}_1 \rangle \! \langle \hat{b}_1|U_1\right)
+ \frac{d+1}{d} \mathrm{tr} \left( O_0^{(2)} U_2^\dagger |\hat{b}_2 \rangle \! \langle \hat{b}_2|U_2\right). \label{eq:master-formula}
\end{align}
The variance corresponds to the expected square of this expression.
The second term is constant and does not contribute. We analyze the remaining terms on a case-by case basis.

\paragraph{Linear terms:}

The third and fourth terms in Eq.~\eqref{eq:master-formula} are linear feature functions in one classical shadow only.
Their (squared) contribution to the overall variance is characterized by Proposition~\ref{prop:Clifford-variance}:
\begin{equation}
\mathbb{E}
\left[ \left(\frac{d+1}{d} \mathrm{tr} \left( O_0^{(a)} U_a^\dagger |\hat{b}_a \rangle \! \langle \hat{b}_a|U_a\right) \right)^2\right]
\leq \frac{3}{d^2} \left\|O_0^{(a)}\right\|_2^2
\quad \text{for $a=1,2$.}
\end{equation}
Both bounds can be related to the Hilbert-Schmidt norm (squared) of the original observable:
\begin{equation}
\frac{3}{d^2} \left\|O_0^{(a)}\right\|_2^2
\leq \frac{3}{d^2}\left\| \mathrm{tr}_a (O) \right\|_2^2 \leq 3 \|O\|_2^2 = 3\mathrm{tr}\left( O^2 \right). \label{eq:linear-bound}
\end{equation}

\paragraph*{Leading-order term:}

We need to bound $\mathbb{E} \big[ (d+1)^4 \mathrm{tr} \left(O_0^{(1,2)} U_1^\dagger |\hat{b}_1 \rangle \! \langle \hat{b}_1|U_1 \otimes U_2^\dagger |\hat{b}_2 \rangle \! \langle \hat{b}_2| U_2 \right)^2 \big]$,
where $O_0^{(1,2)}$ has the special property that both partial traces vanish identically: $\mathrm{tr}_a \big( O_0^{(1,2)}\big)=0$ for $a=1,2$. Moreover, the Hilbert-Schmidt norm (squared) of this operator factorizes nicely:
\begin{equation}
\left\| O_0^{(1,2)}\right\|_2^2
= \|O \|_2^2 - \frac{1}{d}\big\| O_0^{(1)}\big\|_2^2-\big\| O_0^{(2)}\big\|_2^2- \frac{\mathrm{tr}(O)^2}{d^2}.
\end{equation}
Not only is this expression bounded by the original Hilbert-Schmidt norm $\|O\|_2^2$.
The norms of partial traces also feature explicitly with a minus sign. This will allow us to fully counter-balance the variance contributions \eqref{eq:linear-bound} from the linear terms.

Next, we use the 3-design property \eqref{eq:k-design} of Clifford circuits in dimension $d=2^n$:
\begin{equation}
\mathbb{E}_{U_a \sim \mathrm{Cl}(d)} \left[ \left( U_a^\dagger |b_a \rangle \! \langle b_a| U_a \right)^{\otimes 3}\right]
= \binom{d+2}{3}^{-1} P_{\vee^3},
\end{equation}
where $P_{\vee^3}$ is the projector onto the totally symmetric subspace of $\mathbb{C}^d \otimes \mathbb{C}^d \otimes \mathbb{C}^d$.
This formula implies
\begin{equation}
 \mathbb{E} \left[ (d+1)^4 \mathrm{tr} \left(O_0^{(1,2)} U_1^\dagger |\hat{b}_1 \rangle \! \langle \hat{b}_1|U_1 \otimes U_2^\dagger |\hat{b}_2 \rangle \! \langle \hat{b}_2| U_2 \right)^2 \right]
\leq
\mathrm{tr} \left( O_0^{(1,2)}\otimes O_0^{(1,2)}\otimes \rho \otimes \rho \; P_{\vee^3}^{(\mathrm{odd})} \otimes P_{\vee^3}^{(\mathrm{even})} \right),
\end{equation}
where the superscripts ``even'' and ``odd'' indicate on which subset of tensor factors the projectors act.

Next, we exploit the fact that symmetric projectors can be decomposed into permutation operators: $(3!) P_{\vee^3}=\sum_{\pi \in S_3}W_\pi$, where $S_3$ is the group of all six permutations of three elements and the permutation operators act like $W_\pi |\psi_1 \rangle \otimes | \psi_2 \rangle \otimes | \psi_3 \rangle=|\psi_{\pi^{-1}(1)}\rangle \otimes | \psi_{\pi^{-1}(2)} \rangle \otimes | \psi_{\pi^{-1}(3)} \rangle$:
\begin{equation}
\mathrm{tr} \left( O_0^{(1,2)}\otimes O_0^{(1,2)}\otimes \rho \otimes \rho \; P_{\vee^3}^{(\mathrm{odd})} \otimes P_{\vee^3}^{(\mathrm{even})} \right)
= \sum_{\pi,\tau \in S_3}\mathrm{tr} \left( O_0^{(1,2)}\otimes O_0^{(1,2)}\otimes \rho \otimes \rho \; W_{\pi}^{(\mathrm{odd})} \otimes W_{\tau}^{(\mathrm{even})} \right).
\end{equation}
The specific structure of $O_0^{(1,2)}$ implies that several contributions must vanish.
Permutations that have either $1$ or $2$ as a fix-point lead to a partial trace of $O_0^{(1,2)}$ that evaluates to zero.
There are only three permutations that do not have such fix-points: The flip $(1,2,3)\mapsto (2,1,3)$ and the two cycles $(1,2,3)\mapsto (3,1,2)$, $(1,2,3)\mapsto (2,3,1)$.
There are in total $9=3^2$ potential combinations of such permutations. Each of them results in a trace expression that can be upper-bounded by Hilbert-Schmidt norms. For instance the pair flip and flip produces
\begin{equation}
\mathrm{tr} \left( O_0^{(1,2)} O_0^{(1,2)}\right) \mathrm{tr}(\rho)^2 = \left\| O_0^{(1,2)}\right\|_2^2.
\end{equation}
All other 8 contributions can also be bounded by this expression and we conclude
\begin{equation}
 \mathbb{E} \left[ (d+1)^4 \mathrm{tr} \left(O_0^{(1,2)} U_1^\dagger |\hat{b}_1 \rangle \! \langle \hat{b}_1|U_1 \otimes U_2^\dagger |\hat{b}_2 \rangle \! \langle \hat{b}_2| U_2 \right)^2 \right]
\leq 9 \left\| O_0^{(1,2)}\right\|_2^2
\label{eq:leading-term}
\end{equation}

\paragraph{Bounds on cross-terms:}

Cross-terms are considerably easier to evaluate, because one (or both) random matrices only feature linearly.
We can use $\mathbb{E} \left[ U_a^\dagger| \hat{b}_a \rangle \! \langle \hat{b}_a| U_a \right] = \mathcal{D}_{1/(d+1)}(\rho) = \frac{\rho+\mathbb{I}}{d+1}$ to effectively get rid of the linear contribution.
For instance,
\begin{equation}
\left( \frac{d+1}{d}\right)^2
\mathbb{E} \left[ \prod_{a=1,2} \mathrm{tr} \left( O_0^{(1)}U_a^\dagger |\hat{b}_a \rangle \! \langle \hat{b}_a| U_a \right)\right]=  \frac{1}{d^2}\mathrm{tr} \left( O_0^{(1)} \rho \right) \mathrm{tr} \left( O_0^{(2)} \rho \right)
\leq  \frac{1}{2d^2}
\left( \|O_0^{(1)}\|_\infty^2 + \|O_0^{(2)}\|_\infty^2 \right),
\end{equation}
where $\| \cdot \|_\infty$ denotes the operator norm.
Cross terms that do feature the leading order term require slightly more work, but can be addressed in a similar fashion. Using linearity in one snapshot reduces the expression to an expectation of a quadratic function in one snapshot only. The remaining computation is similar to the proof of Proposition~\ref{prop:Clifford-variance} and yields
\begin{equation}
\frac{(d+1)^3}{d}\mathbb{E} \left[  \mathrm{tr} \left( O_0^{(1,2)} U_1^\dagger |\hat{b}_1 \rangle \! \langle \hat{b}_1| U_1 \otimes U_2^\dagger |\hat{b}_2 \rangle \! \langle \hat{b}_2| U_2 \right)  \mathrm{tr} \left( O_0^{(a)} U_a^\dagger |\hat{b}_a \rangle \! \langle \hat{b}_a| U_a \right) \right]
\leq \frac{3}{2d^2}\left( \| \tilde{O}^{(a)}_\rho \|_2^2 + \|O_0^{(a)} \|_2^2 \right),
\end{equation}
for $a=1,2$, as well as $\tilde{O}^{(1)}_\rho = \mathrm{tr}_2 \left( \mathbb{I} \otimes \rho O \right)$ and  $\tilde{O}^{(2)}_\rho=\mathrm{tr}_1 \left( \rho \otimes \mathbb{I}O \right)$, respectively.

\paragraph{Full variance bound:}

We are now ready to combine all individual bounds to control the full variance:
\begin{align}
\mathrm{Var} \left[ \hat{o} \right]
\leq & \mathbb{E} \left( (d+1)^2 \mathrm{tr} \left( O_0^{(1,2)} U_1^\dagger |\hat{b}_1 \rangle \! \langle \hat{b}_1| U_1 \otimes U_2^\dagger |\hat{b}_2 \rangle \! \langle \hat{b}_2| U_2 \right)+\sum_{a=1,2} \frac{d+1}{d}\mathrm{tr} \left( O_0^{(a)} U_a^\dagger |\hat{b}_a \rangle \! \langle \hat{b}_a| U_a \right) \right)^2 \nonumber\\
\leq & 9 \|O_0^{(1,2)}\|_2^2 + \frac{6}{2d^2}\left( \| \mathrm{tr}_2 \left(  \mathbb{I} \otimes \rho O \right) \|_2^2 + \|O_0^{(1)}\|_2^2 \right)
+ \frac{6}{2d^2}\left( \| \mathrm{tr}_1 \left(  \rho \otimes \mathbb{I} O \right) \|_2^2 \right) \nonumber\\
+ &\frac{3}{d^2}\|O_0^{(1)}\|_2^2 + \frac{3}{d^2}\|O_0^{(2)}\|_2^2 + \frac{1}{2d^2}\left( \|O_0^{(1)} \|_\infty^2 + \|O_0^{(2)}\|_\infty^2 \right).
\end{align}
Standard norm inequalities, as well as the explicit expression for $\|O_0^{(1,2)}\|_2^2$ allow for counter-balancing some of the sub-leading terms and we conclude
\begin{equation}
\mathrm{Var} \left[ \hat{o} \right]
\leq 9 \|O_0 \|_2^2 + \frac{3}{d^2}\left( \| \mathrm{tr}_2 \left( \mathbb{I} \otimes \rho O \right) \|_2^2 + \| \mathrm{tr}_1 \left( \rho \otimes \mathbb{I} O \right) \|_2^2 \right) \leq 9 \|O_0 \|_2^2 + \frac{6}{d} \|O \|_\infty^2.
\end{equation}

\section{Information-theoretic lower bound with scaling in Hilbert-Schmidt norm}
\label{sec:proofthm2}

Before stating the content of the statement, we need to introduce some additional notation.
In quantum mechanics, the most general notion of a quantum measurement is a POVM (positive operator-valued measure). A $d$-dimensional POVM $F$ consists of a collection $F_1,\ldots,F_N$ of positive semidefinite matrices that sum up to the identity matrix: $\langle x|F_i|x \rangle \geq 0$ for all $x \in \mathbb{C}^d$ and $\sum_i F_i = \mathbb{I}$.
The index $i$ is associated with different potential measurement outcomes and Born's rule asserts $\mathrm{Pr} \left[ i| \rho \right]=\mathrm{tr}(F_i \rho)$ for all $1 \leq i \leq M$ and any $d$-dimensional quantum state $\rho$.
We present a simplified version of the proof by consider the relevant case where $M \leq \exp(2^n / 32)$.
The full proof can be found in \cite{huang2019predicting}.

\subsection{Detailed statement and proof idea}
\begin{theorem}[Detailed restatement of Theorem~\ref{thm:main2} for Hilbert-Schmidt norm]
\label{thm:main2restate}
Fix a sequence of POVMs $F^{(1)},\ldots,F^{(N)}$.
Suppose that given any $M$ features $0 \preceq O_1, O_2, \ldots, O_M \preceq I$ with $\max_i\Big(\norm{O_i}_2^2\Big) \leq B$, there exists a machine (with arbitrary runtime as long as it always terminates) that can use the measurement outcomes of $F^{(1)}, \ldots, F^{(N)}$ on $N$ copies of an unknown $d$-dimensional quantum state $\rho$ to $\epsilon$-accurately predict
$\Tr(O_1 \rho), \ldots, \Tr(O_M \rho)$
with high probability.
Assuming $M \leq \exp(d / 32)$, then necessarily
\begin{equation}
N \geq \Omega \Bigg(\frac{B \log(M)}{\epsilon^2} \Bigg).
\end{equation}
\end{theorem}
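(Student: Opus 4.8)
The plan is to embed a one-way ``index'' communication problem into the prediction procedure and then exploit the fact that single-copy measurements reveal only a bounded amount of information per copy. For the hard instance, assume $B\ge1$ is an integer with $B\le d/4$ (smaller or larger $B$ need only trivial modifications) and let $P_1,\dots,P_M$ be projectors onto independent Haar-random $B$-dimensional subspaces of $\mathbb{C}^d$. Using the hypothesis $M\le\exp(d/32)$ together with a union bound over the $\binom M2$ pairs and standard measure-concentration estimates, one shows that a random configuration satisfies, with positive probability, the following ``spread'' properties simultaneously: (i) $\Tr(P_xP_y)\le B/2$ for all $x\ne y$; (ii) $\big\|\tfrac1M\sum_x P_x-\tfrac Bd\mathbb{I}\big\|_\infty=O(1/d)$; and (iii) $\big\|\tfrac1M\sum_x P_x\otimes P_x-\E_U (UQU^\dagger)^{\otimes2}\big\|_\infty=O(B/d^2)$ for a fixed rank-$B$ projector $Q$ (properties (ii)--(iii) also ask that $M$ be polynomially large, say $M\ge d^2\log d$). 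Fix such a configuration, take the $M$ observables to be $O_i=P_i$ (so $0\preceq O_i\preceq\mathbb{I}$ and $\|O_i\|_2^2=B$), and define the codebook of states $\rho_x=(1-\beta)\tfrac{\mathbb{I}}{d}+\tfrac\beta B P_x$ with $\beta=5\epsilon$ (we take $\epsilon$ below a small absolute constant, the regime of interest). Property (i) gives the decoding margin $\Tr(P_x\rho_x)-\Tr(P_y\rho_x)\ge\beta/2>2\epsilon$ for all $y\ne x$.

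Next I would convert a successful predictor into a decoder. Let $X$ be uniform on $[M]$ and hand the predictor $N$ copies of $\rho_X$, measured by the fixed POVMs $F^{(1)},\dots,F^{(N)}$ with outcomes $o_1,\dots,o_N$. Since each $\rho_X$ is a valid state, with probability at least $2/3$ the machine outputs estimates $\hat o_i$ obeying $|\hat o_i-\Tr(O_i\rho_X)|\le\epsilon$ for all $i$, and then the margin above forces $\hat X:=\arg\max_i\hat o_i$ to equal $X$. As $X\to(o_1,\dots,o_N)\to\hat X$ is a Markov chain, Fano's inequality gives $I(X;o_1,\dots,o_N)\ge I(X;\hat X)\ge\tfrac23\log M-1=\Omega(\log M)$.

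The heart of the argument is a bound on the information acquired per copy. Because the $N$ copies are i.i.d.\ given $X$ and $o_t$ is produced from copy $t$ alone, the outcomes are conditionally independent given $X$, hence $I(X;o_{1:N})\le\sum_{t=1}^N I(X;o_t)$; and by the data-processing inequality, refining a POVM can only increase $I(X;o_t)$, so it suffices to bound it for rank-one POVMs $F^{(t)}=\{w_i\,|v_i\rangle\!\langle v_i|\}$ with $\sum_i w_i|v_i\rangle\!\langle v_i|=\mathbb{I}$ (so $\sum_i w_i=d$, $w_i\le1$). Writing $p_X(i)=w_i\langle v_i|\rho_X|v_i\rangle$, $\bar p(i)=w_i\langle v_i|\bar\rho|v_i\rangle$ with $\bar\rho=\E_X\rho_X\succeq\tfrac{1-\beta}{d}\mathbb{I}$, one has $I(X;o_t)\le\E_X\big[\chi^2(p_X\,\|\,\bar p)\big]=\sum_i\Var_X(p_X(i))/\bar p(i)$. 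Now $\Var_X(p_X(i))=w_i^2(\beta/B)^2\Var_X(\langle v_i|P_X|v_i\rangle)$, and a short computation from (ii)--(iii) shows $\Var_X(\langle v|P_X|v\rangle)=O(B/d^2)$ uniformly over unit vectors $v$; using $\bar p(i)\ge w_i(1-\beta)/d$ and $\sum_i w_i=d$, this yields $I(X;o_t)=O(\beta^2/B)=O(\epsilon^2/B)$.

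Chaining the three steps gives $\Omega(\log M)\le I(X;o_{1:N})\le N\cdot O(\epsilon^2/B)$, i.e.\ $N=\Omega(B\log M/\epsilon^2)$. I expect the main obstacle to be the per-copy information bound, and within it the uniform estimate $\Var_X(\langle v|P_X|v\rangle)=O(B/d^2)$ for \emph{all} pure states $v$ simultaneously: this is exactly where the spread of the random-projector ensemble is used, and it explains why the argument wants $M$ both at most $\exp(d/32)$ (so the near-orthogonality union bound closes) and polynomially large in $d$ (if $M$ is tiny, a rank-one measurement can align with a hidden subspace and extract $\Omega(1)$ bits, so the clean $O(\epsilon^2/B)$ per-copy bound is only available once $M$ is large enough that no single measurement can align with an appreciable fraction of the $P_x$). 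The reduction to rank-one POVMs, the conditional-independence chain rule, and the Fano step are routine by comparison.
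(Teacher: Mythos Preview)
Your skeleton---codebook of nearly-orthogonal low-rank projectors, Fano, conditional-independence chain rule, and a $\chi^2$ bound on the per-copy mutual information---matches the paper's proof almost exactly. The codebook, the decoding margin, and the reduction to rank-one POVMs are all fine. The one substantive gap is precisely the step you flag as the ``main obstacle'': the uniform estimate $\Var_X(\langle v|P_X|v\rangle)=O(B/d^2)$ for \emph{all} unit vectors $v$. Your route to it, via the spread properties (ii)--(iii), genuinely requires $M$ to be at least polynomial in $d$; for small $M$ (even $M=2$) a fixed rank-one POVM element can sit inside the range of one $P_x$ and give $\Var_X(\langle v|P_X|v\rangle)=\Theta(1)$, so your per-copy bound collapses and the argument does not establish the theorem in the regime $M\ll d^2$.

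The paper closes this gap with a single additional idea that you are missing: before Bob measures, a third party applies one Haar-random unitary $U$ to \emph{all} $N$ copies, and reveals $U$ to Bob only after the measurements. Because the hypothesis says the machine predicts \emph{any} $M$ features with bounded Hilbert--Schmidt norm, Bob can simply request the rotated features $UO_iU^\dagger$ after learning $U$, so decoding still succeeds. The per-copy bound then becomes $I(X:F^{(t)}\text{ on }U\rho_XU^\dagger\mid U)$, and now the expectation over $U$ supplies the $2$-design averaging that your properties (ii)--(iii) were trying to simulate empirically: one gets $\E_U[p_i^2]/\E_U[p_i]-\E_U[p_i]=O(\beta^2/B)\cdot w_i$ \emph{exactly}, with no dependence on $M$ at all. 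In effect, the paper moves the randomness from the discrete ensemble $\{P_x\}$ (which is too coarse when $M$ is small) to a continuous Haar rotation of the whole codebook, and this is legitimate precisely because the measurements are fixed before the features are announced. Your argument never exploits that order of quantifiers, which is why it stalls for small $M$.
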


It is worthwhile to put this statement into context and discuss consequences, as well as limitations.
Theorem~\ref{thm:main} (Clifford measurements) equips classical shadows with a \emph{universal} convergence guarantee: (order) $\log (M) \max_i \mathrm{tr}(O_i^2)/\epsilon^2$ single-copy measurements suffice to accurately predict \emph{any} collection of $M$ target functions in \emph{any} state.
Theorem~\ref{thm:main2restate} implies that there are cases where this number of measurements
is unavoidable.
This highlights that the sample complexity of feature prediction with classical shadows is optimal in the worst case -- a feature also known as minimax optimality.

Minimax optimality, however, does not rule out potential for further improvement in certain best-case scenarios. Advantageous structure in $\rho$ or the $O_i$'s (or both) can facilitate the design of more efficient prediction techniques.
Prominent examples include matrix product state tomography (MPST) \cite{cramer2010efficient,lanyon2017efficient} and neural network tomography (NNQST) \cite{carrasquilla2019reconstructing}.
Such tailored approaches, however, hinge on additional assumptions
about the states to be measured or the properties to be predicted.\footnote{Although tractable in theory, MPST becomes prohibitively expensive if $\rho$ is not well-approximated by a MPS with small bond dimension. Likewise, NNQST seems to struggle to identify quantum states with intricate combinatorial structure, such as toric code ground states. We refer to the other supplementary sections for numerical (Supplementary Section \ref{sec:toricexp}) and theoretical (Supplementary Section \ref{sec:ml-bottleneck}) support of this claim.}

Finally, we emphasize that Theorem~\ref{thm:main2} only applies to single-copy measurements. Another way to bypass this lower bound is to use joint quantum measurements that act on all copies of the quantum state $\rho$ simultaneously.
Although very challenging to implement, such procedures can get by with substantially fewer state copies while still being universal.
Shadow tomography \cite{aaronson2018shadow,aaronson2019gentle} is a prominent example.

\paragraph*{Proof idea:} We adapt a versatile proof technique for establishing information-theoretic lower bounds on tomographic procedures that is originally due to Flammia \textit{et al.}~\cite{flammia2012quantum}; see also  \cite{haah2017sample,roth2019recovering} for adaptations and refinements. The key idea is to consider a communication task in which Alice chooses a quantum state from among an alphabet of possible states and then sends copies of her chosen state to Bob, who measures all the copies hoping to extract a classical message from Alice.  If we choose Alice's alphabet suitably, then by learning many properties of Alice's state Bob will be able to identify the state, hence decoding Alice's message. Information-theoretical lower bounds on the number of copies Bob needs to decode the message can therefore be translated into lower bounds on how many copies Bob needs to learn the properties.

To be more specific, suppose Alice chooses her state from an ensemble of $M$ possible $n$-qubit signal states $\{\rho_1, \rho_2, \dots \rho_M\}$ and suppose there are $M$ linear operators $\{O_1, O_2, \dots O_M\}$, each with $\text{tr}\left(O_i^2\right)\le B$, such that learning the expectation values of all the operators $\{O_i\}$ up to an additive error $\epsilon$ suffices to determine $\rho_i$ uniquely. Suppose furthermore that if Bob receives $N$ copies of \textit{any} $n$-qubit state, and measures them one at a time, he is able to learn all of the properties $\{O_i\}$ with an additive error no larger than $\epsilon$ with high success probability. This provides Bob with a method for identifying the state $\rho_i$ with high probability. Therefore, if Alice chooses her signal state uniformly at random from among the $M$ possible states,
 by performing the appropriate single-copy measurements Bob can acquire $\log_2 M$ bits of information about Alice's message. A lower bound on how many copies Bob needs to gain $\log_2 M$ bits of information about Alice's state, then, becomes a lower bound on how many copies Bob needs to learn the $M$ properties $\{O_i\}$. To get the best possible lower bound, we choose Alice's signal ensemble $\{\rho_i\}$ so that it is as hard as possible for Bob to distinguish the signals using properties with $\text{tr}\left(O_i^2\right)\le B$.

So far, this lower bound on $N$ would apply even if Bob has complete knowledge of Alice's signal states and the properties he should learn to distinguish them. We can derive a stronger lower bound on $N$ by invoking a powerful feature of classical shadows --- that Bob must make his measurements \textit{before} he finds out which properties he must learn. To obtain this stronger bound, we introduce into the communication scenario a third party, named Loki\footnote{In Norse mythology, Loki is infamous for mischief and trickery. However, not entirely malicious, he often shows up in the nick of time to remedy the dire consequences of his actions.}, who tampers with the signal states. Loki chooses a Haar-random $n$-qubit unitary $U$, and replaces all $N$ copies of Alice's signal state $\rho_i$ by the rotated states $U\rho_iU^\dagger$ before presenting the states to Bob (Loki's mischief).

If Bob knew Loki's unitary $U$, he could modify his measurement procedure to learn the rotated properties $\{UO_iU^\dagger\}$. These rotated properties are just as effective for distinguishing the rotated states as the unrotated properties were effective for distinguishing the unrotated states. However, Loki keeps $U$ secret, so Bob is forced to perform his measurements on the rotated states without knowing $U$. Only after Bob's data acquisition phase is completed does Loki confide in Bob and provide him with a full classical description of the unitary he applied earlier (Loki's redemption).  This three-party scenario is illustrated in Supplementary Figure~\ref{fig:commillus}.

Suppose, though, that using the classical shadow based on his measurements, Bob can predict \textit{any} $M$ properties (with additive error bounded by $\epsilon$ and with high success probability), provided that the Hilbert-Schmidt norm is no larger than $\sqrt{B}$ for each property.
Then he is just as well equipped to learn $\{UO_iU^\dagger\}$ as $\{O_i\}$, and can therefore decode Alice's message successfully once Loki reveals $U$. It must be, then, that Bob's measurement outcomes provide $\log_2 M$ bits of information about Alice's prepared state, when $U$ is known. This is the idea we use to derive the stronger upper bound on $N$, and hence prove Theorem \ref{thm:main2restate}.

We emphasize again that quantum feature prediction with classical shadows can cope with Loki's mischief, by merely rotating the features Bob predicts, because the predicted features need not be known at the time Bob measures. The lower bound in Theorem \ref{thm:main2restate} does not apply to the task of learning features that are already known in advance.
We also emphasize again that Theorem \ref{thm:main2restate} assumes that the copies of the state are measured individually. It does not apply to protocols where collective measurements are applied across many copies.

\begin{figure}[t]
    \centering
    \includegraphics[width=1.0\textwidth]{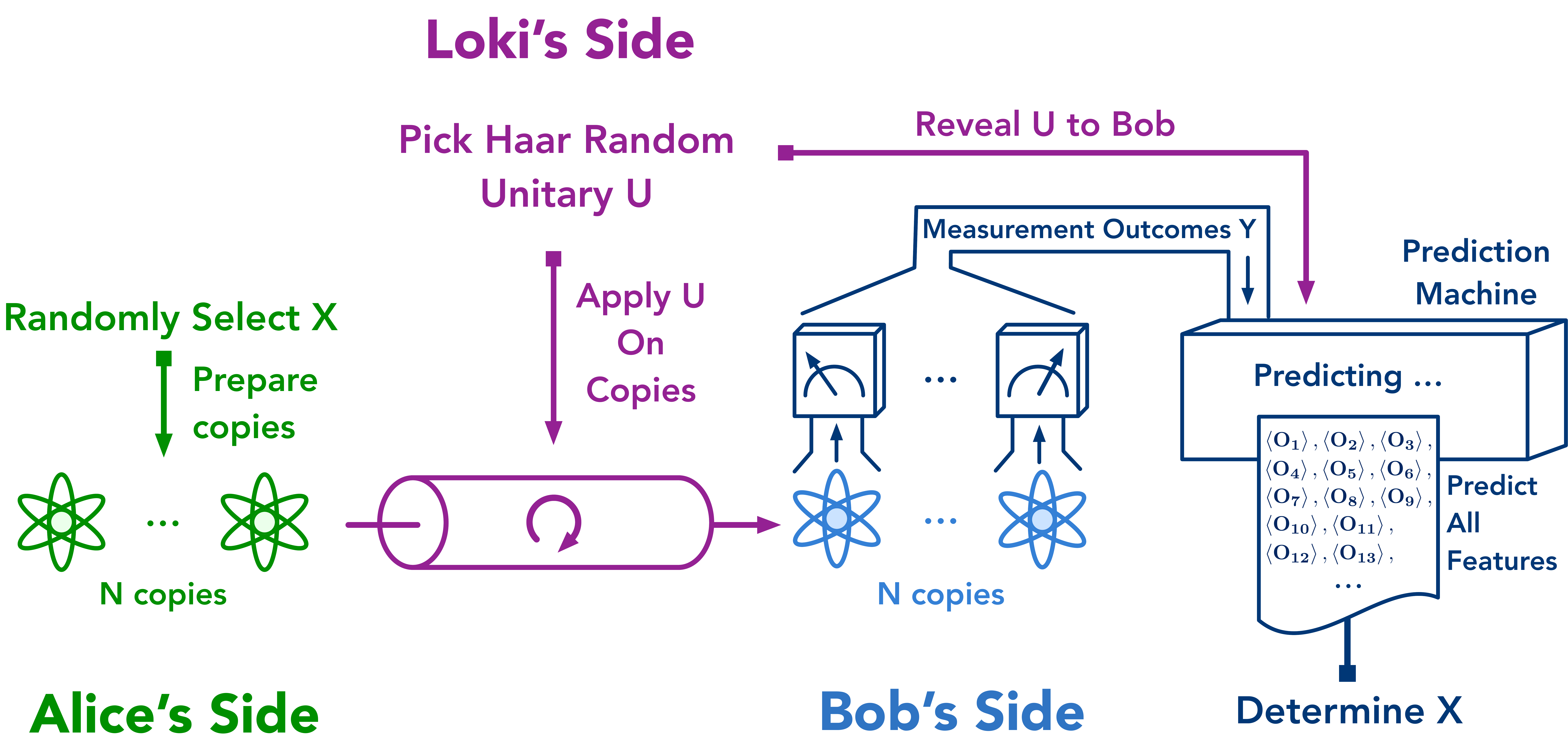}
    \caption{\emph{Illustration of the communication protocol behind Theorem~\ref{thm:main2restate} and Theorem~\ref{thm:main2restateP}.}
    Two parties (Alice and Bob) devise a protocol that allows them to communicate classical bit strings: Alice encodes a bit string $X$ in a quantum state and sends $N$ independent copies of the state to Bob. Bob performs quantum measurements and uses a black box device (e.g.\ classical shadows) to decode Alice's original message. An unpredictable trickster (Loki) tampers with this procedure by randomly rotating Alice's quantum states en route to Bob. Loki reveals his actions only after Bob has completed the measurement stage of his protocol.}

    \label{fig:commillus}
\end{figure}

\subsection{Description of the communication protocol}

We show how Alice can communicate any integer in $\{1, \ldots, M\}$ to Bob.
Alice and Bob first agree on a codebook for encoding any integer selected from $\{1, \ldots, M\}$ in a $d$-dimensional quantum state.
We denote these codebook states by $\rho_1, \ldots, \rho_M$.
Alice and Bob also agree on a set of linear features $O_1, \ldots, O_M$ that satisfies
\begin{equation}\label{eq:signal-condition}
\Tr(O_i \rho_i) \geq \max_{j \neq i} \Tr(O_j \rho_i) + 3\epsilon.
\end{equation}
Therefore, if each feature can be predicted with additive error $\epsilon$,  these features can be used to identify the state $\rho_i$.
The communication protocol between Alice and Bob is now apparent:
\begin{enumerate}
    \item Alice randomly selects an integer $X$ from $\{1, \ldots, M\}$.
    \item Alice prepares $N$ copies of the code-state $\rho_X$ associated to $X$ and sends them to Bob.
    \item Bob performs POVMs $F^{(i)}$ on individual states and receives a string of measurement outcomes $Y$.
    \item Bob inputs $Y$ into the feature prediction machine to estimate $\Tr(O_1 \rho_X), \ldots, \Tr(O_M \rho_X)$.
    \item Bob finds $\overline{X}$ that has the largest $\Tr(O_{\overline{X}} \rho_X)$.
\end{enumerate}

The working assumption is that the feature prediction machine can estimate $\Tr(O_1 \rho_X), \ldots, \Tr(O_M \rho_X)$ within $\epsilon$-error and high success probability.
This in turn ensures that this plain communication protocol is mostly successful, i.e.\
$\overline{X} = X$ with high probability.
In words:  Alice can transmit information to Bob, when no adversary is present.

We now show how they can still communicate safely in the presence of an adversary (Loki) who randomly rotates the transmitted code states en route: $\rho_X \mapsto U \rho_X U^\dagger$ and $U$ is a Haar-random unitary.

This random rotation affects the measurement outcome statistics associated with the fixed POVMs $F^{(1)},\ldots,F^{(N)}$. Each element of $Y=\left[Y^{(1)},\ldots,Y^{(N)}\right]$ is now a random variable that depends on both $X$ and $U$.
After Bob has performed the quantum measurements to obtain $Y$, the adversary confesses to Bob and reveals the random unitary~$U$.
While Bob no longer has any copies of $\rho_X$, he can still incorporate precise knowledge of $U$ by instructing the machine to predict linear features $U O_1 U^\dagger, \ldots, U O_M U^\dagger$, instead of the original $O_1, \ldots, O_M$.
This reverses the effect of the original unitary transformation, because $\Tr(U O_i U^\dagger U \rho_X U^\dagger) = \Tr(O_i \rho_X)$.
This modification renders the original communication protocol stable with respect to Loki's actions. Alice can still send any integer in $\{1, \ldots, M\}$ to Bob with high probability.

\subsection{Information-theoretic analysis}
\label{sec:infotheoanalysis}

The following arguments use properties of Shannon entropy and mutual information which can be found in standard textbooks on information theory, such as \cite{cover2012elements}.

The communication protocol is guaranteed to work with high probability, ensuring that Bob's recovered message $\bar{X}$ equals Alice's input $X$ with high probability. Moreover, we assume that Alice selects her message uniformly at random. Fano's inequality then implies
\begin{equation}
I(X : \overline{X}) = H(X) - H(X | \overline{X}) \geq \Omega(\log(M)),
\end{equation}
where $I(X : \overline{X})$ is the mutual information, and $H(X)$ is the Shannon entropy.
By assumption, Loki chooses the unitary roatation $U$ uniformly at random, regardless of the message $X$. This implies $I(X:U)=0$ and, in turn
\begin{equation}
I(X : \overline{X}) \leq I(X : \overline{X}, U) = I(X : U) + I(X : \overline{X} | U) = I(X : \overline{X} | U).
\end{equation}
For fixed $U$, $\overline{X}$ is the output of the machine that only takes into account the measurement outcomes $Y$. The data processing inequality then yields
\begin{equation}
I(X : Y | U) \geq I(X : \overline{X} | U) \geq I(X : \overline{X}) \geq \Omega(\log(M)).
\end{equation}
Recall that $Y$ is the measurement outcome of the $N$ POVMs $F_1, \ldots, F_N$. We denote the measurement outcome of $F_k$ as $Y_k$.
Because $Y_1, \ldots, Y_N$ are random variables that depend on $X$ and $U$,
\begin{align}
    I(X : Y | U) & = H(Y_1, \ldots, Y_N | U) - H(Y_1, \ldots, Y_N | X, U) \nonumber\\
    & \leq H(Y_1 | U) + \ldots + H(Y_N | U) - H(Y_1, \ldots, Y_N | X, U)\nonumber\\
    & = \sum_{k=1}^N \Big( H(Y_k | U) - H(Y_k | X, U) \Big) = \sum_{k=1}^N I(X: F_k \mbox{ on } U\rho_X U^\dagger | U).
\end{align}
The second to last equality uses the fact that when $X, U$ are fixed, $Y_1, \ldots, Y_N$ are independent.
To obtain the best lower bound, we should choose Alice's signal states $\{\rho_i\}$ such that $I(X: F_k \mbox{ on } U\rho_X U^\dagger | U)$ is as small as possible. In Sec.~\ref{subsec:detailed-decoding}, we will see that, no matter how Bob chooses his measurements $\{F_1, F_2, \dots , F_N\}$, there are signal states satisfying (\ref{eq:signal-condition}) such that
\begin{equation}
\label{eq:codestates}
I(X: F_k \mbox{ on } U\rho_X U^\dagger | U) \leq \frac{36 \epsilon^2}{B}, \forall k.
\end{equation}
Assuming that this relation holds, we have established a connection between $M$ and $N$: $\Omega(\log(M)) \leq I(X: Y | U) \leq 36 N \epsilon^2 / B$ and, therefore,
$N \geq \Omega\Big(B\log(M) / \epsilon^2\Big)$.
This establishes the claim in Theorem~\ref{thm:main2restate}.

\subsection{Detailed construction of quantum encoding and linear prediction decoding}
\label{subsec:detailed-decoding}

We now construct a codebook $\rho_1, \ldots, \rho_M$ and linear features $0 \preceq O_1, O_2, \ldots, O_M \preceq \mathbb{I}$ with $\max_i\norm{O_i}_2^2 \leq B$ that obey two key properties:
\begin{enumerate}
    \item the code states $\rho_1,\ldots,\rho_M$ obey the
    requirement displayed in Eq.~\eqref{eq:codestates}.
    \item the linear features $O_1,\ldots,O_M$ are capable of identifying a unique code state:
    \begin{equation}
    \label{eq:decoding}
    \Tr(O_i \rho_i) \geq \max_{j \neq i} \Tr(O_j \rho_i) + 3\epsilon \quad \textrm{for all} \quad 1 \leq i \leq M.
    \end{equation}
\end{enumerate}
The second condition requires each $\rho_i$ to be distinguishable from $\rho_1, \ldots, \rho_M$ via linear features $O_i$.
The first condition, on the contrary, requires $\rho_X$ to convey as little information about $X$ as possible.
The general idea would then be to create distinguishable quantum states that are, at the same time, very similar to each other.

In order to achieve these two goals,
we choose $M$ rank-$B/4$ subspace projectors $\Pi_1, \ldots, \Pi_M$ that obey $\Tr(\Pi_i \Pi_j) / r < 1 / 2$ for all $i \neq j$.
The probabilistic method asserts that such a projector configuration exists; see
Lemma~\ref{lem:probabexist} below. Now, we set
\begin{equation}
\rho_i = (1 - 3\epsilon) \frac{\mathbb{I}}{d} + 3 \epsilon \frac{4 \Pi_i}{B}, \quad \textrm{and} \quad  O_i = 2 \Pi_i, \quad \textrm{for all} \quad 1 \leq i \leq M.
\end{equation}
It is easy to check that this construction meets the requirement displayed in Eq.~\eqref{eq:decoding}.
The other condition -- Eq.~\eqref{eq:codestates} is verified in
Lemma~\ref{lem:randbound} below.

\begin{lemma}
\label{lem:probabexist}
If $M \leq \exp(rd / 32)$ and $d \geq 4r$, then $\exists M$ rank-$r$ subspace projectors $\Pi_1, \ldots, \Pi_M$ such that
\begin{equation}
\Tr(\Pi_i \Pi_j) / r < 1 / 2, \forall i \neq j.
\end{equation}
\end{lemma}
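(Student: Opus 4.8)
The plan is to use the probabilistic method: draw the $M$ projectors independently at random and show that a union bound over all pairs leaves strictly positive probability that no pair overlaps too much. Fix a reference rank-$r$ projector $\Pi_0$ and set $\Pi_i = U_i \Pi_0 U_i^\dagger$ for independent Haar-random $U_1,\dots,U_M \in U(d)$. For a fixed pair $i \neq j$, invariance of the Haar measure gives $\Tr(\Pi_i \Pi_j) \overset{d}{=} \Tr(\Pi_0 W \Pi_0 W^\dagger)$ with $W = U_i^\dagger U_j$ a single Haar-random unitary; since $\Pi_0^2 = \Pi_0$ this equals $g(W) := \norm{\Pi_0 W \Pi_0}_2^2$, the squared Hilbert–Schmidt norm of the top-left $r \times r$ block of $W$. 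So it suffices to show that $g(W)$ is very unlikely to exceed $r/2$, and then take a union bound over the $\binom{M}{2}$ pairs.

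First I would compute the mean. The standard first-moment identity $\E_W[W \Pi_0 W^\dagger] = (\Tr \Pi_0 / d)\,\mathbb{I} = (r/d)\,\mathbb{I}$ together with linearity of the trace gives $\E_W[g(W)] = \Tr\!\big(\Pi_0 (r/d)\mathbb{I}\big) = r^2/d \leq r/4$, where the inequality uses the hypothesis $d \geq 4r$. Consequently the bad event $\{g(W) \geq r/2\}$ is contained in the one-sided deviation event $\{g(W) - \E g \geq r/4\}$, so I only need a sub-Gaussian tail with a gap of order $r$.

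Next I would invoke concentration of measure on the unitary group (Levy's lemma). The map $W \mapsto g(W) = \Tr(\Pi_0 W \Pi_0 W^\dagger)$ is Lipschitz in the Hilbert–Schmidt metric with constant $O(\sqrt r)$ — from $|g(W) - g(W')| \leq \norm{W\Pi_0 W^\dagger - W'\Pi_0 W'^\dagger}_2 \norm{\Pi_0}_2 \leq 2\sqrt r\,\norm{W - W'}_2$ — so Levy's lemma yields $\mathrm{Pr}[g(W) - \E g \geq t] \leq \exp(-c\,d\,t^2/r)$ for a universal constant $c > 0$. Taking $t = r/4$ gives $\mathrm{Pr}[\Tr(\Pi_i\Pi_j) \geq r/2] \leq \exp(-c' d r)$. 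A union bound over the $\binom{M}{2} < M^2$ pairs then shows that, whenever $M \leq \exp(rd/32)$ and the universal constant $c'$ is at least what the Lipschitz and curvature bounds supply, the probability that some pair is bad is strictly below $1$; hence a configuration with $\Tr(\Pi_i\Pi_j)/r < 1/2$ for all $i\neq j$ exists.

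The main obstacle is purely quantitative: making the exponent $c'dr$ genuinely dominate $\log\binom{M}{2} \le rd/16$ requires carefully tracking (i) the Lipschitz constant of $g$ with respect to the geodesic metric on $U(d)$ and (ii) the sharp constant in the sub-Gaussian concentration inequality for $U(d)$. If one prefers to avoid extracting explicit Levy constants, an equivalent route is to estimate the centered moments $\E[(g(W) - \E g)^k]$ directly via Weingarten calculus — using $g(W) = \norm{\Pi_0 W \Pi_0}_2^2$, whose entries behave like (truncated) i.i.d.\ complex Gaussians of variance $1/d$, so that $g(W)$ concentrates like a $\chi^2_{2r^2}/d$ variable — and then optimize a Markov bound over $k$ to obtain the same $\exp(-\Omega(dr))$ tail. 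Either way the structure is identical: the overlap has mean at most $r/4$ with fluctuations exponentially suppressed in $dr$, which is exactly matched by the slack in the hypothesis $M \le \exp(rd/32)$.
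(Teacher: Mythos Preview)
Your approach is structurally identical to the paper's: draw $M$ Haar-random rank-$r$ projectors, bound the tail probability of a single pairwise overlap, and union-bound over the $\binom{M}{2}$ pairs. The only difference is in the concentration step. The paper does not use Levy's lemma; instead it invokes a ready-made large-deviation bound for random subspace overlaps (Lemma~6 of Haah et al.\ \cite{haah2017sample}), namely
\[
\Pr\!\left[\tfrac{1}{r}\Tr(\Pi_i\Pi_j)\ge \tfrac{1}{2}\right]\le \exp\!\Big(-r^2 f\big(\tfrac{d}{2r}-1\big)\Big),\qquad f(z)=z-\log(1+z),
\]
and then uses $f(z)>z/4$ for $z\ge 1$ (valid since $d\ge 4r$) to get the clean tail $\exp(-rd/16)$, which is exactly what the union bound against $M^2\le \exp(rd/16)$ requires.

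Your Levy-based route gives the right form $\exp(-c'rd)$, but---as you correctly flag---the standard concentration constants on $U(d)$ with the Hilbert--Schmidt metric (typically $c'$ in the range $1/24$ to $1/12$ after plugging in your Lipschitz constant $2\sqrt{r}$ and gap $r/4$) fall short of the $1/16$ needed here. So your argument as written proves the lemma with a weaker constant in the hypothesis ($M\le \exp(c\,rd)$ for some smaller $c$), not the stated $1/32$. The Haah et al.\ bound is a Chernoff-type inequality tailored to this exact random variable, which is why it is sharper than the generic Lipschitz-concentration route; your suggested alternative via direct moment computation (treating the $r\times r$ block as approximately Gaussian) is in fact closer in spirit to how that lemma is proved, and would recover the needed constant if carried through.
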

\begin{proof}
We find the subspace projectors using a probabilistic argument. We randomly choose $M$ rank-$r$ subspaces according to the unitarily invariant measure in the Hilbert space, the Grassmannian, and bound the probability that the randomly chosen subspaces do not satisfy the condition.
For a pair of fixed $i\neq j$, we have
\begin{equation}
\Pr\Bigg[\frac{1}{r} \Tr(\Pi_i \Pi_j) \geq \frac{1}{2}\Bigg] \leq \exp\Bigg(- r^2 f\Bigg(\frac{d}{2r} - 1\Bigg)\Bigg) < \exp\Bigg(- \frac{rd}{16}\Bigg),
\end{equation}
where we make use of  \cite[Lemma~6]{haah2017sample} in the first inequality and $f(z) = z - \log(1+z) > z/4$ for all $z \geq 1$ in the second inequality.
A union bound then asserts
\begin{equation}
\Pr\bigg[\exists i \neq j, \frac{1}{r} \Tr(\Pi_i \Pi_j) \geq \frac{1}{2}\bigg] < M^2 \exp\Bigg(- \frac{rd}{16} \Bigg) \leq 1.
\end{equation}
Because the probability is less than one, there must exist $\Pi_1, \ldots, \Pi_{M}$ that satisfy the desired property.
\end{proof}

\begin{lemma}
\label{lem:randbound}
Consider a set of $d$-dimensional quantum states $\{\rho_1, \ldots, \rho_M\}$ such that $\rho_i = (1-\alpha) \frac{\mathbb{I}}{d} + \alpha \frac{\Pi_i}{r}$, where $\Pi_i$ is a rank-$r$ subspace projector. Consider $U$ sampled from Haar measure, and $X$ sampled from $\{1, \ldots, M\}$ uniformly at random.  Consider any POVM measurement $F$. Then the information gain regarding $X$, conditioned on $U$, obtained from the measurement $F$ performed on the state $U\rho_X U^\dagger$ satisfies
\begin{equation}
I(X: F \mbox{ on } U\rho_X U^\dagger | U) \leq \frac{\alpha^2}{r}.
\end{equation}
\end{lemma}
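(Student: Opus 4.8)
\textbf{Proof plan for Lemma~\ref{lem:randbound}.}
The plan is to bound the conditional mutual information by a chi-squared divergence against a \emph{fixed} reference distribution, and then to evaluate the resulting second-moment Haar integral with the standard formula for $\mathbb{E}_U\, U^{\otimes 2}(\cdot)U^{\dagger\otimes 2}$. Write $B$ for the (random) outcome of the POVM $F=\{F_b\}$; conditioned on $X=x$ and $U=u$ the outcome distribution is $p(b\,|\,x,u)=\Tr(F_b\, u\rho_x u^\dagger)$. Since $\rho_x=(1-\alpha)\tfrac{\mathbb{I}}{d}+\alpha\tfrac{\Pi_x}{r}$,
\begin{equation}
p(b\,|\,x,u)=q(b)+\alpha\Big(\tfrac{1}{r}\Tr(F_b\, u\Pi_x u^\dagger)-\tfrac{1}{d}\Tr(F_b)\Big),\qquad q(b):=\tfrac{1}{d}\Tr(F_b),
\end{equation}
where $q$ is precisely the outcome distribution on the maximally mixed state. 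The first step exploits that, for the mutual information, one may compare each conditional distribution to any fixed reference: $I(X:B\,|\,U{=}u)=\min_{q'}\mathbb{E}_X D\big(p(\cdot\,|\,X,u)\,\|\,q'\big)\le \mathbb{E}_X D\big(p(\cdot\,|\,X,u)\,\|\,q\big)$, and I would take the fixed, $u$-independent reference $q$ rather than the Bayes-optimal mixture.

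Next I would bound KL by chi-squared, $D(P\|Q)\le\sum_b (P(b)-Q(b))^2/Q(b)$, average over $X$ and $U$, and substitute the display above. Writing $G_b:=F_b/\Tr(F_b)$ (a density matrix, so $\Tr(G_b^2)\le 1$) for every $b$ with $\Tr(F_b)>0$, and using $\Tr(F_b\, u\Pi_x u^\dagger)=\Tr(F_b)\Tr(G_b\, u\Pi_x u^\dagger)$, one arrives at
\begin{equation}
I(X:B\,|\,U)\;\le\;\alpha^2 d\sum_b \Tr(F_b)\;\mathbb{E}_U\Big(\tfrac{1}{r}\Tr\big(G_b\, U\Pi_X U^\dagger\big)-\tfrac{1}{d}\Big)^2 .
\end{equation}

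Then I would evaluate the inner second moment via $\mathbb{E}_U\, U^{\otimes 2}(\Pi\otimes\Pi)U^{\dagger\otimes 2}=c_{\mathbb{I}}\mathbb{I}+c_S\, S$ with $c_{\mathbb{I}}=\frac{dr^2-r}{d(d^2-1)}$, $c_S=\frac{dr-r^2}{d(d^2-1)}$ (from $\Tr(\Pi\otimes\Pi)=r^2$ and $\Tr(S(\Pi\otimes\Pi))=\Tr(\Pi^2)=r$). Since $\mathbb{E}_U\Tr(G_b U\Pi U^\dagger)=r/d$ and $\mathbb{E}_U\Tr(G_b U\Pi U^\dagger)^2=c_{\mathbb{I}}+c_S\Tr(G_b^2)$,
\begin{equation}
\mathbb{E}_U\Big(\tfrac{1}{r}\Tr(G_b U\Pi U^\dagger)-\tfrac{1}{d}\Big)^2=\tfrac{1}{r^2}\big(c_{\mathbb{I}}+c_S\Tr(G_b^2)-\tfrac{r^2}{d^2}\big)\le\tfrac{1}{r^2}\Big(\tfrac{r(r+1)}{d(d+1)}-\tfrac{r^2}{d^2}\Big)\le\tfrac{1}{rd^2},
\end{equation}
using $c_S\ge 0$, $\Tr(G_b^2)\le1$, and $\frac{1}{d(d+1)}\le\frac{1}{d^2}$. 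Plugging this back and using $\sum_b\Tr(F_b)=\Tr(\mathbb{I})=d$ gives $I(X:B\,|\,U)\le\alpha^2 d\cdot\frac{1}{rd^2}\sum_b\Tr(F_b)=\alpha^2/r$, which is the claim (with $\alpha=3\epsilon$, $r=B/4$ this is exactly Eq.~\eqref{eq:codestates}).

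The only genuinely delicate point is the choice of reference distribution in the first step. Comparing to the Bayes-optimal mixture $\bar p(\cdot\,|\,u)=\mathbb{E}_X p(\cdot\,|\,X,u)$ forces a $1/\bar p(b\,|\,u)$ weight in the chi-squared sum, and the only available lower bound $\bar p(b\,|\,u)\ge(1-\alpha)\Tr(F_b)/d$ costs an undesirable factor $1/(1-\alpha)$; comparing instead to the fixed maximally-mixed reference $q(b)=\Tr(F_b)/d$ removes it cleanly. Everything else is a single, routine second-moment Haar integral together with the elementary estimate $\frac{r(r+1)}{d(d+1)}-\frac{r^2}{d^2}\le\frac{r}{d^2}$.
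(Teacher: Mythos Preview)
Your proof is correct and follows essentially the same approach as the paper: bound the conditional mutual information by a $\chi^2$-divergence against the maximally-mixed reference distribution $q(b)=\Tr(F_b)/d$, then evaluate the Haar second moment of $\Tr(G_b\,U\Pi U^\dagger)$. The only cosmetic differences are that the paper first reduces to rank-1 POVM elements (so that $\Tr(G_b^2)=1$ exactly) and reaches the reference $q$ via the chain-rule step $I(X{:}F\,|\,U)\le I(X,U{:}F)$ rather than your variational characterization of mutual information; your route avoids the rank-1 reduction and is slightly more direct, but the computation and the final bound are identical.
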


\noindent Note that we can obtain the statement (\ref{eq:codestates}) by choosing $\alpha = 3 \epsilon$ and $r = B / 4$, hence completing the proof of Theorem \ref{thm:main2restate}.
\begin{proof}
First of all, let us decompose all POVM elements $\{F_1, \ldots, F_{l}\}$ to rank-$1$ elements $F' = \big\{w_i d \ket{v_i} \bra{v_i}\big\}_{i=1}^{l'}$, where $l \leq l'$. We can perform measurement $F$ by performing measurement with $F'$: when we measure a rank-$1$ element, we return the original POVM element the rank-$1$ element belongs to.
Using data processing inequality, we have $I(X: F \mbox{ on } U\rho_X U^\dagger | U) \leq I(X: \tilde{F} \mbox{ on } U\rho_X U^\dagger | U)$.
From now on, we can consider the POVM $\vec{F}$ to be $\big\{w_i d \ket{v_i} \bra{v_i}\big\}_{i=1}^l$.
Normalization demands
\begin{equation}
\Tr\Big(\sum_i w_i d \ket{v_i} \bra{v_i}\Big) = \Tr(\mathbb{I}) = d \quad \text{and therefore} \quad \sum_i w_i = 1.
\end{equation}
Let us define the probability vector $\vec{p} = \Tr(U\rho_1 U^\dagger \vec{F}),$ so $p_i = w_i d \bra{v_i} U\rho_1 U^\dagger \ket{v_i}.$
And the expression we hope to bound satisfies $I(X: F \mbox{ on } U\rho_X U^\dagger | U) = I(X, U : F \mbox{ on } U\rho_X U^\dagger) - I(U : F \mbox{ on } U\rho_X U^\dagger) \leq I(X, U : F \mbox{ on } U\rho_X U^\dagger)$ using the chain rule and the nonnegativity of mutual information.
We now bound
\begin{align}
    I(X, U : F \mbox{ on } U\rho_X U^\dagger) = & H\Big( \sum_{X=1}^M \frac{1}{M} \E_U [ \Tr(U \rho_X U^\dagger \vec{F}) ] \Big) - \sum_{X=1}^M \frac{1}{M} \E_U \Big[ H\Big(\Tr(U \rho_X U^\dagger \vec{F}) \Big)\Big] \nonumber\\
	= & H\Big( \Tr(\E_U [U \rho_1 U^\dagger] \vec{F}) \Big) - \E_U \Big[H\Big(\Tr(U \rho_1 U^\dagger \vec{F}) \Big)\Big] \nonumber\\
	= & \sum_i - (\E_U p_i) \log(\E_U p_i) + \E_U [p_i \log p_i] \nonumber\\
	\leq & \sum_i - (\E_U p_i) \log(\E_U p_i) + \E_U\Big[p_i \log(\E_U p_i) + p_i \frac{p_i - \E_U p_i}{\E_U p_i}\Big] \nonumber \\
	= & \sum_i \frac{\E_U[p_i^2] - \E_U[p_i]^2}{\E_U[p_i]}.
\end{align}
The second equality uses the fact that $\E_U f(U \rho_X U^\dagger) =\ E_U f(U \rho_1 U^\dagger), \forall X$ which follows from the fact that $\forall X, \exists U_X, \rho_X = U_X \rho_1 U_X^\dagger$.
The inequality uses the fact that $\log(x)$ is concave, so $\log(x) \leq \log(y) + \frac{x-y}{y}$.
Using properties of Haar random unitary $d \times d$ matrices, we conclude
\begin{equation}
\E_U[p_i] = w_i, \,\, \E_U[p_i^2] = w_i^2 \frac{d}{(d+1)} \Bigg(1 + \frac{1}{d} + \alpha^2\Big(\frac{1}{r} - \frac{1}{d}\Big)\Bigg).
\end{equation}
Therefore we have
\begin{equation}
\frac{\E_U[p_i^2] - \E_U[p_i]^2}{\E_U[p_i]} = w_i \alpha^2 \frac{d}{d+1} \Big(\frac{1}{r} - \frac{1}{d}\Big) \leq \frac{w_i \alpha^2}{r},
\end{equation}
which establishes the claim:
\begin{equation}
I(X: F \mbox{ on } U\rho_X U^\dagger | U) \leq \sum_i \frac{\E_U[p_i^2] - \E_U[p_i]^2}{\E_U[p_i]} \leq \frac{\alpha^2}{r}.
\end{equation}
\end{proof}

\section{Information-theoretic bounds on predicting local observables}
\label{sec:proofthm2P}

In Theorem~\ref{thm:main2restate}, we have shown that if a procedure can predict arbitrary observables with $\Tr(O_i^2) \leq B$, then it must use at least $\Omega(B \log(M) / \epsilon^2)$ single-copy measurements (as long as $M$ is not extraordinarily large).
A similar argument
can be used to show that if a procedure can predict arbitrary $k$-local observables, then it requires at least $\Omega(2^k \log(M) / \epsilon^2)$ single-copy measurements (when $M$ is not too large).
This is because if we focus on a $k$-qubit subsystem, then the guarantee allows us to predict arbitrary observables $0 \preceq O_i \preceq \mathbb{I}$ with $\Tr(O_i^2) \leq 2^k$.
In the following theorem, we show a stronger lower bound by focusing on local measurements.
A local measurement is a POVM $\{w_i d \ket{v_i}\!\bra{v_i}\}_i$ where $\ket{v_i} = \ket{v_i^{(1)}} \otimes \ldots \otimes \ket{v_i^{(n)}}$, $\sum_i w_i = 1$, and $d = 2^n$.
This is the same as not performing any entangling gates when implementing the measurement. (Random) Pauli basis measurements are a prominent example.

\begin{theorem}[Detailed restatement of Theorem~\ref{thm:main2} for exponential scaling in locality]
\label{thm:main2restateP}
Fix a sequence of local measurements $F_1,\ldots,F_N$ on $n$-qubit system, i.e., $F_j = \{w_{j, i} d \ket{v_{j, i}}\!\bra{v_{j, i}}\}_i$ where $\ket{v_{j, i}} = \ket{v_{j, i}^{(1)}} \otimes \ldots \otimes \ket{v_{j, i}^{(n)}}$, $\sum_i w_{j, i} = 1$, and $d = 2^n$.
Suppose that given any $M$ $k$-local observables $-\mathbb{I} \preceq O_1, O_2, \ldots, O_M \preceq \mathbb{I}$, there exists a machine (with arbitrary runtime as long as it always terminates) that can use the measurement outcomes of $F_1, \ldots, F_N$ on $N$ copies of an unknown quantum state $\rho$ to $\epsilon$-accurately predict
$\Tr(O_1 \rho), \ldots, \Tr(O_M \rho)$
with high probability.
Assuming $M \leq 3^k {n \choose k}$, then necessarily
\begin{equation}
N \geq \Omega \Bigg(\frac{3^k \log(M)}{\epsilon^2}\Bigg).
\end{equation}
\end{theorem}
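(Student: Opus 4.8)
The plan is to mirror the communication-protocol argument behind Theorem~\ref{thm:main2restate}, with two modifications tailored to locality: Loki's global Haar rotation is replaced by a \emph{tensor-product} Haar rotation $U=U_1\otimes\cdots\otimes U_n$ (so that conjugation preserves $k$-locality and Bob can still predict the rotated features $\{UO_iU^\dagger\}$), and Alice's codebook is built from $k$-local Pauli signals. Concretely, I would index messages by pairs $(S,\mathbf p)$ with $S\subseteq[n]$, $|S|=k$ and $\mathbf p\in\{X,Y,Z\}^{k}$; there are exactly $\binom nk 3^k$ of these, so any $M\le 3^k\binom nk$ fits. To $(S,\mathbf p)$ I associate $O_{S,\mathbf p}=P_{S,\mathbf p}$, the $k$-local Pauli supported on $S$ (clearly $-\mathbb I\preceq O_{S,\mathbf p}\preceq\mathbb I$), and the signal state $\rho_{S,\mathbf p}=2^{-n}\bigl(\mathbb I+3\epsilon\,P_{S,\mathbf p}\bigr)$, a valid density matrix for $\epsilon\le 1/3$. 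Since distinct Paulis are trace-orthogonal, $\Tr(O_{S,\mathbf p}\rho_{S,\mathbf p})=3\epsilon$ while $\Tr(O_{S',\mathbf p'}\rho_{S,\mathbf p})=0$ for $(S',\mathbf p')\ne(S,\mathbf p)$, so these features meet the decoding requirement~\eqref{eq:decoding} with gap $3\epsilon$, and $\epsilon$-accurate predictions of all $M$ of them recover Alice's message.

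With this codebook the protocol and its analysis run as in Theorem~\ref{thm:main2restate}: Alice sends $N$ copies of $\rho_X$; Loki intervenes with $U\rho_XU^\dagger$, reveals $U$ only afterwards, and Bob predicts $\{UO_iU^\dagger\}$ and outputs the index $\overline X$ maximizing the estimate of $\Tr(O_j\rho)$. I would then invoke the standard information chain: Fano gives $I(X:\overline X)=\Omega(\log M)$; $I(X:U)=0$ together with the chain rule and data processing gives $\Omega(\log M)\le I(X:Y\,|\,U)$; and conditional independence of the per-copy outcomes given $(X,U)$ gives $I(X:Y\,|\,U)\le\sum_{j=1}^{N}I\!\left(X:F_j\text{ on }U\rho_XU^\dagger\,\big|\,U\right)$. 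Everything therefore reduces to controlling the information leaked by a single \emph{local} POVM, and this is the step where locality of the measurement is essential and where I expect the real work to lie.

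For this per-measurement bound I would follow the route of Lemma~\ref{lem:randbound}: write a local POVM element as $w_i2^n|v_i\rangle\langle v_i|$ with $|v_i\rangle=\bigotimes_\ell|v_i^{(\ell)}\rangle$, set $p_i=w_i2^n\langle v_i|U\rho_XU^\dagger|v_i\rangle$, and use concavity of $\log$ to get $I(X:F_j\text{ on }U\rho_XU^\dagger\,|\,U)\le\sum_i\Var_U[p_i]/\E_U[p_i]$. The single-qubit Haar moments $\E_{U_\ell}\bigl[U_\ell^\dagger|w\rangle\langle w|U_\ell\bigr]=\tfrac12\mathbb I$ and $\E_{U_\ell}\bigl[(U_\ell^\dagger|w\rangle\langle w|U_\ell)^{\otimes2}\bigr]=\tfrac16(\mathbb I+\mathrm{SWAP})$ then give, after tensoring over qubits, $\E_U[p_i]=w_i$ and
\[
\E_U[p_i^2]=w_i^2\,4^n6^{-n}\sum_{T\subseteq[n]}2^{-|T|}\bigl(1+9\epsilon^2\,\mathbf 1[S\subseteq T]\bigr),
\]
where the unconstrained sum contributes exactly $w_i^2$ and the constrained sum $\sum_{T\supseteq S}2^{-|T|}=2^{-k}(3/2)^{n-k}$ collapses the correction to $w_i^2\cdot 9\epsilon^2\,3^{-k}$. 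Hence $\Var_U[p_i]=9\epsilon^2\,3^{-k}w_i^2$ and $\sum_i\Var_U[p_i]/\E_U[p_i]=9\epsilon^2/3^k$: each local measurement leaks only $O(\epsilon^2/3^k)$ bits. The factor $3^{-k}$ --- rather than the $2^{-k}$ one would get from an entangling measurement on the $k$-qubit support of $P_{S,\mathbf p}$ --- is precisely what the product form of $|v_i\rangle$ buys, and it is why the hypothesis that the $F_j$ are \emph{local} cannot be dropped.

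Combining the pieces gives $\Omega(\log M)\le I(X:Y\,|\,U)\le 9N\epsilon^2/3^k$, i.e.\ $N\ge\Omega\!\bigl(3^k\log M/\epsilon^2\bigr)$, which is the claim. A convenient feature of the construction I would point out is that $\E_U[p_i]=w_i$ and $\Var_U[p_i]$ depend on the message only through $|S|=k$, hence are the same for every message, so no additional randomization (e.g.\ a random permutation of qubits by Loki) is needed to close the information chain. The one genuinely delicate calculation is the two-copy local Haar average and the attendant sum over subsets $T$ that produces the sharp $3^{-k}$; once that identity is in hand, the remainder is a routine adaptation of the proof of Theorem~\ref{thm:main2restate}.
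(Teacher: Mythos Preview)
Your proposal is correct and follows essentially the same route as the paper: the same Pauli codebook $\rho_X=(\mathbb I+3\epsilon P_X)/2^n$ with $O_X=P_X$, the same tensor-product Haar twirl by Loki, the same Fano/chain-rule/data-processing reduction to a per-copy information bound, and the same concave-$\log$ (chi-squared) estimate yielding $9\epsilon^2/3^k$ per measurement. Your subset-expansion of the two-copy local Haar average is more explicit than the paper's (and carries the correct normalization $\tfrac16(\mathbb I+\mathrm{SWAP})$); your remark that $\E_U[p_i]$ and $\Var_U[p_i]$ are $X$-independent is exactly why the paper's joint $(X,U)$ average collapses to the $U$-average you wrote.
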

\begin{proof}
The proof uses a quantum communication protocol between Alice and Bob, with Loki interfering in the middle.
Alice would encode some classical information in the quantum state and send to Bob.
Bob would then use the prediction procedure to decode the encoded classical information.
In the middle, Loki will alter the quantum state by applying a random unitary.
Loki would then reveal the random unitary to Bob after Bob performed quantum measurements on the quantum states.
An illustration of the communication protocol can be found in Supplementary Figure~\ref{fig:commillus}.
The quantum state Alice encodes, the unitary applied by Loki, and the features predicted by Bob is considerably simplified in this result compared to the previous proof.

We define $\rho_i = (\mathbb{I} + 3 \epsilon P_i) / 2^n, \forall i = 1, \ldots, M$.
$P_i$ is the $i$-th Pauli observable acting on $k$ qubits in the $n$-qubit system.
Any ordering of the Pauli observables is fine.
Note that there are at most $3^k {n \choose k}$ such Pauli observables.
This is the reason why we assume $M \leq 3^k {n \choose k}$.
The corresponding linear functions chosen by Bob are $O_i = P_i, \forall i = 1, \ldots, M$.
This guarantees the following relation:
\begin{equation}
\Tr(O_i \rho_j) = 3 \epsilon \delta_{ij}
\quad \text{for all $1 \leq i,j \leq M$,}
\end{equation}
where $\delta_{ij}$ is the Kronecker-delta ($\delta{ij}=1$ if $i=j$ and $\delta_{ij}=0$ otherwise).
The random unitary applied by Loki consists of random single-qubit unitary rotations, i.e.\ $U=U^{(1)} \otimes \ldots \otimes U^{(n)}$.
The complete communication protocol works as follows.
\begin{enumerate}
    \item Alice randomly selects an integer $X$ from $\{1, \ldots, M\}$.
    \item Alice prepares $N$ copies of the code-state $\rho_X$ according associated to $X$ and sends them to Bob.
    \item Loki intercepts the $N$ copies, samples a random unitary $U = U^{(1)} \otimes \ldots \otimes U^{(n)}$, applies $U$ on all copies of $\rho_X \rightarrow U \rho_X U^\dagger$, and sends to Bob.
    \item Bob performs local measurements $F_j$ on individual states and receives a string of measurement outcomes $Y$.
    \item Loki reveals the random unitary $U$ to Bob. Now Bob would have to predict the expectation value of $U O_1 U^\dagger, \ldots, U O_{M} U^\dagger$ instead of the original $O_1, \ldots, O_{M}$.
    \item Since $U O_1 U^\dagger, \ldots, U O_{M} U^\dagger$ are still $k$-local observables, Bob can input $Y$ into the feature prediction machine to estimate $\braket{U O_i U^\dagger}_{U \rho_X U^\dagger} = \Tr( O_i \rho_X ), \forall i = 1, \ldots, M$.
    \item Bob finds $\overline{X} \in \{1, \ldots, M\}$ that has the largest $\Tr( O_{\overline{X}} \rho_X)$.
\end{enumerate}

Because $\Tr( O_i \rho_X )$ are predicted to $\epsilon$ additive error, and $\Tr( O_i \rho_X ) = 3 \epsilon \delta_{i X}$, if the prediction procedure works as guaranteed, Bob's decoded information $\hat{X}$ would be equal to Alice's encoded information $X$ with high probability. Moreover, we assume that Alice selects her message uniformly at random. Fano's inequality then implies
\begin{equation}
I(X : \overline{X}) = H(X) - H(X | \overline{X}) \geq \Omega(\log(M)),
\end{equation}
where $I(X : \overline{X})$ is the mutual information, and $H(X)$ is the Shannon entropy.
By assumption, Loki chooses the random unitary $U$ regardless of the message $X$.
This implies $I(X:U)=0$ and, in turn
\begin{equation}
I(X : \overline{X}) \leq I(X : \overline{X}, U) = I(X : U) + I(X : \overline{X} | U) = I(X : \overline{X} | U).
\end{equation}
For fixed $U$, $\overline{X}$ is the output of the machine that only takes into account the measurement outcomes $Y$. The data processing inequality then implies
\begin{equation}
    \label{eq:lowerboundI}
    I(X : Y | U) \geq I(X : \overline{X} | U) \geq I(X : \overline{X}) \geq \Omega(\log(M)).
\end{equation}
Recall that $Y$ is the measurement outcome of the $N$ POVMs $F_1, \ldots, F_N$. We denote the measurement outcome of $F_j$ as $Y_j$.
Because $Y_1, \ldots, Y_N$ are random variables that depend on $X$ and $U$,
\begin{align}
    I(X : Y | U) & = H(Y_1, \ldots, Y_N | U) - H(Y_1, \ldots, Y_N | X, U) \nonumber\\
    & \leq H(Y_1 | U) + \ldots + H(Y_N | U) - H(Y_1, \ldots, Y_N | X, U) \nonumber\\
    & = \sum_{j=1}^N \Big( H(Y_j | U) - H(Y_j | X, U) \Big) = \sum_{j=1}^N I(X: F_j \mbox{ on } U\rho_X U^\dagger | U). \label{eq:upperboundI}
\end{align}
The second to last equality uses the fact that when $X, U$ are fixed, $Y_1, \ldots, Y_N$ are independent.
This part of the derivation is exactly the same as in Supplementary Section~\ref{sec:infotheoanalysis}.
All that is left is to properly upper bound $I(X: F_j \mbox{ on } U\rho_X U^\dagger | U)$.
First, by definition,
\begin{align}
    I(X: F_j \mbox{ on } U\rho_X U^\dagger | U) & = \E_U \left[ H(F_j \mbox{ on } U\rho_X U^\dagger) - H(X, F_j \mbox{ on } U\rho_X U^\dagger ) \right] \nonumber \\
    & = \E_U \left[ H\left( \E_X \Tr(U \rho_X U^\dagger \vec{F}_j) \right) - \E_X H\left( \Tr(U \rho_X U^\dagger \vec{F}_j) \right) \right] \nonumber \\
    & \leq H\left( \E_X \E_U \Tr(U \rho_X U^\dagger \vec{F}_j)  \right) - \E_X \E_U H\left( \Tr(U \rho_X U^\dagger \vec{F}_j) \right). \label{ineq:infotheory}
\end{align}
The last inequality exploits concavity of the Shannon entropy $H(\cdot)$.
By assumption, the $F_j$'s must be local measurements, i.e. $F_j = \{w_{j, i} d \ket{v_{k, i}}\!\bra{v_{k, i}}\}_i$ where $\ket{v_{k, i}} = \ket{v_{k, i}^{(1)}} \otimes \ldots \otimes \ket{v_{k, i}^{(n)}}$, $\sum_i w_i = 1$, and $d = 2^n$.
We define the probability of measuring $i$-th outcome using POVM $F_j$ as
\begin{equation}
\label{eq:probuni}
p_{j, i} = w_{j, i} d \bra{v_{j, i}} U \rho_X U^\dagger \ket{v_{j, i}},
\end{equation}
which is a random number depending on $X$ and $U$.
Using Equation~\eqref{ineq:infotheory} and the definition of $H(\cdot)$, we have
\begin{align}
    I(X: F_j \mbox{ on } U\rho_X U^\dagger | U) & \leq H\left( \E_X \E_U \Tr(U \rho_X U^\dagger \vec{F}^{(k)})  \right) - \E_X \E_U H\left( \Tr(U \rho_X U^\dagger \vec{F}^{(k)}) \right) \nonumber \\
    & = \sum_i \big(\E_{X, U} [ p_{j, i} \log(p_{j, i}) ] - \E_{X, U} [ p_{j, i}] \log( \E_{X, U} [ p_{j, i}] ) \big) \nonumber\\
    & \leq \sum_i - (\E_{X, U} p_{j, i}) \log(\E_{X, U} p_{j, i}) + \E_{X, U} \Big[p_{j, i} \log(\E_{X, U} p_{j, i}) + p_{j, i} \frac{p_{j, i} - \E_{X, U} p_{j, i}}{\E_{X, U} p_{j, i}}\Big] \nonumber \\
	& = \sum_i \frac{\E_{X, U}[p_{j, i}^2] - \E_{X, U}[p_{j, i}]^2}{\E_{X, U}[p_{j, i}]}. \label{eq:finallocal}
\end{align}
The second inequality uses the fact that $\log(x)$ is concave, so $\log(x) \leq \log(y) + \frac{x-y}{y}$.
We now compute $\E_{X, U}[p_{j, i}]$ and $\E_{X, U}[p_{j, i}^2]$ by using the following relation for single-qubit random unitary:
\begin{equation}
\E_{U^{(j)}} \left[ U^{(j)} \ket{v_{k, i}^{(j)}}\!\bra{v_{k, i}^{(j)}} (U^{(j)})^\dagger \right] = \frac{\mathbb{I}^{(j)}}{2}, \quad \E_{U^{(j)}} \left[ \left(U^{(j)} \ket{v_{k, i}^{(j)}}\!\bra{v_{k, i}^{(j)}} (U^{(j)})^\dagger \right)^{\otimes 2} \right] = \frac{\mathbb{I}^{(j)} \otimes \mathbb{I}^{(j)} + S^{(j)}}{3},
\end{equation}
where $j$ refers to the $j$-th qubit, and $S$ is the two qubit swap operator ($|\psi \rangle \otimes | \phi \rangle = | \phi \rangle \otimes | \psi \rangle$).
Recall the definition of $p_{j, i}$ in Equation~\eqref{eq:probuni}.
Together with the above relation, we have
\begin{align}
\E_{X, U}[p_{j, i}] =& \E_X \left[ w_{j, i} d \Tr\left(\rho_X \frac{\mathbb{I}}{2^n}\right) \right] = \E_X \left[ w_{j, i} 2^n \Tr\left(\frac{\mathbb{I} + 3 \epsilon P_X}{2^n} \frac{\mathbb{I}}{2^n}\right) \right] =  w_{j, i} \quad \text{and} \nonumber \\
\E_{X, U}[p_{j, i}^2] =& \E_X \left[ w_{j, i}^2 d^2 \Tr\left(\rho_X^{\otimes 2} \bigotimes_{j=1}^n \left(\frac{\mathbb{I}^{(j)}\otimes \mathbb{I}^{(j)} + S^{(j)}}{3}\right) \right) \right] = w_{j, i}^2 \left(1 + \frac{9 \epsilon^2}{3^k}\right).
\end{align}
Putting this computation into Inequality~\eqref{eq:finallocal}, we have obtained
\begin{equation}
I(X: F_j \mbox{ on } U\rho_X U^\dagger | U) \leq \sum_i w_{j, i} \frac{9 \epsilon^2}{3^k} = \frac{9 \epsilon^2}{3^k}.
\end{equation}
Combining the above result with Inequality~\eqref{eq:lowerboundI} and \eqref{eq:upperboundI}, we have
\begin{equation}
\frac{9 N \epsilon^2}{3^k} \geq I(X : Y | U) \geq \Omega(\log(M)) \quad \text{which implies} \quad  N \geq \Omega\left(\frac{3^k \log(M)}{\epsilon^2}\right).
\end{equation}
\end{proof}

\bibliographystyle{abbrv}

\end{document}